\newcolumntype{d}[1]{D{.}{.}{#1} }
\newtheorem{theorem}{Theorem}[section]
\newtheorem{lemma}[theorem]{Lemma}
\theoremstyle{definition}
\newcommand{\ol}[1]{\overline{#1}}
\newcommand{\ed}[1]{{#1}}
\DeclareMathOperator*{\rank}{rank}
\title[Bistable Dynamics in Early Stage HIV Infection] 
      {Bistable Dynamics and Hopf Bifurcation in a Refined Model of Early Stage HIV Infection}
\author[Stephen Pankavich, Nathan Neri, and Deborah Shutt]{}
\subjclass{Primary: 37N25, 92B05; Secondary: 34D20, 34C23, 37G15.}
 \keywords{HIV, in-host model, acute phase, dynamics, bistability, Hopf bifurcation.}
 \email{pankavic@mines.edu}
 \email{nneri@mines.edu}
 \email{shuttda@vmi.edu}
\thanks{The first author is supported by NSF grants DMS-1211667, DMS-1551229, and DMS-1614586}
\thanks{$^*$ Corresponding author: Stephen Pankavich}
\begin{document}
\maketitle

\renewcommand{\thefootnote}{\arabic{footnote}}

\centerline{\scshape Stephen Pankavich$^*$ and Nathan Neri}
\medskip
{\footnotesize
 \centerline{Colorado School of Mines}
   \centerline{1500 Illinois St.}
   \centerline{ Golden, CO 80401, USA}
} 

\medskip

   
\centerline{\scshape Deborah Shutt}
\medskip
{\footnotesize
 \centerline{Virginia Military Institute}
   \centerline{319 Letcher Ave.}
   \centerline{ Lexington, VA 24450, USA}   
}

\bigskip

 \centerline{(Communicated by the associate editor name)}

\begin{abstract}
Recent clinical studies have shown that HIV disease pathogenesis can depend strongly on many factors at the time of transmission, including the strength of the initial viral load and the local availability of CD4+ T-cells.  In this article, a new within-host model of HIV infection that incorporates the homeostatic proliferation of T-cells is formulated and analyzed.  Due to the effects of this biological process, the influence of initial conditions on the proliferation of HIV infection is further elucidated. The identifiability of parameters within the model is investigated and a local stability analysis, which displays additional complexity in comparison to previous models, is conducted. The current study extends previous theoretical and computational work on the early stages of the disease and leads to interesting nonlinear dynamics, including a parameter region featuring bistability of infectious and viral clearance equilibria and the appearance of a Hopf bifurcation within biologically relevant parameter regimes.
\end{abstract}

\section{Introduction}
Mathematical modeling of the in-host behavior of viral infections has become an indispensable tool to biological researchers in recent decades. New models have been used to describe the dynamical behavior of various infectious diseases such as HIV, HBV, and influenza, among others. 
Within this field, testing specific hypotheses based on clinical data is often difficult since samples cannot be taken frequently from patients, and viral load detection techniques may lack a necessary level of precision.
Thus, new predictive models play a central role and are continually needed to further our understanding of disease dynamics.
One such mathematical model that has been quite useful, known as the standard model of viral dynamics \cite{NM,Alan,PP}, describes the early stage in-host behavior of HIV infection. 
In general, the time course of this disease typically consists of three distinct phases.  The first, known as the acute stage, is characterized by a rapid fluctuation in both the healthy 
T-cell and virion populations, usually lasting from 2 to 10 weeks \cite{Hern}.  Symptoms during this stage, often described as ``flu-like'', include fever, swollen glands, sore throat, rash, and fatigue.  During the second stage, known as chronic infection, the size of the uninfected T-cell population and viral load maintain relatively constant states, with the latter known as the viral set point.  Without the aid of antiretroviral treatment, this period can persist for up to 10 years but can vary greatly among individual patients \cite{Fauci,Shutt,Pankavich}.  Finally, within the third stage the viral load experiences exponential growth with a correspondingly rapid decrease in the healthy T-cell population.  This leads to the onset of AIDS, defined clinically as a T-cell count of an HIV-positive patient measured below $200$ cells/mm$^3$.


While the standard model of viral dynamics has been extremely successful in reproducing the acute and chronic stages, it has been shown that both the infected and infection-free equilibrium states of healthy T-cells, infected T-cells, and virions induced by this model are globally asymptotically stable.
This property was first investigated analytically in \cite{DS}, and later proved using a Lyapunov function in \cite{Korob}.
The global stability of these states implies that the long time asymptotic behavior of the system depends only upon parameter values in the model.  As a direct consequence, both the equilibrium values and stability properties of these equilibria are independent of initial conditions. 
However, a number of recent clinical studies \cite{Igarashi, Igarashi2, Endo, Liu} have shown that additional factors beyond these parameters 
may have a significant impact on the development or clearance of a persistent infection.
Such factors include the initial viral load and the availability of target CD4 T-cells at the time of transmission.
Because early events during infection may determine both the pathogenic consequences of the virus and its sensitivity to interventions or treatment strategies to combat the disease, Igarashi et al.~\cite{Igarashi} evaluated the effects of inoculum size on the development of the disease in macaques infected with Simian/Human Immunodeficiency Virus (SHIV).
In particular, the results of this study showed that macaques who were administered large intravenous SHIV inocula experienced irreversible CD4+ T lymphocyte depletion and developed clinical disease. 
In contrast, rhesus monkeys receiving $50\%$ tissue culture infective doses (or less) of virus survived the acute stage with reduced but stable levels of CD4+ T lymphocytes and produced antibodies capable of neutralizing SHIV.  
In short, although SHIV induced an extremely rapid and profound depletion of T-cells in all infected rhesus monkeys, the loss of this T-cell subset was not irreversible in animals inoculated with small amounts of virus.  A similar investigation has since been conducted in \cite{Liu} yielding analogous results.

A different study \cite{Igarashi2} has highlighted the importance of the T-cell count at the time of primary viral infection. In particular, during further studies of seventeen rhesus macaques, SHIV infection was found to emerge only in a single monkey whose T-cell count had been markedly depleted by monoclonal antibody (mAb) treatment at the time of primary viral infection, while none of the remaining sixteen monkeys inoculated with SHIV, but not treated with the mAb, developed immunodeficiency. A similar outcome was observed in \cite{Endo}. Hence, the availability of target T-cells at the time of viral transmission can also play a large role in the establishment of a persistent infection, as differing strengths of the susceptible T-cell population may promote or inhibit viral replication.  

Since the equilibria of the standard model of viral dynamics are globally stable in mutually exclusive parameter regimes, these empirical results suggest that, in order to appropriately describe early HIV (or SHIV) dynamics, additional factors must be considered in the model development. Other authors \cite{Ganusov, Bonhoeffer, Elaiw, Adams} have further posited such considerations, asserting the need for a variety of secondary biological characteristics including variability of host susceptibility to initial infection, within-host competition between different viruses for target cells at the initial site of virus replication, and the effects of the innate immune response. These ingredients should play a realistic role in disease pathogenesis and long time dynamics.
In general, previous in-host models do not account for effects arising from the strength of the initial viral load or variations in the T-cell count at the time of transmission, as they describe the tendency to viral infection or clearance based solely upon parameters and not on initial conditions. 

Another element of disease pathogenesis overlooked within the standard model is the homeostatic mechanism that regulates the peripheral T-cell pool.
Recent clinical studies have displayed the importance that homeostasis of the susceptible T-cell population may play during infection \cite{Moreno, Catalfamo}, as the replenishment of target cells provides additional opportunities for HIV infection by freely moving virions.
In the current study, we focus on the homeostatic proliferation of T-cells, i.e. the process by which T-cells in a lymphopenic host divide in the absence of cognate antigen to reconstitute the peripheral lymphoid compartment, which is believed to be driven by the presence of foreign antigens \cite{Mackall, Tanchot}.
A few long-term models of HIV infection \cite{Hadji, Hern, PankLoudon} have incorporated the homeostatic proliferation of the T-cell population within their formulation, but the dynamical effects of this biological mechanism are not well-understood. 
Other authors \cite{Frenchie} have considered an acute stage model incorporating a logistic growth term, depending only upon healthy T-cells, to represent the body's propensity to regulate the T-cell population.  However, in the setting of HIV-induced lymphopenia, it was determined \cite{Catalfamo} that the homeostatic proliferation of CD4+ T-cells is driven primarily in response to the viral load, while naive CD4+ T-cells are also recruited into the proliferating pool due to CD4+ T-cell depletion.  Therefore, in the presence of HIV, such a regulatory mechanism should depend on the strength of the viral load in addition to the size of the T-cell pool.

Based on the aforementioned experimental findings concerning the influence of initial conditions and T-cell homeostasis, we explore a refined model of early stage infection dynamics that incorporates the ability of the immune system to maintain the T-cell count even when the number of such cells is depleted by the presence of the virus.  In accounting for such effects, it will be shown that this model will accurately portray the dependence of equilibria on initial conditions by producing a biologically relevant parameter regime featuring bistability of the infected and uninfected equilibrium states.
Hence, the model proposed herein will account for both of the aforementioned processes.
%
In the next section, the new model of early HIV infection is discussed, and a study of parameter identifiability is conducted.  
In Section $3$, we prove that exactly three states exist - one uninfected and two infected equilibria. In Section $4$, the local stability properties of equilibria are characterized in terms of parameter values.  In particular, we identify a biologically-important region of the parameter space within which both the relevant infected equilibrium and the viral clearance state are locally stable.  This illustrates that the development of a persistent infection will depend crucially on initial conditions, and we further explore the basins of attraction generated by these equilibria.
Finally, we show that the system experiences a Hopf bifurcation that gives rise to oscillatory behavior within a certain parameter regime. To conclude the paper, appendices containing proofs of the aforementioned results are provided.

\section{Model and Parameters}   
The proposed dynamical model couples a nonlinear system of three ordinary differential equations given by
\begin{equation}
\tag{3CM}
\label{Acute}
	\left.
	\begin{aligned}
		\frac{dT}{dt}&=\lambda + \frac{\rho}{C + V}TV - k T V -d_T T \\
		\frac{dI}{dt}&= k T V - d_I I\\
		\frac{dV}{dt}&=p I- d_V V.
	\end{aligned}
	\right \}
\end{equation}
Here, $T(t)$ denotes the population of healthy 
T-cells, $I(t)$ the population of these cells which have been infected, and $V(t)$ the size of the virion population.  The parameter $\lambda$ represents the source of new cells arising from general production, while the healthy cell death rate is denoted by $d_T$.  The interaction, or mass action, term $kTV$, where $k$ is the infection rate, represents the infection of healthy T-cells and the subsequent conversion of these cells to infected lymphocytes, with corresponding death rate $d_I$.  The parameter $p$ is the rate at which new virions are created by the infected cell population, and the clearance rate of free virus particles is given by $d_V$. See Table \ref{List} for a complete list of parameters and variables with representative initial values. 
Additionally, in this model we do not consider distinct compartments within the host since the dynamics of interest take place over many weeks, while transfer between these compartments occurs on the time scale of hours.
 
\begin{table}[t]
\centering
\vspace{0.1in}
\begin{footnotesize}
\begin{tabular}{p{.2cm} p {5.5cm} p {3.5cm} p {4cm}}
\hline
& Quantity & Values / Initial Values  & References\\
\hline
\multicolumn{4}{l}{Original Populations}\\
$T$ & Uninfected CD4$^+$ T-cells & $1000~\mathrm{mm}^{-3}$ &  \cite{Hern}\\
$I$ & Infected CD4$^+$ T-cells & $0~\mathrm{mm}^{-3}$ & \cite{Hern}\\
$V$ & Wild-type HIV virions & $10^{-2}~\mathrm{mm}^{-3}$ & \cite{Hern}\\ 

\\
\multicolumn{3}{l}{Dimensionless Populations ($^*$ omitted in exposition)}\\
\multicolumn{2}{l}{$T^* = \frac{pk}{d_I d_V} T$}  & $T^*_0 = 1.81$ &  \\
\multicolumn{2}{l}{$I^* = \frac{pk}{d_T d_V} I$} & $I^*_0 = 0$ & \\
\multicolumn{2}{l}{$V^* = \frac{k}{d_T} V$} & $V^*_0 = 4.57\times 10^{-5}$ & \\ 
\multicolumn{2}{l}{$t^* = d_T t$} &  & \\

\\
\multicolumn{3}{l}{Original Parameters}\\
$\lambda$ & Rate of supply of T-cells &$10~\mathrm{mm}^{-3}~\mathrm{day}^{-1}$ & \cite{KirschAlan} \\
$\rho$ & Maximum homeostatic growth rate & $0.01~\mathrm{day}^{-1}$ & \cite{Hern}\\
$C$ & Homeostatic half-velocity & $300~\mathrm{copies}~\mathrm{mm}^{-3}$ & \cite{Hern}\\
$k$ & Infection rate &  $4.57\times10^{-5}~\mathrm{mm}^{3}~\mathrm{day}^{-1}$ & \cite{Hern, Hadji}\\
$d_T$ & Death rate of uninfected T-cells & $0.01~\mathrm{day}^{-1}$ & \cite{Hadji, Hern}\\
$d_I$ & Death rate of infected T-cells & $0.40~\mathrm{day}^{-1}$ & \cite{KirschAlan, Hern} \\
$p$ & Rate of viral production & $38~\mathrm{virions}~\mathrm{per~cell}~\mathrm{day}^{-1}$ & \cite{Hern, Hadji}\\
$d_V$ & Clearance rate of free virus & $2.4~\mathrm{day}^{-1}$ & \cite{Hern, Hadji}\\

\\
\multicolumn{3}{l}{Dimensionless Parameters}\\
\multicolumn{2}{l}{$R_0 = \frac{\lambda k p}{d_T d_I d_V}$} & $1.81$ &  \\
\multicolumn{2}{l}{$R_m = \frac{\rho}{Ck}$} & $0.73$ & \\
\multicolumn{2}{l}{$\alpha_1 = \frac{d_I}{d_T}$} & $40$ & \\
\multicolumn{2}{l}{$\alpha_2 = \frac{d_V}{d_T}$} &  $240$ & \\
\multicolumn{2}{l}{$\beta = \frac{d_T}{Ck}$} & $0.73$ & \\
\\
\hline\\
\end{tabular}
\end{footnotesize}
\caption{\footnotesize{Variables and Parameters}
}
\label{List}
\vspace{-0.1in}
\end{table}

The term $\frac{\rho}{C + V}TV$ describes the homeostatic production of T-cells due to the presence of the virus and subsequent decline in healthy T-cells, both of which may vary over the course of infection.
Here, $\rho$ is the maximum growth rate and $C$ is the half-velocity constant of growth.
Note that the behavior of this term is limited by the growth and decay of the virus population.  
In particular, the function $M(V) = \frac{\rho V}{C + V}$ satisfies $M(0) = 0$, $M'(V) > 0$ and $\displaystyle \lim_{V \to \infty} M(V) = \rho$.  Hence, when no virions are present in the system, this so-called Michaelis-Menten term vanishes and the basic dynamics are the same as the standard virus model.
This is consistent with the actual immune response as the body need not further augment the T-cell population in the absence of virions. Contrastingly, as the virus population grows large, the infected host's immune system replenishes the T-cell population so as to balance the effects resulting from its depletion, and this occurs at a growing rate whose maximal impact is $\rho T$.
Regardless of the limited rate of growth within this term, the inclusion of homeostatic proliferation, as we will show, has a profound affect on the dynamics of the system. 

While the new model \eqref{Acute} can be derived from a bottom-up approach merely by adding the homeostatic proliferation term to the standard model of viral dynamics, it also stems directly from a top-down approach. More specifically, \eqref{Acute} can be fully derived from a reduced description of long-term models that were proposed in \cite{Hadji, Hern} to accurately represent all three stages of HIV infection within a host. In particular, a dynamic active subspace decomposition of the twenty-seven dimensional parameter space within the three-stage model of \cite{Hadji}, which features seven different in-host populations, was performed in \cite{PankLoudon}. This decomposition produces a global sensitivity analysis of the parameter space and indicates exactly which parameters are important to the evolution of the model during each of the three distinct phases of disease progression.  \ed{Upon eliminating those parameters (of which there were nineteen) that are found to be negligible throughout the acute stage, a total of four populations - namely the influence of latently-infected T-cells, macrophages, infected macrophages, and the cytotoxic lymphocyte response - completely decouple from the model. Hence,} the reduced system \eqref{Acute} results, providing a more precise description \ed{of the early stage behavior of the disease} than the standard viral dynamics model.
Figure~\ref{repsim} contains a representative simulation of \eqref{Acute} and includes a comparison to the early stage behavior of the long-term model of \cite{Hadji}.
We note that other models of HIV infection \cite{Banks, SmithBanks, Hern} have also incorporated such a Michaelis-Menten term to describe homeostasis, though the current article will contain the first dynamical analysis of such a model.



\subsection{Dimensionless system}
To reduce the size of the parameter space, the original model \eqref{Acute} is recast in dimensionless form.  The resulting system, in which dimensionless populations have been renamed $T^*$, $I^*$, and $V^*$ is
\begin{equation}
\tag{3CM*}
\label{AcuteNd}
	\left.
	\begin{aligned}
		\frac{dT^*}{dt}&=R_0 + \frac{R_m}{1 + \beta V^*}T^*V^* - T^* V^* - T^* \\
		\frac{dI^*}{dt}&= \alpha_1 \left ( T^* V^* - I^* \right )\\
		\frac{dV^*}{dt}&= \alpha_2 (I^* - V^*).
	\end{aligned}
	\right \}
\end{equation}
where
\begin{equation}
\label{R0Rm}
R_0 = \frac{\lambda k p}{d_T d_I d_V}, \quad 
R_m = \frac{\rho}{Ck}, \quad
\alpha_1 = \frac{d_I}{d_T}, \quad
\alpha_2 = \frac{d_V}{d_T}, \quad
\beta = \frac{d_T}{Ck}.
\end{equation}
The values of dimensionless parameters are summarized within Table \ref{List}, and in the future we will remove the $^*$ notation and deal solely with the dimensionless system.
Notice that each new parameter is positive since the original variables are positive.
The complete derivation of \eqref{AcuteNd} from \eqref{Acute} can be found in Appendix~\ref{appA}.
The model \eqref{AcuteNd} contains only five parameters, and each may play a role in the dynamics of the system.  However, we will typically fix the values of $\alpha_1, \alpha_2$, and $\beta$ while considering variations in $R_0$ and $R_m$, which represent the usual basic reproduction number (as in the standard viral model) and a new reproduction number generated by the addition of the Michaelis-Menten term, respectively.

\begin{figure}[t]
\hspace{-0.4in}
\subfigure{\includegraphics[width=7cm]{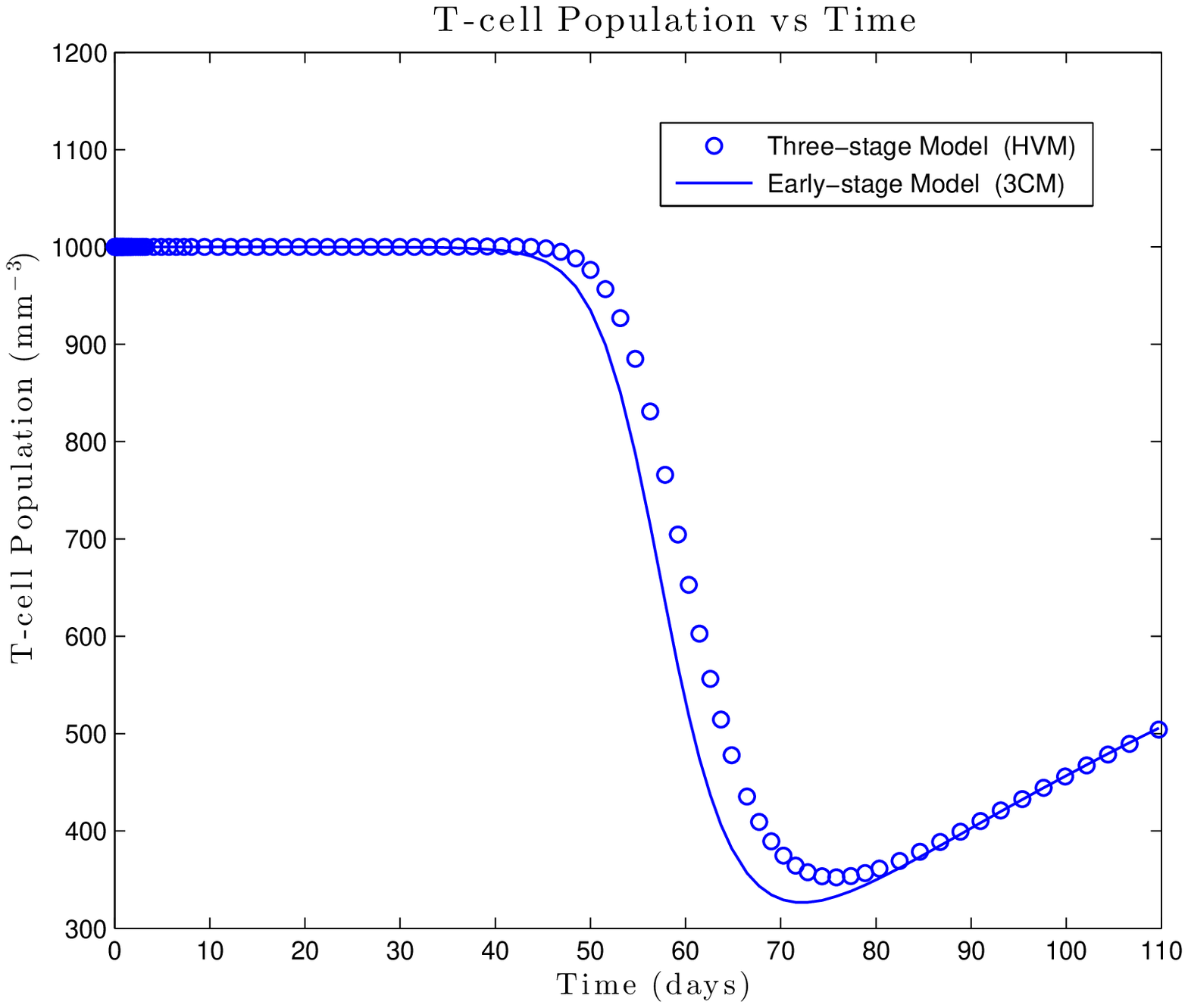}}
\hspace{-0.25in}
\subfigure{\includegraphics[width=7cm]{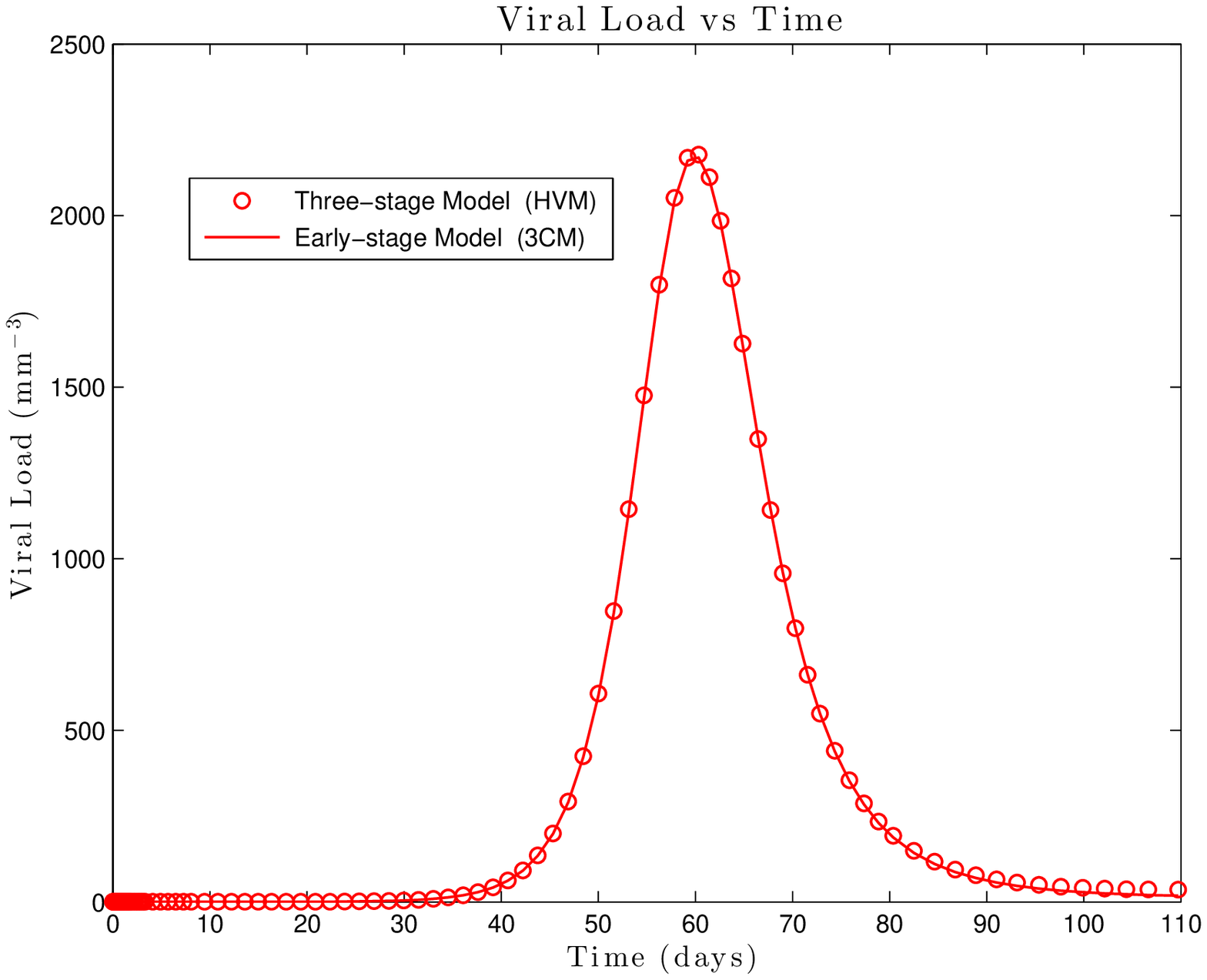}}
\vspace{-0.2in}
\caption{\footnotesize{A representative simulation of \eqref{Acute} with parameter values given in Table \ref{List}, and a comparison with the full three-stage model in \cite{Hadji} - T-cell count (left) and viral load (right).}
}
\label{repsim}
\end{figure}

\subsection{Parameter Identifiability}
With the dimensionless system determined, we study parameter identifiability in \eqref{AcuteNd} as this model can provide useful simulations only if the parameters involved can be discerned from data.  
In particular, we first conduct a test of \eqref{AcuteNd} developed for differing models in \cite{Miao}, \cite{Wu}, and \cite{XiaMoog} to understand the \ed{\emph{structural}} identifiability of parameters. 
\ed{Structural identifiability is used to characterize the one-to-one property of the map that takes the parameter space to the set of system outputs (i.e., the information encapsulated by collected data). In order to evaluate this property for \eqref{AcuteNd} we use the so-called Multiple Time Points (MTP) method developed in \cite{Wu}, which entails the construction of an invertible identification function $\Phi(\theta)$, from the parameter space to the set of observable outputs, that preserves the structure of the differential equations model. In particular, because invertibility of such a mapping is required, we wish to ultimately conclude that $\frac{\partial \Phi}{\partial \theta}$ has full rank.} 

\ed{To begin, we first describe the space of output values. Because healthy and infected T-cell counts can be both difficult to measure and unreliable, data is most easily gathered from an individual's viral load. Hence, model outputs in this context will be regarded as values of the viral load and its derivatives, as the latter are needed to compensate for the lack of T-cell data but can be generated from values of $V$. Thus, we begin to construct $\Phi$ by first eliminating the populations $T$ and $I$ within \eqref{AcuteNd} in favor of derivatives of $V$. This procedure involves merely taking derivatives in \eqref{AcuteNd} and representing $T$ and $I$ in terms of $\dot{V}$, $\ddot{V}$, and $\dddot{V}$ and yields a single equation to represent the original three-dimensional system of ODEs, namely
$$\dddot{V} - f(V, \dot{V}, \ddot{V}, \theta, t) = 0,$$
where $f$ is given by \eqref{V_ttt_form} below and $\theta = (R_0, R_m, \alpha_1, \alpha_2, \beta)^T$ is the vector of parameters. Hence, any solution of \eqref{AcuteNd} can be characterized by satisfying this relationship at time $t$.}

\ed{In order to identify the five distinct parameters in the model, five identification equations are needed, and this requires us to satisfy the above ODE at five different time points, say $t_k$, for $k =1,...,5$. Given this, we denote $V_k= V(t_k)$, with the same notation for derivatives (e.g., $\dot{V}_k = \dot{V}(t_k)$), and construct the identification function $\Phi: \mathbb{R}^5 \to \mathbb{R}^5$ defined by
\begin{equation}\label{big_phi_specific_nondim}
\Phi (\theta) = \begin{bmatrix}\dddot{V}_1 - f(V_1, \dot{V}_1, \ddot{V}_1, \theta, t_1)  \\ \dddot{V}_2 - f(V_2, \dot{V}_2, \ddot{V}_2, \theta, t_2)  \\ \vdots \\ \dddot{V}_5 - f(V_5, \dot{V}_5, \ddot{V}_5, \theta, t_5)  \end{bmatrix}
\end{equation}
where}
\ed{\begin{equation}
\label{V_ttt_form}
	\begin{aligned}
	f(V, \dot{V}, \ddot{V}, \theta, t) &=\alpha_1 \alpha_2 R_0 V +\left( \frac{R_m}{1 + \beta V} - 1 \right)  \left( \ddot{V} + \alpha_1 \dot{V} + \alpha_2 \dot{V} + \alpha_1 \alpha_2 V \right)V \\
	&~~~+  \left( \dot{V} - V - (\alpha_1 + \alpha_2) V\right) \left( \ddot{V} + (\alpha_1 + \alpha_2) \dot{V} + \alpha_1 \alpha_2 V \right) \frac{1}{V} \\
	&~~~+ \alpha_1 \alpha_2 (\alpha_1 + \alpha_2) \left( \frac{1}{\alpha_2} \dot{V} + V\right) + \alpha_2^2 \dot{V}.
	\end{aligned}
\end{equation}
By construction, the model \eqref{AcuteNd} is trivially satisfied at fitted parameter values $\theta^*$ given in Table \ref{List} as $\Phi(\theta^*) = 0$, and we are thus interested in whether $\Phi$ is invertible for values of $\theta \approx \theta^*$.} 

\ed{Now, computation of $\Phi$ requires knowledge of $V$ and its derivatives at five different time points, but values for these derivatives are typically unavailable either as collected data or via direct simulation of the model. Thus, we must require additional values of the viral load in order to compute them. In particular, eight values of $V$ are needed to numerically approximate $\dddot{V_k}$ for $k=1,...,5$ by using a suitable finite difference approximation. 
Therefore, we choose three additional time values $t_6, t_7$, and $t_8$ at which $V$ must be known, and note that the values of $\dot{V}_k, \ddot{V}_k$, and $\dddot{V_k}$ for $k = 1,...,5$ are merely determined by values of the viral load at multiple time points; for example, $\dddot{V_k}$ depends upon $V_1, ..., V_8$ for every $k = 1,..., 5$.}

\ed{In order to conclude that the model is locally structurally identifiable, we must show that the corresponding Jacobian matrix $\frac{\partial \Phi}{\partial \theta}$ is invertible near the fitted values $\theta = \theta^*$.  Since this is nearly impossible to perform analytically, we instead take a computational approach. First, we symbolically represent the matrix $\frac{\partial \Phi}{\partial \theta}$ using \eqref{big_phi_specific_nondim} and \eqref{V_ttt_form}. Then, fixing a specific vector of parameter values $\theta$, we simulate the output variable $V(t;\theta)$ at a chosen sequence of times using \eqref{AcuteNd} and Matlab's \texttt{ode15s} solver, and then numerically approximate its derivatives. Finally, we use these simulated values of $V_k, \dot{V}_k$, $\ddot{V}_k$, and $\dddot{V_k}$ to compute the resulting rank of the Jacobian for these particular parameter values. Repeating this calculation over a grid of parameter values within the biologically reasonable ranges $[0.5\theta^*_k, 1.5\theta^*_k]$ for $k=1,...,5$, we find $\rank \left (\frac{\partial \Phi}{\partial \theta} \right) = 5$ for every simulation. Thus, we conclude that $\frac{\partial \Phi}{\partial \theta}$ is of full rank and the associated parameters are structurally identifiable, at least locally, in the range
of parameter values used for these simulations.}

While this analysis provides a theoretical assurance that parameter values can be identified from exactly observed viral load data, clinical measurements will always contain some level of error. Even a model such as \eqref{AcuteNd}, in which parameters can be uniquely identified locally within the parameter space, may yield unreliable parameter estimates due to noisy fluctuations or measurement error in the data. 
Hence, we also study the \ed{\emph{practical}} identifiability of parameters, namely the relative proximity of fit parameters to their true values given uncertainty within obtained data, by using a Monte Carlo method outlined within \cite{Miao} and \cite{Wu}.

To estimate the differences in parameter fits generated from noisy data, we will use a metric known as the average relative estimation error (ARE).
Prior to precisely defining this quantity, we outline the algorithm for generating such values. 
Beginning with the previously fit vector of parameters $\theta^*$, we first use the numerical ODE solver to generate a time course of baseline viral load values $V^*_j = V(t_j; \theta^*)$ for a chosen set of times $t_j$, $j = 1,...,M$. 
Next, we choose a sensitivity threshold $\delta > 0$ and a number of Monte Carlo trials $N \in \mathbb{N}$, then define the perturbed viral load data
\begin{equation}
\label{eps}
\hat V^n_j = V^*_j + \epsilon^n_j, \qquad j = 1,..., M, \quad n = 1,...,N
\end{equation}
where $\epsilon^n_j \sim N(0, \delta)$ represents an unbiased normally-distributed measurement error for each fixed $j$ and $n$. 
With random error introduced within the simulated data, we perform $N$ parameter fits of this data to generate $N$ new vectors of parameter values $\hat \theta^n$, for $n = 1,...,N$.
Finally, to evaluate the variations in these fits generated by the noise, we define the ARE for each parameter by
\begin{equation}
\label{ARE}
\text{ARE}_{\delta, k} = \frac{1}{N} \sum_{n=1}^N \frac{|\hat \theta^n_k - \theta^*_k|}{|\theta^*_k|} \cdot 100\%,
\end{equation}
where $\hat \theta^n_k$ is the estimate of the $k$th parameter of $\hat \theta^n$ arising from the $n$th perturbation with variance $\delta$.  
Hence, this Monte Carlo method simulates the introduction of Gaussian measurement noise within the viral load data \eqref{eps} based on the output model \eqref{AcuteNd} and computes the expected response in parameter values from these variations.

The ARE algorithm was applied to simulations of $N=1000$ distinct simulated noisy measurement sets for each $\delta$ noise level of $5, 10, 15, ..., 30$ percent of the fit value $\theta^*_k$, at time points $t = 0, 1, 2, ..., 90$. The results are summarized in Table \ref{ARE_table}. 
Hence, we find a collection of small relative errors, with only the error in $\beta$ rising above $\delta\%$ of the true value.
That being said, $\beta$ also displays relatively minor fluctuations throughout the simulations, even for noisy data, and other parameters possess even less deviation from their fit values. Thus, \eqref{AcuteNd} appears to be quite robust with respect to variations in measurement data for the purposes of parameter fitting.

\begin{table}
	\begin{center}
		\begin{tabular*}{\textwidth}{@{\extracolsep{\fill}} c|ccccc}
			\hline
			Noise level & \multicolumn{5}{c}{Calculated $ARE$ in $\%$ of fitted value }\\
			 $\delta$ in $\%$ & $\alpha_1$ & $\alpha_2$ & $\beta$ & $R_0$ & $R_m$ \\
			\hline 
			5 & 2.6066 & 5.1185 & 8.5814 & 3.2080 & 4.1637\\
			10 & 3.5020 & 6.6600 & 14.9092 & 4.7260 & 6.0953\\
			15 & 4.2652 & 7.4536 & 20.0601 & 6.5109 & 8.0292\\
			20 & 4.6269 & 8.5138  & 24.2512 & 7.7800 & 8.6645\\
			25 & 5.2935 & 9.6199 & 27.6901 & 5.5697 & 9.7956\\
			30 & 5.6840 & 9.9405 & 30.8474 & 9.9440 & 10.9832 \\ \hline
		\end{tabular*}
	\end{center}
\caption{\label{ARE_table} \footnotesize{Calculated $ARE$ of each parameter with $N=1000$ trials.}
}
\end{table}

\section{Steady States}

We begin an analysis of the dynamics of \eqref{AcuteNd} by first determining all steady states and investigating their regions of biological relevance within the parameter space. This information will be used extensively in the next section in which the local dynamics of solutions is characterized.
We first compute the associated steady states, which are given in the form of an ordered triple $(T,I,V)^{T}$.
In particular, we find exactly three solutions to the algebraic system guaranteed by
\begin{equation}
\label{SteadyNd}
	\left.
	\begin{aligned}
	R_0 + \frac{R_m}{1 + \beta V}TV - T V - T  &= 0\\
	\alpha_1 \left ( T V - I \right ) &= 0\\
	\alpha_2 (I- V) &= 0
	\end{aligned}
	\right \}
\end{equation}
and they are summarized within the following theorem.



\begin{figure}[t]
\centering
\includegraphics[height=.5\textwidth]{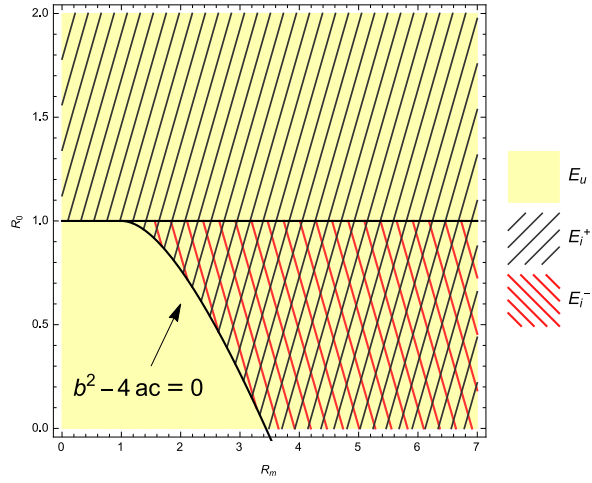}
\caption{\footnotesize{Regions of existence (i.e. real and positive values) for the uninfected state, $E_u$, and infected states, $E_i^{+}$ and $E_i^{-}$ in the $(R_m,R_0)$ plane.}
}
\label{exist}
\end{figure}

\begin{theorem}
\label{TSS}
The only time-independent solutions of \eqref{AcuteNd} are
\begin{equation*}
\begin{Large}
\mathbf{E_u} = \left(
\def\arraystretch{2}\begin{array}{c}
	R_0\\
	0\\ 
	0
\end{array}\right)
\qquad
\mathbf{E_i^{+}} = \left(
\def\arraystretch{2}\begin{array}{c}
	1\\
	\frac{-b + \sqrt{b^{2} - 4 a c}}{2 a}\\
	\frac{-b + \sqrt{b^{2} - 4 a c}}{2 a}	
\end{array}\right)
\qquad
\mathbf{E_i^{-}} = \left(
\def\arraystretch{2}\begin{array}{c}
	1\\
	\frac{-b - \sqrt{b^{2} - 4 a c}}{2 a}\\
	\frac{-b - \sqrt{b^{2} - 4 a c}}{2 a}	
\end{array}\right)
\end{Large}
\end{equation*} 
where $a$, $b$, and $c$ are defined by
$$a = \beta, \qquad b = 1- R_m + \beta(1-R_0), \qquad c = 1- R_0$$
and the dimensionless parameters are defined by \eqref{R0Rm}.
\end{theorem}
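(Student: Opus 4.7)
The proof is a direct algebraic reduction of the steady-state system \eqref{SteadyNd}. My plan is to eliminate variables one at a time, moving from the simplest equation to the most nonlinear.

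First, I would exploit the two algebraic equations coming from the linear compartments. The third equation of \eqref{SteadyNd} gives $I = V$ immediately (since $\alpha_2 > 0$). Substituting into the second equation yields $\alpha_1(TV - V) = 0$, which, since $\alpha_1 > 0$, factors as $V(T-1) = 0$. This produces exactly two cases to consider: either $V = 0$ or $T = 1$.

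In the case $V = 0$, the third equation forces $I = 0$, and the first equation collapses to $R_0 - T = 0$, yielding $T = R_0$. This is the uninfected equilibrium $E_u$.

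In the case $T = 1$, we still have $I = V$, and the first equation of \eqref{SteadyNd} becomes
\begin{equation*}
R_0 + \frac{R_m V}{1 + \beta V} - V - 1 = 0.
\end{equation*}
Clearing the denominator by multiplying through by $1 + \beta V$ (which is positive for $V \geq 0$, so no spurious roots are introduced) and collecting like powers of $V$ produces the quadratic
\begin{equation*}
\beta V^2 + \bigl( 1 - R_m + \beta(1 - R_0) \bigr) V + (1 - R_0) = 0,
\end{equation*}
which is precisely $aV^2 + bV + c = 0$ with $a, b, c$ as defined in the statement. Applying the quadratic formula and using $I = V$ together with $T = 1$ yields the two candidate equilibria $E_i^+$ and $E_i^-$, completing the enumeration.

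The argument is almost entirely routine algebra; there is no serious obstacle beyond careful bookkeeping when clearing the rational expression and collecting coefficients. I would note that the theorem as stated only claims these are the time-independent solutions, not that they are all real, positive, or biologically meaningful; those conditions determine the regions shown in Figure~\ref{exist} but are not part of the present statement. For completeness, I would close by observing that the three cases above are exhaustive and mutually cover all solutions of the system.
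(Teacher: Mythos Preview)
Your proof is correct and follows essentially the same route as the paper's: use the third equation to get $I=V$, substitute into the second to obtain the dichotomy $V=0$ or $T=1$, and in the latter case clear the denominator in the first equation to arrive at the quadratic $\beta V^2 + [\beta(1-R_0)+1-R_m]V + (1-R_0)=0$. Your added remarks on the positivity of $1+\beta V$ (no spurious roots) and on the separation between enumerating equilibria and checking their biological relevance are welcome clarifications but do not change the argument.
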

  
Here, $E_u$ is the uninfected steady state while $E_i^{+}$ and $E_i^{-}$ represent states of persistent infection.  
In the future, when referring to components of equilibria, we will use a bar to distinguish between the components of these states (e.g., $\overline{V}$) and time-dependent solutions (e.g., $V(t)$).
Additionally, we will distinguish amongst the same components of different equilibria using subscripts (i.e., $\ol{V}_u, \ol{V}_+$, and $\ol{V}_-$).
Since all components represent scaled population sizes, we impose restrictions on the values for which these steady states are biologically reasonable.  All three populations of $E_u$ will remain nonnegative for all times.  However, for both $E_i^{+}$ and $E_i^{-}$, we must require that the infected T-cell population, $\ol{I}$, and the virus population, $\ol{V}$, be real and positive.  
To ensure real valued populations, we impose the restriction $b^{2} - 4 a c \geq 0$, which is equivalent to the condition
\begin{equation}
\tag{E$_{\mathrm{real}}$}
R_0 \geq 1 - \frac{1}{\beta}(1 - \sqrt{R_m})^{2}.
\end{equation}
The requirement that all populations of the infected states be positive forces other restrictions. 
For positivity of the $E_i^+$ state, we must impose either 
\begin{equation}
\tag{E$^+_1$}
\label{Eplus1}
R_0 > 1 + \frac{1}{\beta}(1 - R_m)
\end{equation}
or the condition
\begin{equation}
\tag{E$^+_2$}
\label{Eplus2}
1 < R_0 \leq 1 + \frac{1}{\beta}(1 - R_m).
\end{equation}
For all populations within the $E_i^-$ state to remain positive, we must impose the condition
\begin{equation}
\tag{E$^-$}
1 + \frac{1}{\beta}(1 - R_m) < R_0 < 1.
\end{equation}
These constraints are justified within Appendix~\ref{appB}. 
Additionally, Figure \ref{exist} provides a graphical summary of the restrictions on parameters necessary to guarantee positivity of corresponding equilibria.
We will often refer to such a region as the ``region of existence'' of an equilibrium state, and in studying equilibria, we will always assume that parameters are within the region of existence of the state under consideration.
Clearly, these restrictions depend upon only three parameters - $\beta$, $R_0$, and $R_m$.  However, because $\beta$ does not greatly affect the qualitative structure of the system, we will fix this parameter and focus on the behavior of the system depending only upon $R_0$ and $R_m$.  A similar approach will be taken in the investigation of stability properties of equilibria, which may further depend upon $\alpha_1$ and $\alpha_2$, but these two additional parameters are ratios of death and clearance rates, which are fairly well-known.  Thus, we will later fix these parameters as well, and again focus on the relationship between $R_0$ and $R_m$. 


\begin{figure}[t]
\centering
\includegraphics[height=.5\textwidth]{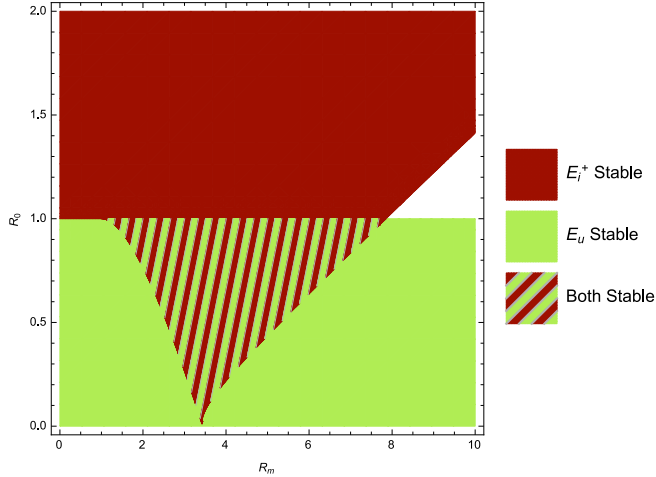}
\caption{\footnotesize{Regions of stability in the $(R_m,R_0)$ plane when $\alpha_1, \alpha_2, \beta$ are fixed to the values in Table~\ref{List}. }
} 
\label{las-stable}
\end{figure}


\section{Stability of Equilibria}
Having established conditions guaranteeing their biological relevance, we next examine conditions which guarantee the local stability of equilibria. To do so, we will utilize some standard dynamical tools such as the Hartman-Grobman Theorem applied to the linearization of \eqref{AcuteNd} and the Routh-Hurwitz criteria. The following result, the proof of which can be found in Appendix~\ref{appC}, 
provides precise conditions on the parameter space that yield local stability and thus viral clearance or persistence.

\begin{theorem}
\label{T1}
If $R_0 < 1$ then the infection-free equilibrium $E_u$ is locally asymptotically stable, whereas if $R_0 > 1$ then it is unstable.
Additionally, for all parameter values that guarantee the positivity of the components of $E_i^-$, this equilibrium is unstable.
Finally, let $\overline{V}_+$ denote the value of the viral load for $E_i^+$, which can be expressed in terms of $\beta$, $R_0$, and $R_m$ by Theorem 3.1. Then, the equilibrium $E_i^+$ is locally asymptotically stable if the condition
\begin{equation}
\label{stabilitycondition}
\frac{\alpha_1 \alpha_2}{1 + \beta\overline{V}_+} \left [ \beta \overline{V}_+^2 + R_0 - 1 \right ] <  (\alpha_1 + \alpha_2)R_0 (\alpha_1 + \alpha_2 + R_0)
\end{equation}
is satisfied, and otherwise unstable. These parameter regimes are summarized by Figure \ref{las-stable}.
\end{theorem}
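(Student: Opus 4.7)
The plan is to apply the Hartman--Grobman theorem to the linearization of \eqref{AcuteNd} at each of the three equilibria and to read off stability from the Routh--Hurwitz criteria for the resulting characteristic polynomial. First I would compute the Jacobian
$$J(T,I,V) = \begin{pmatrix} \frac{R_m V}{1+\beta V} - V - 1 & 0 & \left(\frac{R_m}{(1+\beta V)^2} - 1\right)T \\ \alpha_1 V & -\alpha_1 & \alpha_1 T \\ 0 & \alpha_2 & -\alpha_2 \end{pmatrix}.$$
At $E_u$ the first column is $(-1,0,0)^T$, so $-1$ is an eigenvalue and the remaining spectrum comes from a $2 \times 2$ block with trace $-(\alpha_1+\alpha_2)$ and determinant $\alpha_1\alpha_2(1-R_0)$; the 2D Routh--Hurwitz test immediately yields the $R_0<1$ versus $R_0>1$ dichotomy.

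For $E_i^\pm$, where $\overline{T}=1$ and $\overline{I}=\overline{V}$, I would exploit the steady-state identity $\frac{R_m\overline{V}}{1+\beta\overline{V}}=\overline{V}+1-R_0$ to collapse $J_{11}$ to $-R_0$. A cofactor expansion along the first row (the lower-right $2\times 2$ principal minor of $J$ vanishes identically) then yields $\det J = \alpha_1\alpha_2\,\overline{V}\,f'(\overline{V})$, where $f(V):=\frac{R_m V}{1+\beta V}-V$. Hence the characteristic polynomial is $\lambda^3+a_1\lambda^2+a_2\lambda+a_3$ with
$$a_1 = R_0+\alpha_1+\alpha_2,\qquad a_2 = R_0(\alpha_1+\alpha_2),\qquad a_3 = -\alpha_1\alpha_2\,\overline{V}\,f'(\overline{V}).$$
The key geometric observation is that $f$ is strictly concave (since $f''(V) = -\frac{2R_m\beta}{(1+\beta V)^3}<0$) and that $\overline{V}_\pm$ are the two positive solutions of $f(V)=1-R_0$; concavity therefore forces $f'(\overline{V}_-)>0$ on the ascending branch and $f'(\overline{V}_+)<0$ on the descending branch.

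For the $E_i^-$ assertion this already suffices: $f'(\overline{V}_-)>0$ makes $a_3<0$, and Routh--Hurwitz then guarantees at least one positive real eigenvalue, hence instability. For $E_i^+$, $a_3>0$ while $a_1>0$ is automatic, so local asymptotic stability reduces to the single inequality $a_1a_2>a_3$. Substituting the steady-state relation into $f'(\overline{V}_+)$ rewrites $a_3$ as $\frac{\alpha_1\alpha_2(\beta\overline{V}_+^2+R_0-1)}{1+\beta\overline{V}_+}$, and the resulting form of $a_1a_2>a_3$ is exactly \eqref{stabilitycondition}.

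The main obstacle I expect is pinning down the sign of $f'(\overline{V}_\pm)$ without circularly invoking the quadratic formula; the concavity argument above is what unlocks this step cleanly, and it has the bonus of clarifying the geometric meaning of the boundaries in Figure~\ref{exist}. Everything else is straightforward algebraic bookkeeping and careful application of Routh--Hurwitz.
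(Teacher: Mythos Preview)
Your proposal is correct and follows the same overall architecture as the paper (linearize, then apply Routh--Hurwitz), but the crucial step---determining the sign of the constant coefficient $a_3=d_0$ at $E_i^\pm$---is handled by a genuinely different and cleaner argument. The paper proves $d_0>0$ at $E_i^+$ via a separate algebraic lemma (Lemma~\ref{d0pos}), manipulating the explicit quadratic-formula expression for $\overline{V}_+$ and multiplying by conjugates; it then repeats an analogous conjugate trick to show $d_0\le 0$ at $E_i^-$. Your observation that $a_3=-\alpha_1\alpha_2\,\overline{V}\,f'(\overline{V})$ with $f(V)=\frac{R_mV}{1+\beta V}-V$ strictly concave replaces both computations with a single geometric fact: the smaller positive root of $f(V)=1-R_0$ lies on the ascending branch and the larger on the descending branch. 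This is more conceptual, avoids the ad hoc algebra of Lemma~\ref{d0pos}, and makes transparent why the two infected equilibria have opposite $d_0$-signs. The paper's approach, by contrast, keeps everything in terms of the quadratic coefficients $a,b,c$, which ties in more directly with the existence analysis in Appendix~\ref{appB}.

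One small point to tighten: your phrasing ``the two positive solutions'' tacitly assumes both $\overline{V}_\pm>0$, which holds in the $E_i^-$ existence region but not throughout the $E_i^+$ region (e.g.\ when $R_0>1$ only $\overline{V}_+$ is positive). The concavity argument still delivers $f'(\overline{V}_+)<0$ there---since $f(0)=0>1-R_0$, $f\to-\infty$, and the unique positive crossing must be on the descending branch---but you should say this explicitly rather than relying on the two-root picture.
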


In short, only the uninfected steady state, $E_u$, and the infected steady state, $E_i^{+}$, are locally asymptotically stable within their respective biologically relevant regions, as described by Figure \ref{las-stable}.  
The (light) green region denotes the portion of the $(R_m,R_0)$ plane in which $E_u$ is locally asymptotically stable, while the area that is shaded (dark) red denotes the corresponding local stability region for the $E_i^{+}$ state. Interestingly, there is a small overlap of these two regions in which both steady states are locally stable, namely the striped triangular region.
An illustration of the change in the regions of Figure \ref{las-stable} generated by differing values of $\beta$ is provided in Figure \ref{changing_beta}.  
Finally, we note that with the original fitted parameter values, the reproduction numbers are $R_0 = 1.53$ and $R_m = 0.923$, respectively, which corresponds to the development of a persistent viral infection.

\begin{figure}[t]
\centering
\hspace{-0.75in}
\subfigure[$\beta = \frac{1}{2}\beta^*$]{
	\hspace{0.5in} \includegraphics[height = 3.75cm]{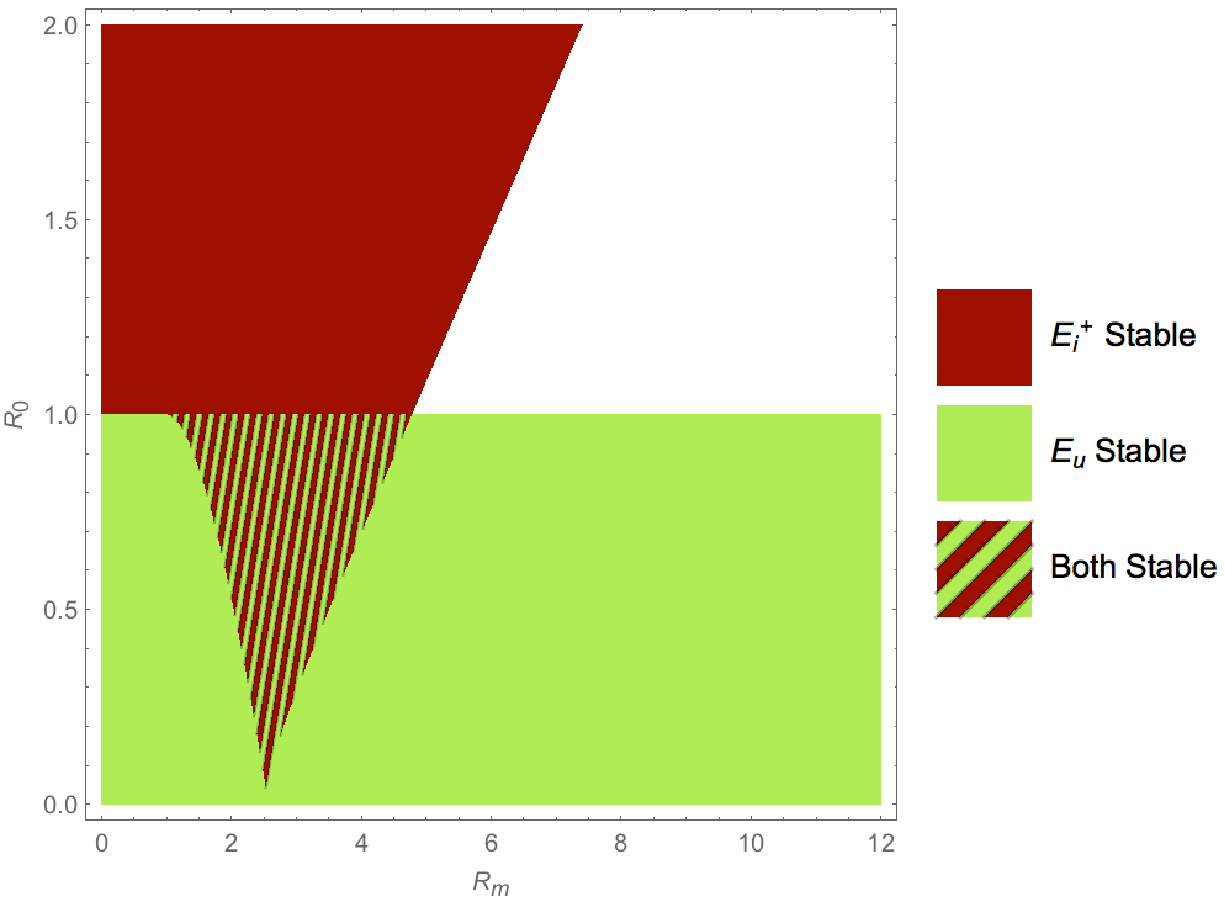}
} 
\hspace{-1.35in}
\subfigure[$\beta = \beta^*$]{
\hspace{0.6in}	\includegraphics[height= 3.75cm]{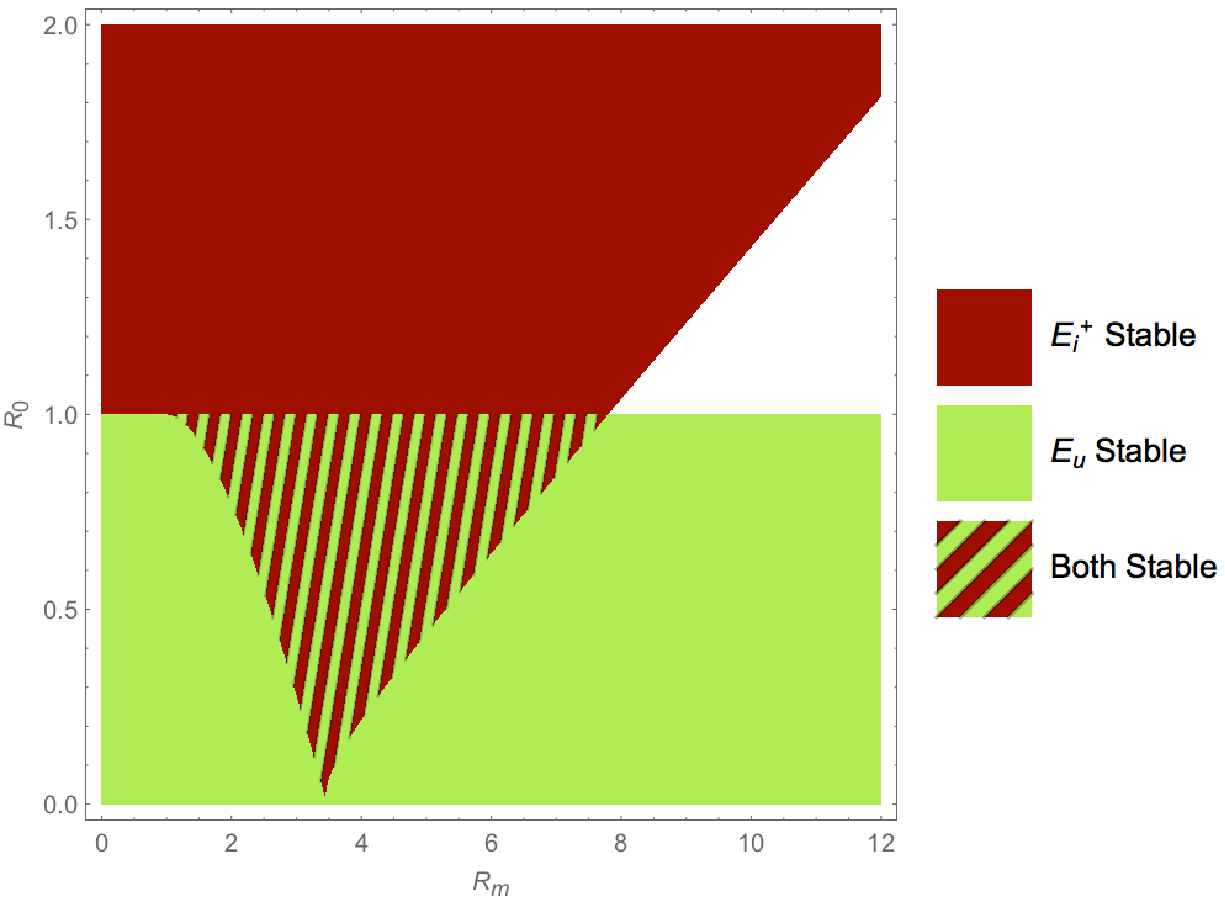}
}
\hspace{-1.3in}
\subfigure[$\beta = 2\beta^*$]{
\hspace{0.55in}	\includegraphics[height= 3.75cm]{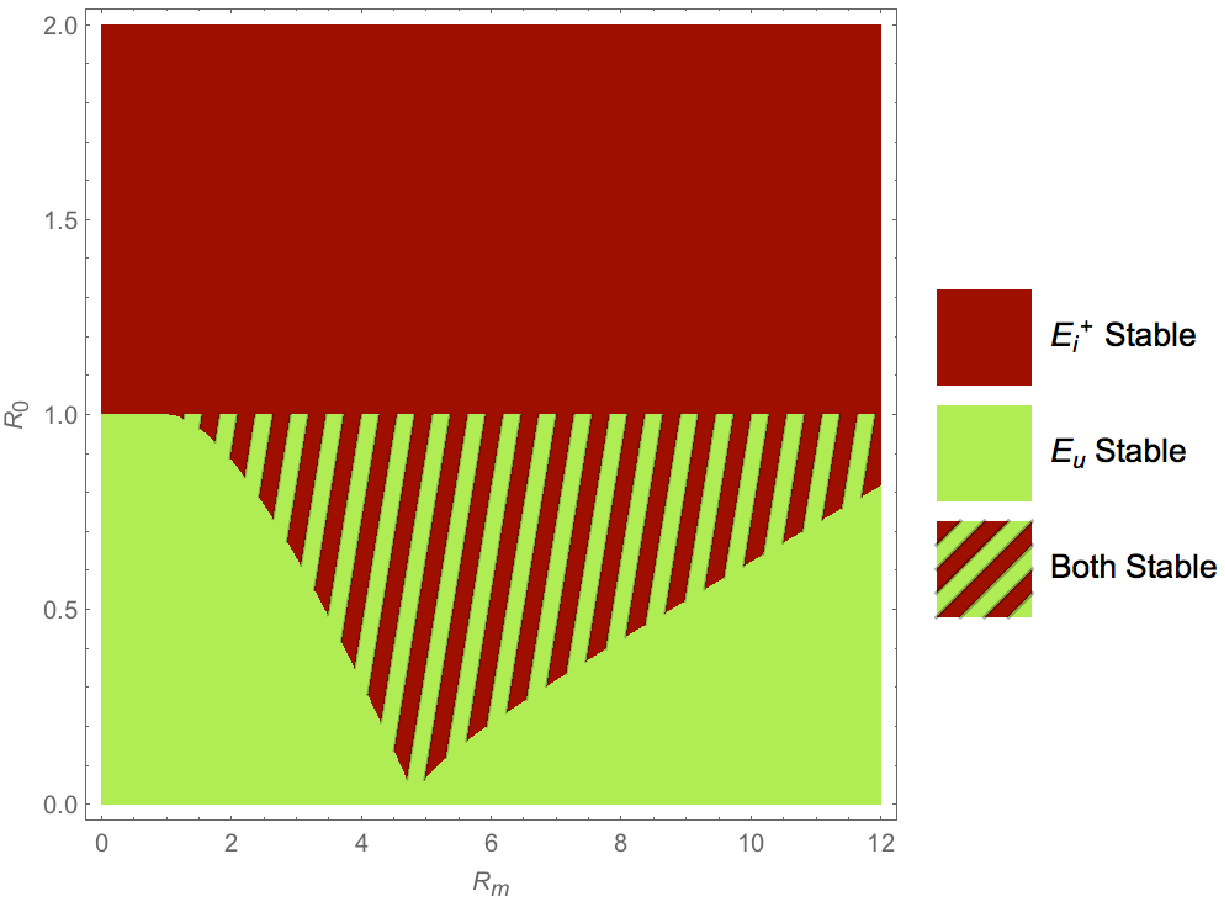}
}
\hspace{-0.3in}
\caption{\footnotesize{Changes to stability of equilibria given differing values of $\beta$ where $\beta^* =  0.73$ is the fitted value for $\beta$ in Table \ref{List}.}
}
\label{changing_beta}
\end{figure}

\begin{figure}[t]
\hspace{-0.3in}
\subfigure{\includegraphics[width=6.75cm]{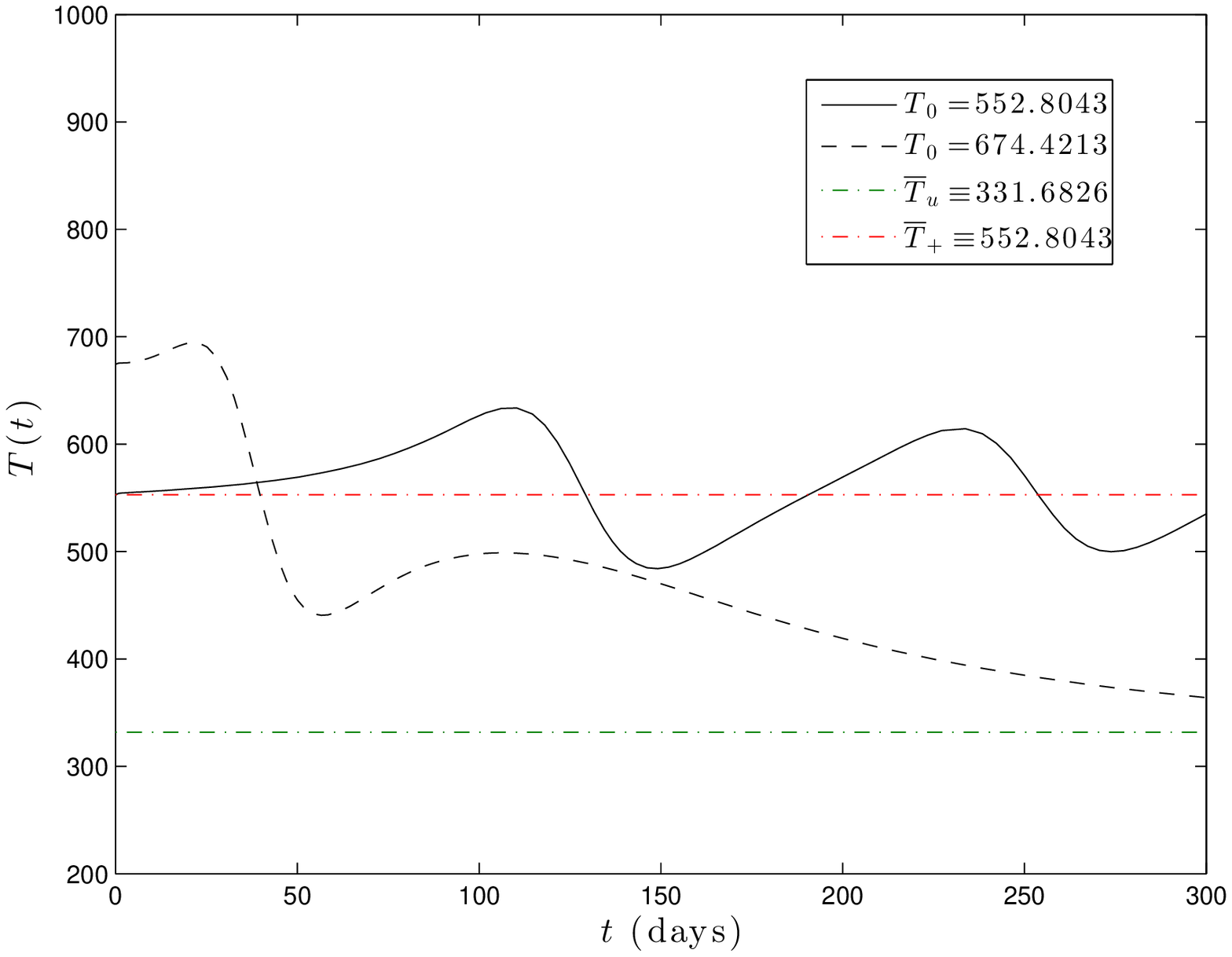}}
\hspace{-0.25in}
\subfigure{\includegraphics[width=6.75cm]{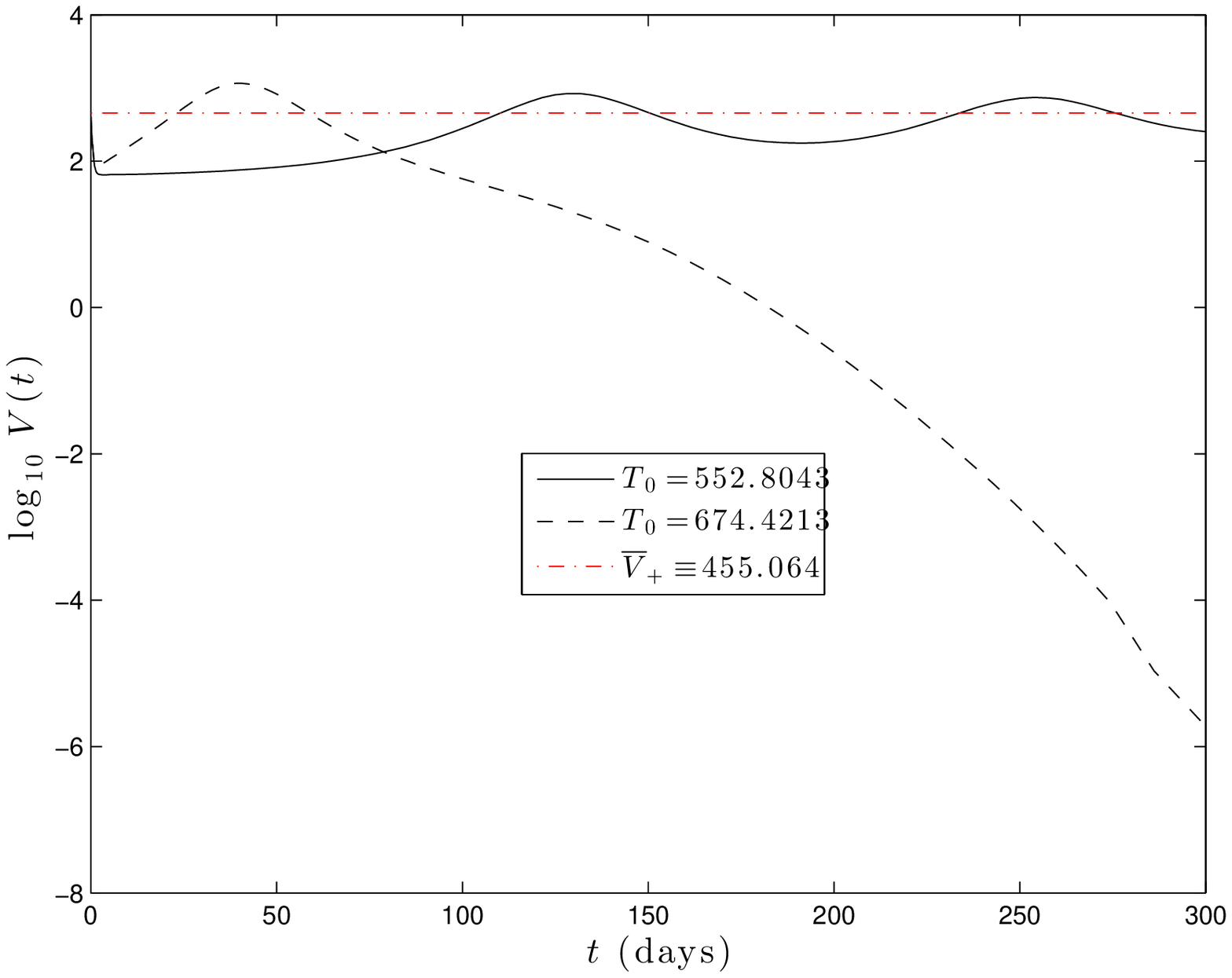}}\\
\vspace{-0.3in}
\hspace{-0.25in}
\subfigure{\includegraphics[width=6.75cm]{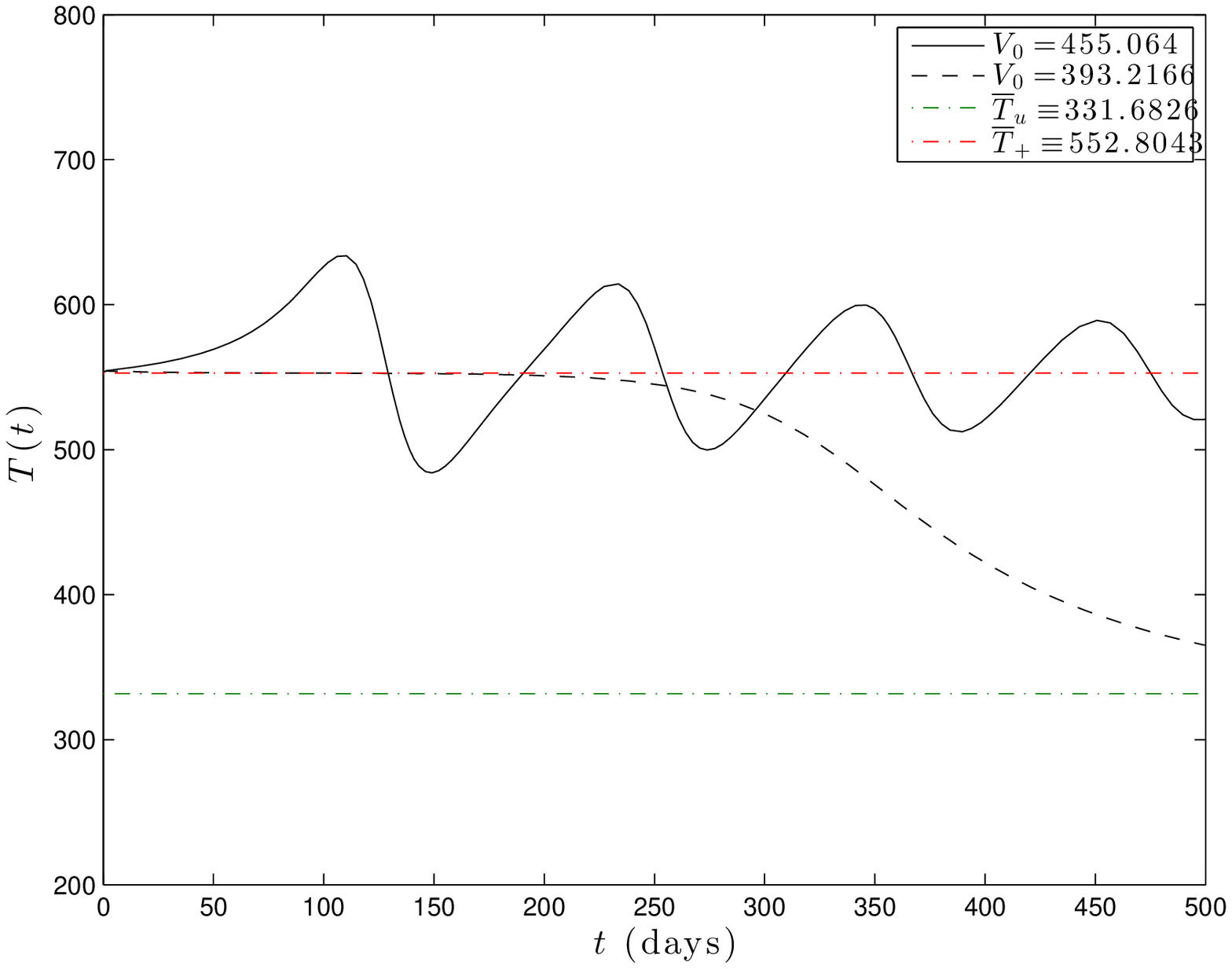}}
\hspace{-0.25in}
\subfigure{\includegraphics[width=6.75cm]{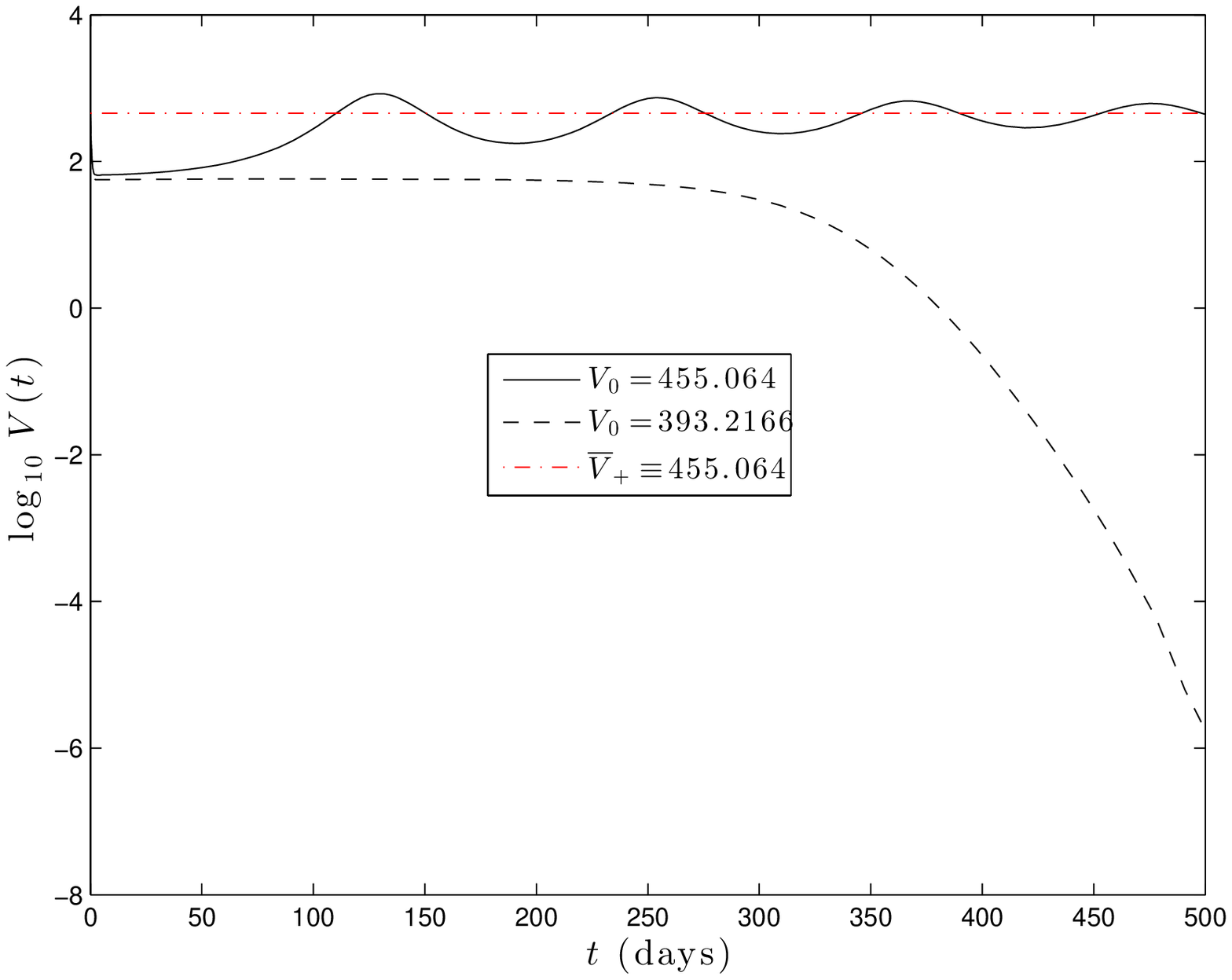}}
\vspace{-0.1in}
\caption{\footnotesize{Small changes in initial data, $T_0$ (top) and $V_0$ (bottom), yield changes in asymptotic behavior within the bistable parameter region  - T-cell count (left) and viral load (right). 
Within both simulations, dimensionless parameters are fixed to $R_m = 0.6$ and $R_0 = 3$. 
Recall that $\overline{V}_u \equiv 0$ is the equilibrium viral load for $E_u$, and note that $V(t)$ is represented on a log scale.}
}
\label{T0_V0_change}
\vspace{-0.1in}
\end{figure}

\begin{figure}[t] 
\centering
\includegraphics[height=.5\textwidth,width=.75\textwidth]{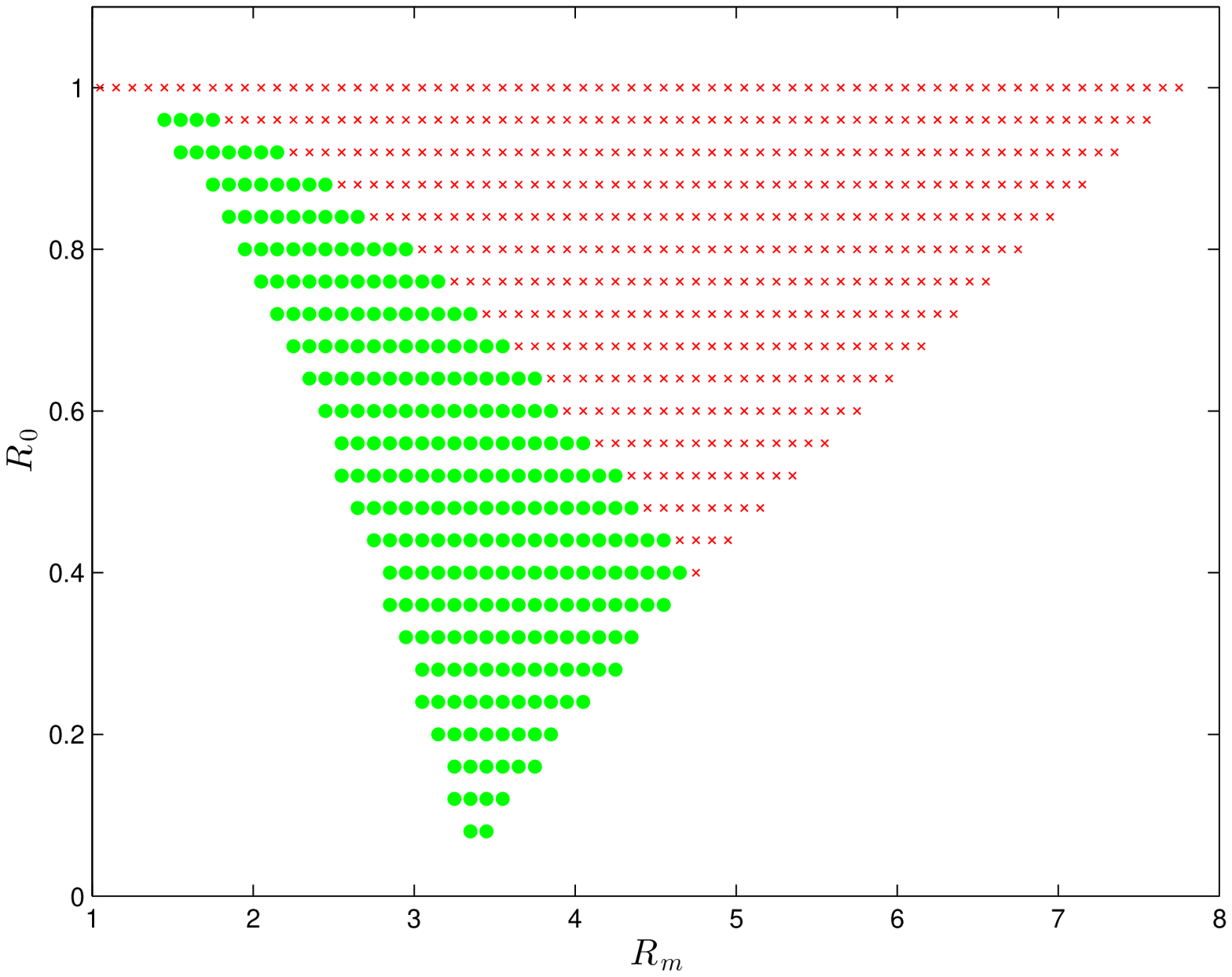}
\caption{\footnotesize{Simulation results for $R_m, R_0$ values within the bistable region.  A simulation was conducted at each point in the displayed parameter space with initial conditions $T_0 = 1, I_0 = 0, V_0 = \overline{V}_+$, where $\overline{V}_+$ is the equilibrium viral load of the $E_i^+$ infective equilibrium evaluated at $(R_m, R_0)$.  Green dots indicate solutions which tend toward the uninfected steady state $E_u$, while red crosses indicate solutions tending to $E_i^{+}$, the infected steady state.}}
\label{outcomes}
\end{figure}


\subsection{Basins of Attraction in the Bistable Region}
Since both steady states are locally stable in the overlapping region of Figure~\ref{las-stable}, we expect that differing long-term behavior, and hence different disease outcomes, may arise from variations in initial data.  Indeed, this is the case, and we demonstrate this by considering two different simulations of the model with $(R_m, R_0)$ values within the bistable region.
In order to display population values on a biologically pertinent scale, simulation values for $T(t)$ and $V(t)$ are referenced and displayed in the original, dimensional variables, rather than for the dimensionless system.

Figure~\ref{T0_V0_change} demonstrates the bistability of equilibria in two different scenarios. The first pair of simulations (Figure~\ref{T0_V0_change}, top row) fixes parameter values ($R_m = 0.6, R_0 = 3$) and the initial viral load (at $V_0 = \overline{V}_+$), but varies the initial T-cell count.  For $T_0 \approx 675$ the viral clearance ($E_u$) state is stable, and for $T_0 \approx 550$ the viral persistence ($E_i^+$) state is stable.  The top left plot shows T-cell count over time with the analytical steady state T-cell values, $\overline{T}_u \equiv 330$ and $\overline{T}_+ \equiv 550$, highlighted.  The top right plot shows the progression of infection, with the steady state viral load $\overline{V}_+ \approx 455$ highlighted (of course, $\overline{V}_u \equiv 0$).  We note that the large-time behavior of these solutions differs significantly even though the parameter values are identical and initial data are quite similar; in fact, the initial viral load is the same.

Similarly, the second pair of simulations (Figure~\ref{T0_V0_change}, bottom row) fixes parameter values 
and the initial T-cell count (at $T_0 \approx 550$), with a varied initial viral load.  For $V_0 \approx 393$ the viral clearance ($E_u$) state is stable, and for $V_0 \approx 455$ the viral persistence ($E_i^+$) state is stable.  Again, steady-state T-cell values - $\overline{T}_u \equiv 331$ and $\overline{T}_+ \equiv 552$ - are highlighted within the bottom-left figure, while $\overline{V}_+ \equiv 455$, is shown in the bottom-right figure, and $\overline{V}_u \equiv 0$ for the uninfected equilibrium.  Again, the large-time behavior of these solutions differs significantly even though all initial and parameter values other than the initial viral load are equal.

To demonstrate this further, we simulate the progression of the model for varying $(R_m, R_0)$ values with other dimensionless parameters held constant and test whether the large-time behavior of these solutions tends towards the $E_u$ or $E_i^{+}$ steady state.  The results of these simulations over the bistable region is shown in Figure~\ref{outcomes}. 

While these simulations provide information regarding the qualitative difference between solutions in this parameter region, they fail to describe how close initial data must be to equilibrium in order to guarantee their stability, i.e. the basin of attraction.  To study these basins of the two stable equilibrium states, we perform a number of perturbative computational studies at differing points in the bistable region.
The results, shown in Figures \ref{locations} and \ref{results}, display the sensitivity to initial conditions at each location.  Notice that locations closer to $R_0 = 1$ and further to the right in the bistable region display a greater basin of attraction for $E_i^+$ than those to the left of this region in the $(R_m,R_0)$ plane, as displayed by Figures \ref{results}(c), (e), and (f).  The appearance of a strip of persistent infection steady states increases while moving from left to right within this region.  We see that the initial viral load has a minor influence on the shape of the strip for various locations as the width of each strip is not uniform, while the initial T-cell count has a more pronounced affect on the long term behavior.  
As these initial values are dimensionless, when $T_0$ is rescaled to represent an actual T-cell count, the basins of attraction for the infected steady state $E_i^+$ are increased by a factor of nearly $10^3$.  Similarly, the basin of attraction corresponding to perturbations in the viral load are increased by around $200$ when $V_0$ is rescaled to represent a true viral load.

An individual infected with HIV typically possesses baseline parameters (see Table \ref{List}) corresponding to a location in the $(R_m,R_0)$ plane lying above $R_0 = 1$, and hence within the $E_i^{+}$ stability region.  
However, certain parameter values, including the infection rate $k$ and the rate of viral production $p$, are known to vary widely amongst individuals, and thus the feasible region of attained values within the $(R_m,R_0)$ plane is quite vast.
In particular, $k$ has been reported to be as small as $10^{-6}$ \cite{Hern} and as large as $10^{-2}$ \cite{Hadji}, and this uncertainty could allow the values of $R_m$ and $R_0$ to vary within the range $10^{-2}-10^2$. 
Thus, we see that the biologically feasible parameter regime extends even into the unshaded region of Figure~\ref{las-stable}, which leads us to study the dynamics there, as well.

\begin{figure}[t]
\centering
\includegraphics[height=.5\textwidth]{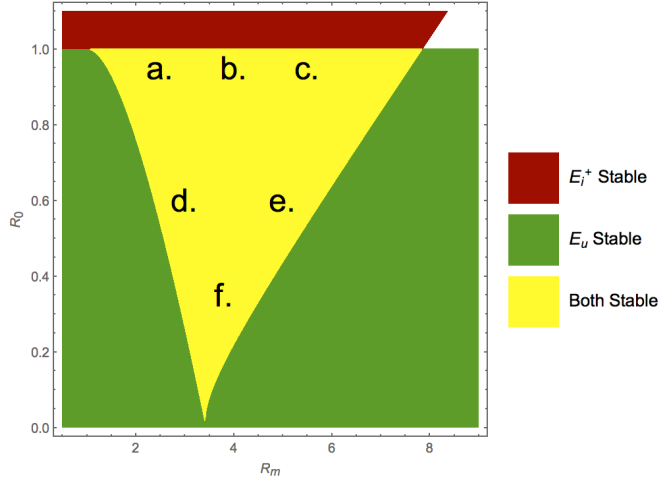}
\caption{\footnotesize{The dependence of equilibria on initial values of healthy T-cells and the initial viral load at arbitrary points in the $(R_m,R_0)$ plane. Corresponding results are shown in Figure \ref{results}.}
}
\label{locations}
\end{figure}

\begin{figure}[p] 
\centering
\hspace{-0.2in}
\subfigure[$R_m = 2.5, R_0 = 0.9$]{\includegraphics[width = 6.75cm]{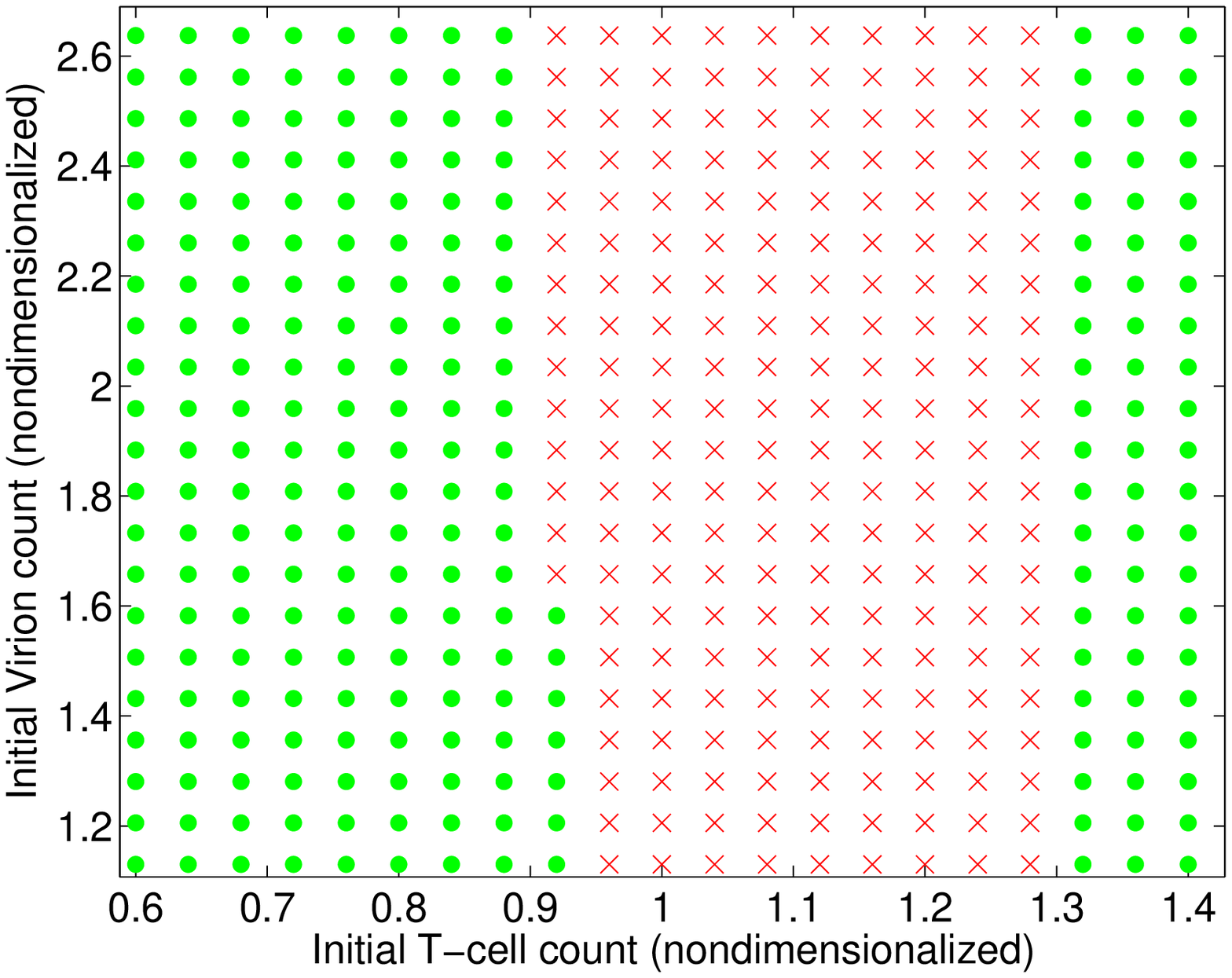}}
\hspace{-0.25in}
\subfigure[$R_m = 4, R_0 = 0.9$]{\includegraphics[width = 6.75cm]{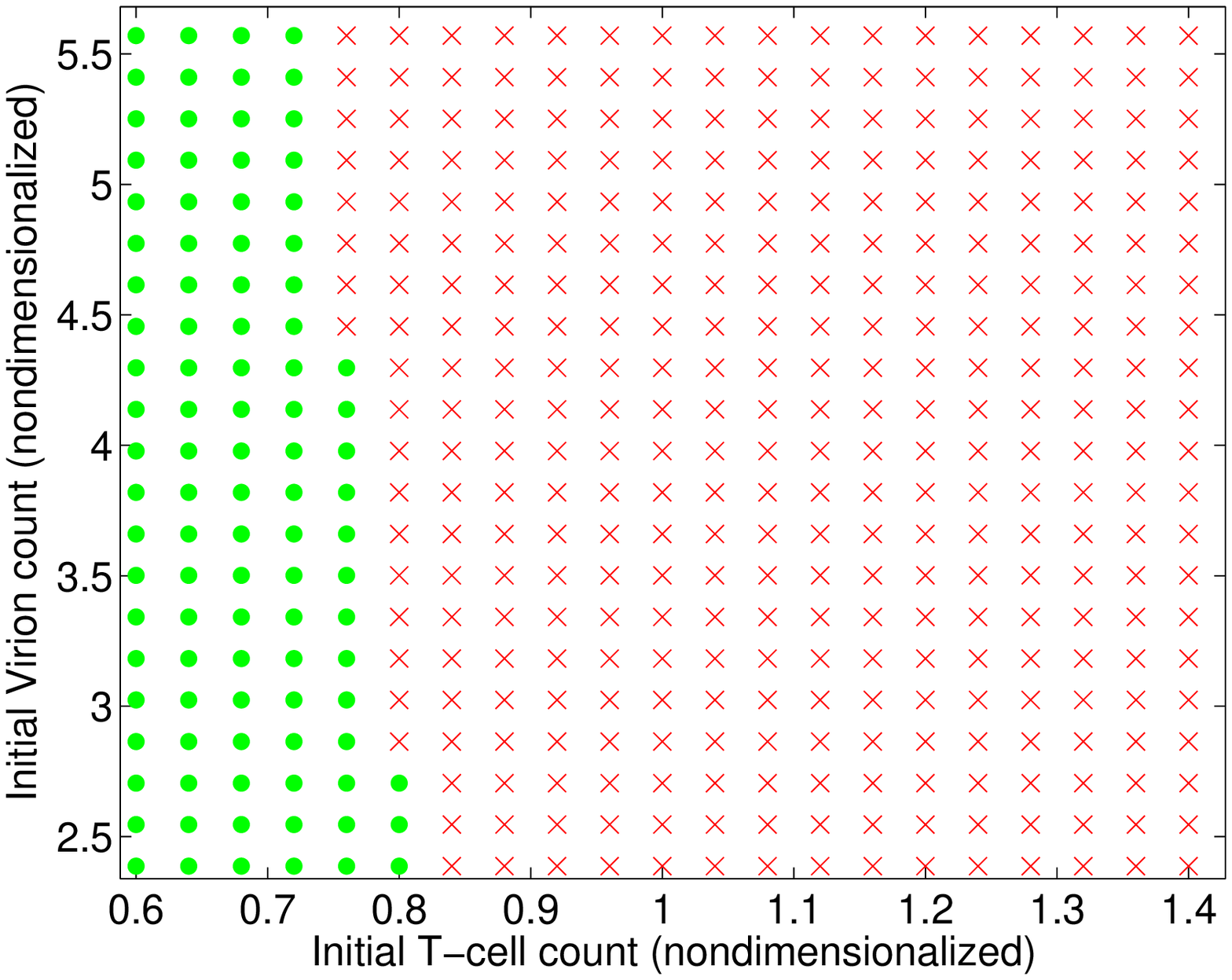}}\\
\vspace{-0.15in}
\hspace{-0.2in}
\subfigure[$R_m = 5.5, R_0 = 0.9$]{\includegraphics[width = 6.75cm]{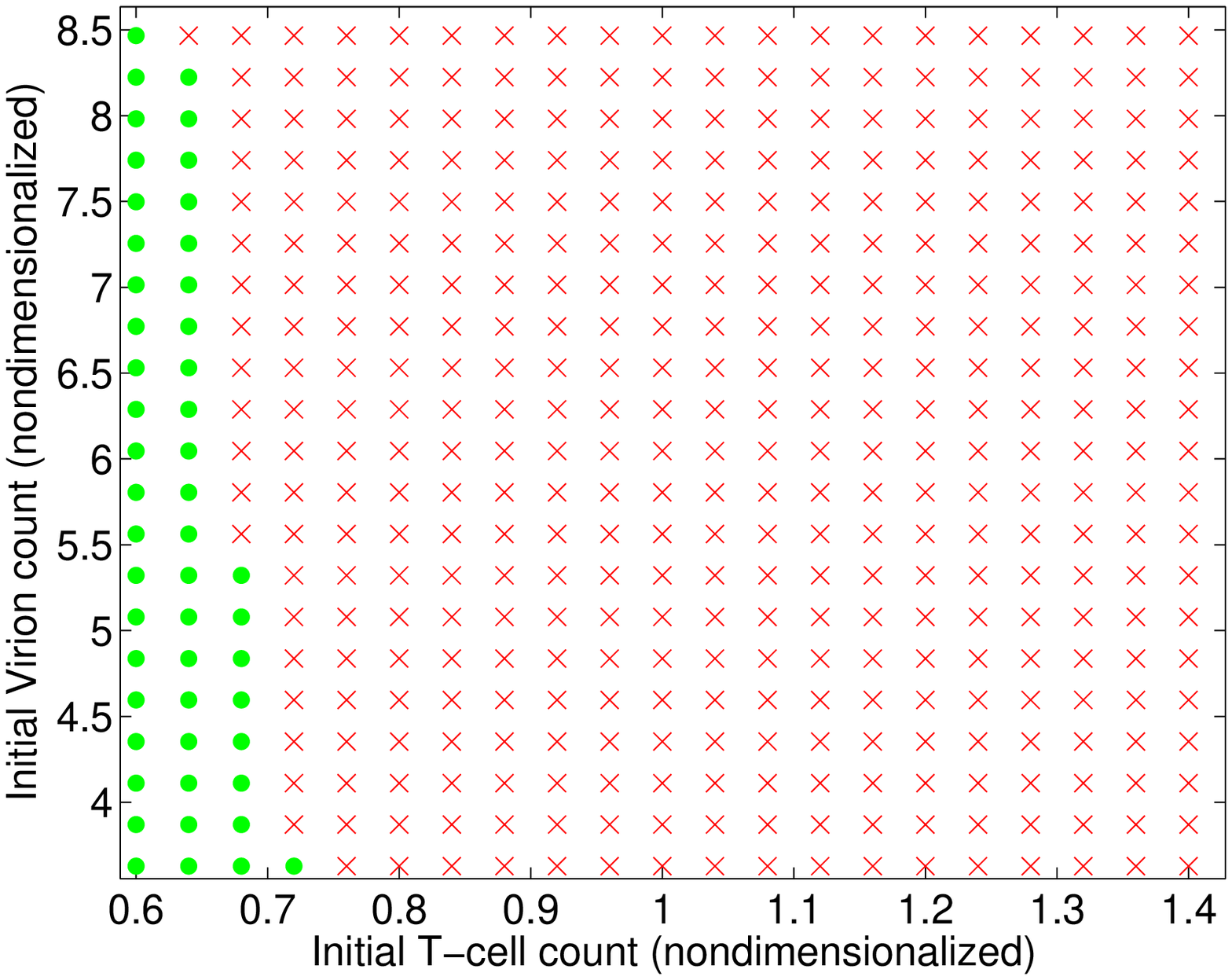}} 
\hspace{-0.25in}
\subfigure[$R_m = 3, R_0 = 0.55$]{\includegraphics[width = 6.75cm]{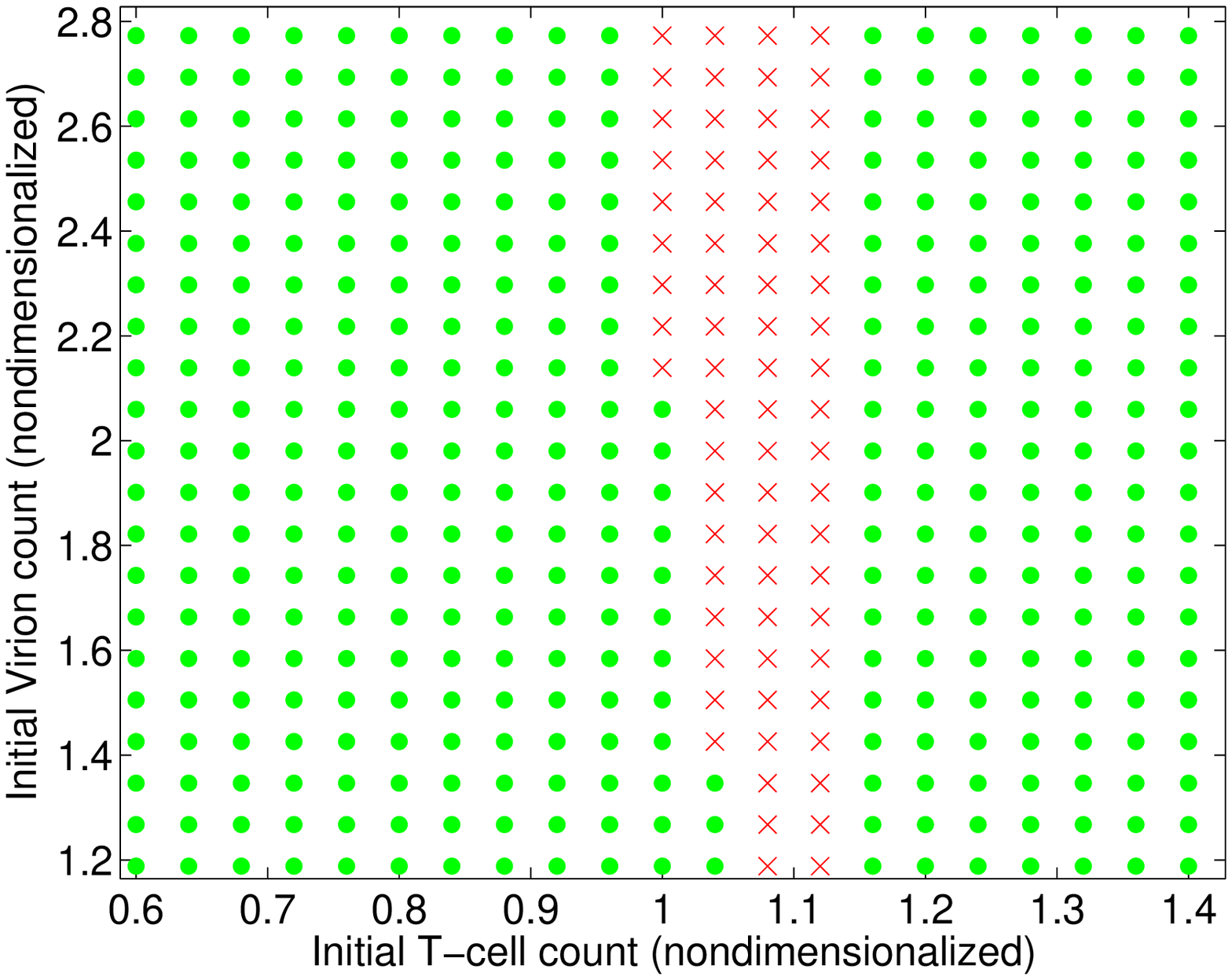}}\\ 
\vspace{-0.15in}
\hspace{-0.2in}
\subfigure[$R_m = 5, R_0 = 0.55$]{\includegraphics[width = 6.75cm]{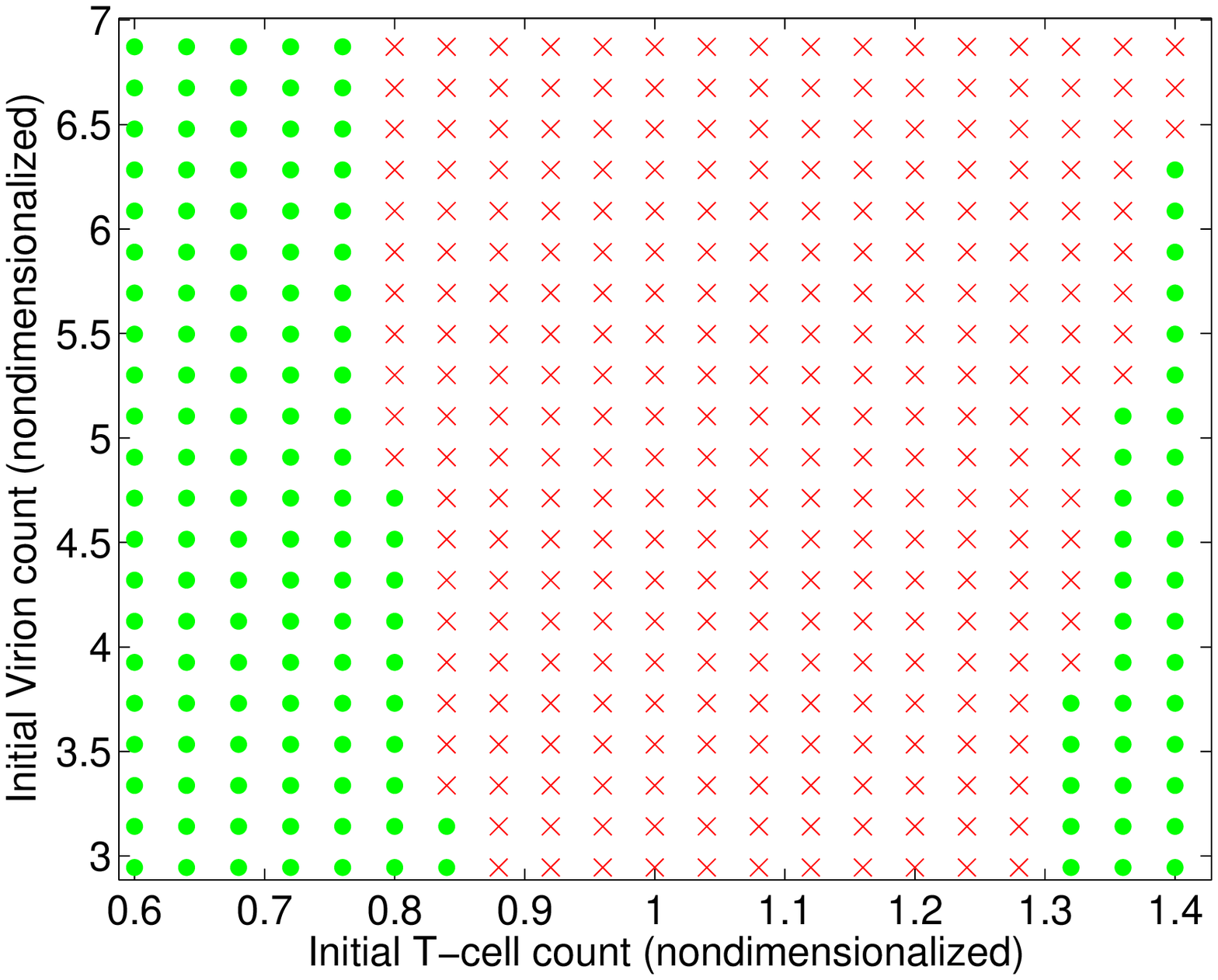}}
\hspace{-0.25in}
\subfigure[$R_m = 4, R_0 = 0.2$]{\includegraphics[width = 6.75cm]{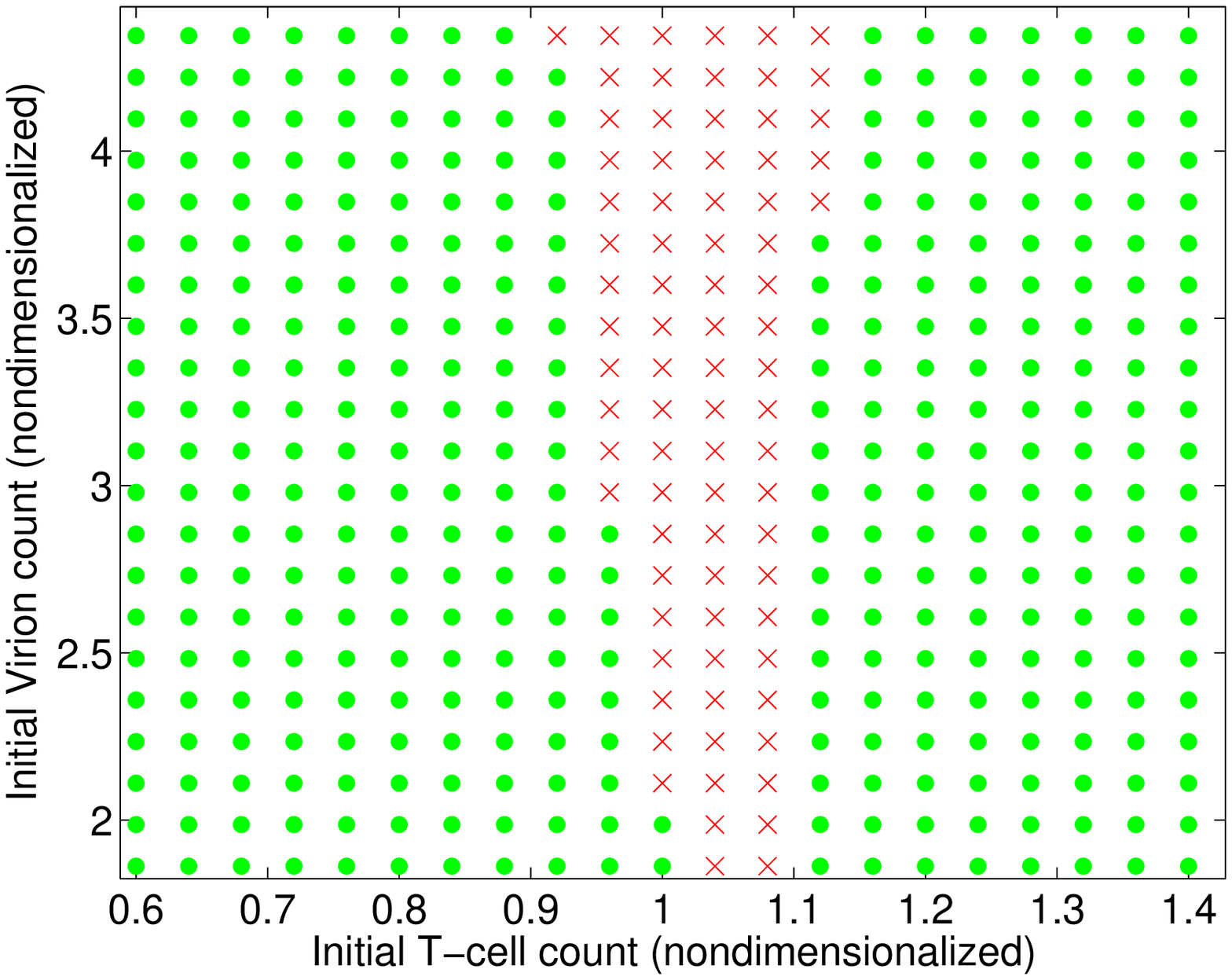}}
\caption{\footnotesize{Basins of attraction within the bistable region. For each $(R_0, R_m)$ point, the initial T-cell count and viral load (both dimensionless) were set to the $E_i^+$ steady state evaluated at $(R_0, R_m)$. This forms the central point in each plot, and then both initial conditions are shifted $\pm 40\%$.  Green dots denote simulation convergence to $E_u$, and red crosses to $E_i^+$.
A visual summary of the points chosen in the bistable region is provided by Figure~\ref{locations}.}
}
\label{results}
\end{figure}

\subsection{Hopf Bifurcation}
From Figure \ref{las-stable}, we notice that there is a region of the parameter space within which no equilibrium point is locally stable.  Thus, one may expect that a different attracting set inherits this property for such parameter values.  Indeed, this is the case, and as we will show using $R_m$ as a bifurcation parameter, a Hopf bifurcation occurs at the boundary of this domain.
In particular, we will take any $R_0 > 1$, vary $R_m$ to move within this region of the $(R_m,R_0)$ plane, and investigate the stability properties of $E_i^+$ as they change along the right boundary of the (dark) red region within Figure \ref{las-stable}. Since $R_m$ will be used to move through the parameter space, we will alter notation when necessary in order to denote certain quantities that depend on this parameter.
As demonstrated within Appendix~\ref{appC}, the Jacobian of system \eqref{AcuteNd} evaluated at $E_i^+$ is
\begin{equation*}
\begin{large}
\centering
	\nabla f(E^+_i) = \left(
	\def\arraystretch{2}\begin{array}{ccc}
		-R_0 & 0 & \frac{R_m}{(1+ \beta \overline{V}_+)^2} - 1 \\
		\alpha_1 \overline{V}_+ & -\alpha_1 &  \alpha_1\\
		0 & \alpha_2 & -\alpha_2 
\end{array} \right)
\end{large}
\end{equation*}
where
$\overline{V}_+ = \overline{V}_+(R_m)$ is the $E_i^+$ viral population given by Theorem 3.1.  Recall that this steady state value satisfies the quadratic equation
\begin{equation}
\label{Vbar}
\beta \overline{V}_+^2 + \left [ \beta(1-R_0) + 1 - R_m \right ]\overline{V}_+ + 1 - R_0= 0
\end{equation}
and thus varies with $R_m$ when all other parameters are fixed.
The characteristic polynomial associated to $\nabla f(E^+_i)$ is 
\begin{equation}
\label{charpoly}
\eta^{3} + d_2 \eta^{2} + d_1 \eta + d_0(R_m) = 0
\end{equation} 
where
$$\begin{gathered}
d_0(R_m) = \frac{\alpha_1\alpha_2 \overline{V}_+}{(1 + \beta \overline{V}_+)^2} \left [(1 + \beta \overline{V}_+)^2 - R_m \right ],\\
d_1 = (\alpha_1 + \alpha_2)R_0,\\
d_2 = \alpha_1 + \alpha_2 + R_0.
\end{gathered}$$
Using \eqref{Vbar}, and in particular the relationship
$$ (1-R_m) \overline{V}_+ = -\beta \overline{V}_+^2 + (R_0 -1) (1 + \beta \overline{V}_+),$$
we can rewrite $d_0$ as 
\begin{equation}
\label{d0}
d_0(R_m) = \frac{\alpha_1\alpha_2}{1 + \beta \overline{V}_+} \left [\beta \overline{V}_+^2 + R_0 - 1 \right ].
\end{equation}
%
%
Notice that $d_1, d_2 > 0$ since all parameters are positive.
Additionally, we establish the following result, which will be useful in our study of the roots of \eqref{charpoly} and in the proof of Theorem \ref{T1}.
\begin{lemma}
Within the region of existence of $E_i^{+}$, we have 
$$ \beta \overline{V}_+^2 + R_0 - 1 > 0$$
and thus $d_0(R_m) > 0$ within this parameter region.
\label{d0pos}
\end{lemma}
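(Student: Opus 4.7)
My approach is to apply Vieta's formulas to the quadratic \eqref{Vbar} that $\overline{V}_+$ satisfies. The roots of
$$\beta V^{2} + [\beta(1-R_0) + 1 - R_m]V + (1-R_0) \;=\; 0$$
are precisely $\overline{V}_+$ and $\overline{V}_-$, so the product-of-roots formula gives $\beta \overline{V}_+ \overline{V}_- = 1 - R_0$. Substituting this relation yields the compact algebraic identity
$$\beta \overline{V}_+^{2} + R_0 - 1 \;=\; \beta \overline{V}_+^{2} - \beta \overline{V}_+ \overline{V}_- \;=\; \beta \overline{V}_+ \bigl( \overline{V}_+ - \overline{V}_- \bigr),$$
so the lemma reduces to showing this product is strictly positive.

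I would then verify positivity factor by factor. The constant $\beta > 0$ by its definition in \eqref{R0Rm}, and $\overline{V}_+ > 0$ holds throughout the region of existence of $E_i^+$, i.e.\ whenever $(E^+_1)$ or $(E^+_2)$ is in force. For the remaining factor, the quadratic formula gives $\overline{V}_+ - \overline{V}_- = \sqrt{b^{2} - 4\beta c}/\beta \geq 0$, so the only task is to rule out discriminant zero. I would dispose of this by cases on the sign of $1 - R_0$: if $R_0 > 1$, then $c = 1 - R_0 < 0$ forces $b^{2} - 4\beta c \geq -4\beta c > 0$; if $R_0 = 1$, a direct check on $(E^+_1)$ and $(E^+_2)$ shows $b = (1-R_m) + \beta(1-R_0) \neq 0$, so the discriminant is again strict; and if $R_0 < 1$, the condition $(E^+_1)$ forces $R_m > 1$ and Vieta places both roots on the positive real axis, but the degenerate case $b^{2} = 4\beta c$ would collapse $E_i^+$ and $E_i^-$ into a single state, contradicting the assumption that $E_i^+$ is a distinct equilibrium. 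Hence $\beta \overline{V}_+ (\overline{V}_+ - \overline{V}_-) > 0$ in every subcase, and strict positivity of $d_0(R_m)$ is then immediate from the representation \eqref{d0}.

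The main subtlety lies in the last subcase, $R_0 < 1$ within $(E^+_1)$, where both roots of \eqref{Vbar} are positive and could a priori coincide; everywhere else the sign of $c$ or a direct parity argument makes strict inequality automatic. The safeguard is that the region of existence for $E_i^+$ is implicitly one in which $E_i^+$ is genuinely distinct from $E_i^-$, which rules out the single point where the discriminant vanishes. Beyond that observation, the argument is routine manipulation of the defining quadratic.
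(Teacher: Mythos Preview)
Your proof is correct and takes a genuinely different route from the paper's. The paper first disposes of $R_0>1$ trivially (both summands in $\beta\overline{V}_+^{2}+(R_0-1)$ are then positive), and in the remaining case $b<0$ it rewrites the target as $\overline{V}_+>2c/(-b)$, verifying this by rationalizing $\overline{V}_+=2c/\bigl(-b-\sqrt{b^{2}-4ac}\bigr)$ and comparing denominators. Your Vieta-based identity $\beta\overline{V}_+^{2}+R_0-1=\beta\overline{V}_+(\overline{V}_+-\overline{V}_-)$ is cleaner: it collapses the problem uniformly to strict positivity of the discriminant, avoiding the conjugate trick and most of the case split. Both arguments ultimately need $b^{2}-4ac>0$ strictly in the subregion $R_0<1$; the paper simply asserts this inequality, whereas you justify it by invoking the distinctness of $E_i^{+}$ from $E_i^{-}$, which is consistent with the paper's framing of ``exactly three'' equilibria in Theorem~\ref{TSS}. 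So your handling of that boundary point is at least as explicit as the original.
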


\begin{figure}
\hspace{-0.2in}
\subfigure[Real part of eigenvalues $\eta_2$ and $\eta_3$]{\includegraphics[width=6.75cm]{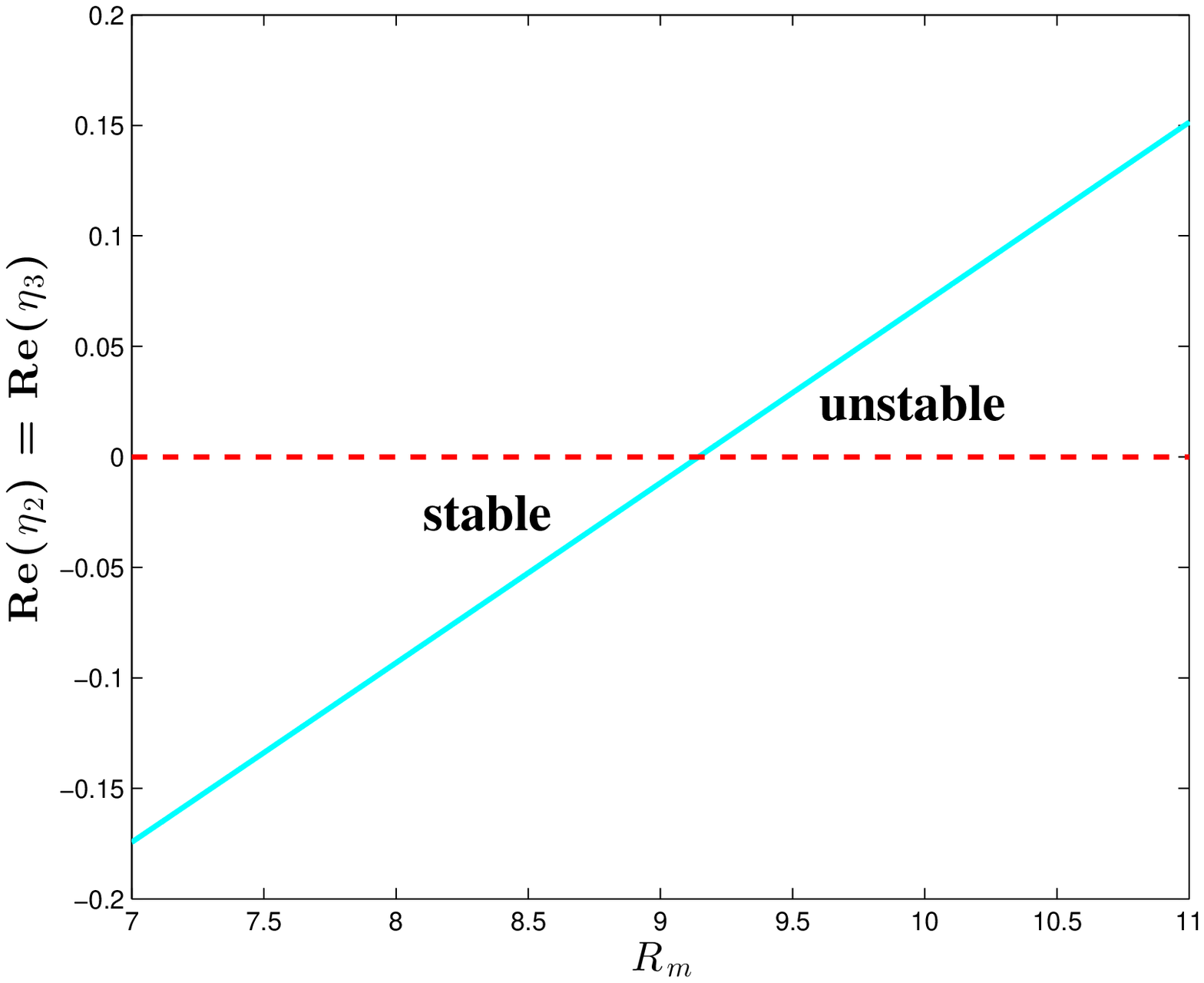}}
\hspace{-0.25in}
\subfigure[Imaginary part of eigenvalues $\eta_2$ and $\eta_3$]{\includegraphics[width=6.75cm]{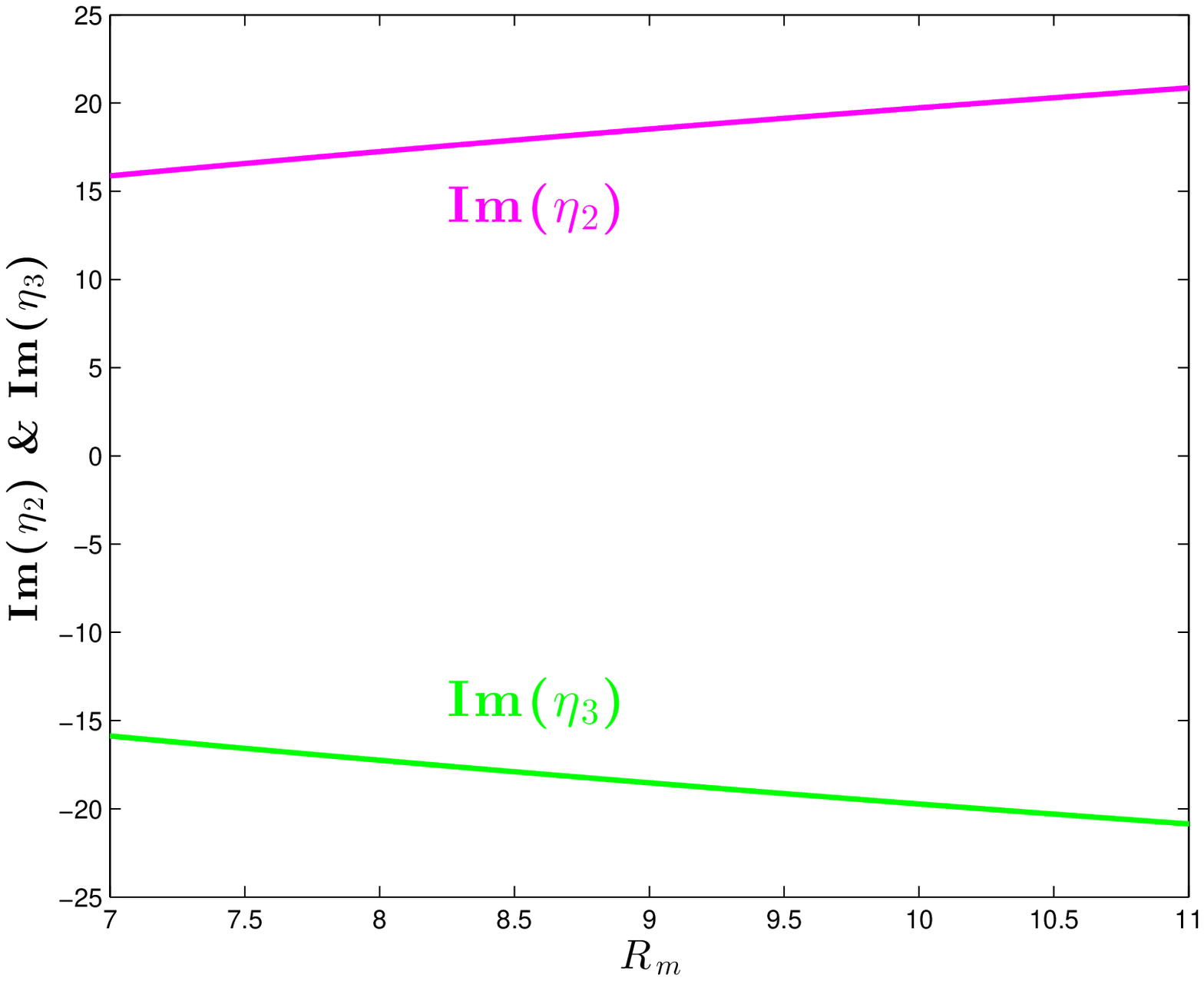}}
\caption{\footnotesize{Real and imaginary parts of eigenvalues $\eta_2$ and $\eta_3$ as functions of the bifurcation parameter $R_m$ with $R_0 = 1.25$ fixed.  These curves cross the imaginary axis at $R_m = R_m^{*} \approx 9$ when using parameter values from Table \ref{List}.}
}
\label{eigens}
\end{figure}

The proof of Lemma \ref{d0pos} is contained within Appendix \ref{appB}.
Next, denote the corresponding roots of the characteristic polynomial \eqref{charpoly} by $\eta_i(R_m)$, $i = 1, 2, 3$.  It follows that the mapping $R_m \rightarrow \eta_i(R_m)$ is smooth, as displayed within Figure \ref{eigens}.
To begin our study of the eigenvalues of this system, we first show the existence of a negative real root within the local stability region of $E_i^+$.
\begin{lemma}
 For any $R_m > 0$ within the region of existence for $E_i^{+}$, the characteristic polynomial possesses at least one real, negative root. Additionally, all real roots are negative.
\label{negreal}
\end{lemma}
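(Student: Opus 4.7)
The plan is to exploit the fact that all coefficients of the characteristic polynomial \eqref{charpoly} are strictly positive within the region of existence of $E_i^+$. By assumption, $\alpha_1, \alpha_2, R_0 > 0$, which immediately gives $d_1 = (\alpha_1 + \alpha_2)R_0 > 0$ and $d_2 = \alpha_1 + \alpha_2 + R_0 > 0$. Using the assumption that parameters lie within the region of existence of $E_i^+$, Lemma \ref{d0pos} together with the expression \eqref{d0} yields $d_0(R_m) > 0$. Hence $p(\eta) := \eta^3 + d_2 \eta^2 + d_1 \eta + d_0(R_m)$ has positive coefficients throughout the parameter regime under consideration.

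With this in hand, I would first establish the existence of a negative real root via a direct application of the intermediate value theorem. Since $p(0) = d_0(R_m) > 0$ and $\lim_{\eta \to -\infty} p(\eta) = -\infty$, continuity of $p$ guarantees a root in $(-\infty, 0)$. This immediately yields the first claim.

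Next, to show that every real root is strictly negative, I would argue by contradiction. Suppose $\eta_0 \geq 0$ is a real root. Then every term in the evaluation
\begin{equation*}
p(\eta_0) = \eta_0^3 + d_2 \eta_0^2 + d_1 \eta_0 + d_0(R_m)
\end{equation*}
is nonnegative, and the constant term $d_0(R_m)$ is strictly positive. Thus $p(\eta_0) \geq d_0(R_m) > 0$, which contradicts the assumption that $\eta_0$ is a root. Consequently no nonnegative real root can exist, so all real roots are negative, completing the proof.

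There is essentially no obstacle here: the argument is a straightforward sign analysis of a cubic with positive coefficients. The only substantive input is Lemma \ref{d0pos} to guarantee $d_0(R_m) > 0$, which has already been established independently, so the entire proof reduces to invoking the intermediate value theorem for the existence statement and a one-line sign argument for the nonexistence of nonnegative real roots.
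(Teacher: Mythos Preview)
Your proposal is correct and takes essentially the same approach as the paper: both arguments rest on the observation (via Lemma~\ref{d0pos}) that all coefficients $d_0, d_1, d_2$ are strictly positive, from which the sign argument immediately rules out nonnegative real roots. The only cosmetic difference is that the paper invokes the Fundamental Theorem of Algebra (a real cubic must have at least one real root, hence a negative one) while you use the intermediate value theorem directly on $p(0) > 0$ and $p(-\infty) = -\infty$; these are equivalent routes to the same conclusion.
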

\begin{proof}
By the Fundamental Theorem of Algebra, the above characteristic polynomial will have exactly three roots.  Thus, it will either possess one real root and two complex roots, or three real roots.
Using Lemma \ref{d0pos}, we find $d_0, d_1, d_2 > 0$ in the region of existence for $E_i^{+}$, and thus any real root must be negative. Hence, the characteristic polynomial has either three negative real roots or one negative real root and two complex conjugate roots. In either case, the conclusions follow.  
\end{proof}

Next, we define
$$D_2(R_m) = d_1d_2 - d_0(R_m)$$
to be the second Hurwitz determinant of the characteristic polynomial \eqref{charpoly}.
Further, for a fixed value of $R_0 > 1$, let $R_m^*$ be the value of $R_m$ in the region of existence of $E_i^{+}$ such that $D_2(R_m^*) = 0$.
We will see that the curve $D_2(R_m) = 0$ corresponds to the right edge of the $E_i^+$ stability region in Figure \ref{las-stable}.
The following result shows that as the value of $R_m$ is increased beyond $R_m^*$, so that the point $(R_m, R_0)$ lies outside of the $E_i^+$ stability region, the complex eigenvalues become purely imaginary.
\begin{lemma}
At the point $R_m = R_m^*$, two eigenvalues of $\nabla f (E_i^{+})$, denoted $\eta_2(R_m^*)$ and $\eta_3(R_m^*)$, are purely imaginary and conjugate, while the third, $\eta_1(R_m^*)$, is real and negative.
\label{R0star}
\end{lemma}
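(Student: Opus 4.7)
The plan is to exploit the defining condition $D_2(R_m^*) = 0$, which by definition gives $d_0(R_m^*) = d_1 d_2$, and then simply factor the characteristic polynomial. Specifically, substituting this relation into \eqref{charpoly} yields
\begin{equation*}
\eta^3 + d_2 \eta^2 + d_1 \eta + d_1 d_2 = \eta^2 (\eta + d_2) + d_1(\eta + d_2) = (\eta^2 + d_1)(\eta + d_2).
\end{equation*}
Hence the three roots of the characteristic polynomial at $R_m = R_m^*$ are $\eta = -d_2$ and $\eta = \pm i \sqrt{d_1}$.

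The second step is to read off the required sign and reality information from the expressions for $d_1$ and $d_2$. Since $d_2 = \alpha_1 + \alpha_2 + R_0 > 0$, the real root $\eta_1(R_m^*) = -d_2$ is negative. Similarly, $d_1 = (\alpha_1+\alpha_2)R_0 > 0$, so $\sqrt{d_1}$ is a positive real number and the remaining two roots $\eta_{2,3}(R_m^*) = \pm i\sqrt{d_1}$ are purely imaginary complex conjugates. The labeling of $\eta_1$ as the real root is consistent with Lemma \ref{negreal}, which guarantees the existence of (at least) one negative real root throughout the region of existence of $E_i^+$; by continuity of the roots in $R_m$, this real root persists at $R_m^*$ and must be the factor $-d_2$ in the factorization above.

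There is essentially no obstacle here beyond recognizing the algebraic identity $(\eta^2 + d_1)(\eta + d_2) = \eta^3 + d_2 \eta^2 + d_1 \eta + d_1 d_2$, which is the classical mechanism by which vanishing of the second Hurwitz determinant forces a pair of purely imaginary eigenvalues for a cubic. The one minor subtlety worth a sentence in the write-up is verifying that this factorization holds exactly (not just that $\pm i\sqrt{d_1}$ are roots), which is immediate because the cubic and the product agree coefficient-by-coefficient once $d_0(R_m^*) = d_1 d_2$ is imposed. The positivity of $d_1$ and $d_2$, needed to conclude that the imaginary roots are nonzero and the real root is strictly negative, follows directly from the positivity of $\alpha_1$, $\alpha_2$, and $R_0$, and requires no appeal to the region-of-existence conditions (those were already used in Lemma \ref{d0pos} only to ensure $d_0 > 0$ and hence that $R_m^*$ sits where $d_0 = d_1 d_2 > 0$ is attainable).
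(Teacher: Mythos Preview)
Your proof is correct and follows essentially the same approach as the paper: use $D_2(R_m^*)=0$ to replace $d_0$ by $d_1 d_2$, factor the cubic as $(\eta^2+d_1)(\eta+d_2)$, and read off the roots. Your write-up is slightly more explicit about the factorization step and the positivity of $d_1,d_2$, and you add a remark tying the real root back to Lemma~\ref{negreal}, but the substance is identical.
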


\begin{proof}
At the value $R_m = R_m^*$,  we have $D_2(R_m^*) = 0$ and thus $d_0(R_m^*) = d_1 d_2$.  Therefore, \eqref{charpoly} can be written as
$$\eta^{3} + d_2 \eta^{2} + d_1 \eta + d_1d_2 = 0$$
or
$$(\eta^2 + d_1) ( \eta + d_2)  = 0,$$
\noindent which possess the roots
\begin{align*}
\eta_1 &= -d_2 = - (\alpha_1 + \alpha_2 + R_0)\\
\eta_2 &= i \sqrt{d_1} = i \sqrt{(\alpha_1 + \alpha_2) R_0}\\
\eta_3 &= - i \sqrt{d_1} = - i \sqrt{(\alpha_1 + \alpha_2) R_0}.
\end{align*}
\end{proof}


Finally, with this understanding of $\eta_1$, $\eta_2$, and $\eta_3$, we can prove the existence of a Hopf bifurcation across the curve $D_2(R_m) = 0$.

\begin{theorem}
For $R_0 > 1$, a Hopf bifurcation occurs at the critical value $R_m = R_m^*$.  In particular, as the value of $R_m$ crosses $R_m^*$, the equilibrium point $E_i^{+}$ becomes unstable and a stable limit cycle branches from the equilibrium.
\end{theorem}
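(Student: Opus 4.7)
The plan is to apply the Hopf bifurcation theorem to the one-parameter family of linearizations $\nabla f(E_i^+)$ as $R_m$ varies near $R_m^*$, with the remaining parameters held fixed. Lemma \ref{R0star} already supplies the spectral prerequisite: at $R_m = R_m^*$ the Jacobian possesses a simple pair of purely imaginary eigenvalues $\eta_{2,3}(R_m^*) = \pm i\sqrt{d_1}$, while the third eigenvalue $\eta_1(R_m^*) = -d_2$ is real and strictly negative, hence bounded away from the imaginary axis. What remains is therefore (a) the transversality condition $\frac{d}{dR_m}\mathrm{Re}(\eta_2(R_m))\big|_{R_m=R_m^*} \neq 0$, and (b) a non-degeneracy condition that selects the criticality of the bifurcation.

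For (a), the natural approach is to differentiate the characteristic polynomial \eqref{charpoly} implicitly with respect to $R_m$ at $(R_m,\eta) = (R_m^*, i\sqrt{d_1})$, giving
\[
\frac{d\eta}{dR_m} = \frac{-d_0'(R_m)}{3\eta^2 + 2d_2\eta + d_1}.
\]
Substituting $\eta^2 = -d_1$, rationalising, and extracting the real part yields
\[
\mathrm{Re}\!\left(\frac{d\eta_2}{dR_m}\right)\bigg|_{R_m=R_m^*} = \frac{d_0'(R_m^*)}{2(d_1 + d_2^2)}.
\]
Since the denominator is manifestly positive, transversality reduces to $d_0'(R_m^*) \neq 0$. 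I would establish this by recalling that $R_m^*$ is by construction a zero of $D_2(R_m) = d_1 d_2 - d_0(R_m)$ and by implicitly differentiating the quadratic \eqref{Vbar} to obtain $\ol{V}_+'(R_m)$, then substituting into \eqref{d0} to write $d_0'(R_m^*)$ as an explicit function of $\alpha_1,\alpha_2,\beta,R_0$, and $\ol{V}_+$. Because $D_2 > 0$ inside the stability region of $E_i^+$ (by the Routh--Hurwitz criterion used in Theorem \ref{T1}) while $D_2 < 0$ immediately across the curve, $d_0$ must increase transversally through $R_m^*$, giving $d_0'(R_m^*) > 0$ and hence $\mathrm{Re}(d\eta_2/dR_m)|_{R_m^*} > 0$. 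This also confirms that the complex eigenvalues cross the imaginary axis from left to right, so that $E_i^+$ loses stability as $R_m$ increases past $R_m^*$, as claimed.

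For (b), I would apply the standard normal-form reduction on the centre manifold at $R_m^*$. One projects \eqref{AcuteNd} onto the two-dimensional invariant eigenspace spanned by the real and imaginary parts of a right eigenvector of $\nabla f(E_i^+)$ corresponding to $i\sqrt{d_1}$, and evaluates the first Lyapunov coefficient $\ell_1$ using the closed-form expression (in, e.g., Kuznetsov's formulation) built from the quadratic and cubic Taylor coefficients of the vector field at $E_i^+$ together with the left and right null eigenvectors of $\nabla f(E_i^+) \mp i\sqrt{d_1}\,I$. The sign of $\ell_1$ selects the criticality: $\ell_1 < 0$ yields a supercritical bifurcation with an asymptotically stable limit cycle, while $\ell_1 > 0$ yields a subcritical one. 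This is the main obstacle, since the Michaelis--Menten term contributes nontrivial quadratic and cubic derivatives via its Taylor expansion about $\ol{V}_+$, so simplifying $\ell_1$ into a clean closed form in $\alpha_1,\alpha_2,\beta,R_0,\ol{V}_+$ is intricate though purely mechanical; I would carry it out symbolically in a computer algebra system and verify $\ell_1 < 0$ throughout the biologically relevant region. As an independent consistency check that is natural given the phrasing of the theorem, I would also simulate \eqref{AcuteNd} for parameter values immediately past $R_m^*$ and confirm convergence to a stable periodic orbit, which directly corroborates the supercritical conclusion.
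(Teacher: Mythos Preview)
Your transversality argument in (a) is essentially the paper's proof: implicit differentiation of \eqref{charpoly}, evaluation at $\eta_2 = i\sqrt{d_1}$, and reduction to $d_0'(R_m^*)\neq 0$. The paper then verifies the sign of $d_0'$ by differentiating \eqref{d0} and \eqref{Vbar} explicitly, obtaining
\[
d_0'(R_m)=\frac{\beta}{1+\beta\ol{V}_+}\Bigl[-d_0(R_m)+2\alpha_1\alpha_2\ol{V}_+\Bigr]\frac{d\ol{V}_+}{dR_m},
\qquad
\frac{d\ol{V}_+}{dR_m}=\frac{\ol{V}_+^{\,2}}{\beta\ol{V}_+^{\,2}+R_0-1},
\]
and checking each factor is strictly positive for $R_0>1$. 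Your alternative ``$D_2>0$ on one side, $D_2<0$ on the other, hence $d_0$ increases transversally'' is circular as written: knowing that $D_2$ \emph{crosses} zero at $R_m^*$ rather than merely touching it is exactly the content of $d_0'(R_m^*)\neq 0$. Since you also propose the explicit computation, this is a minor slip rather than a gap.

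The substantive difference is (b). The paper's proof \emph{stops after transversality}; it does not compute the first Lyapunov coefficient or otherwise establish supercriticality analytically. The assertion that the branching limit cycle is stable is supported only by the numerical simulations in Figures~\ref{stableEi}--\ref{largeorbit}. Your proposed centre-manifold reduction and symbolic evaluation of $\ell_1$ would therefore go strictly beyond what the paper proves. If your goal is to match the paper, transversality is all that is argued; if your goal is to prove the theorem as stated, your plan for (b) is the correct route and would close a gap the paper leaves open.
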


\begin{proof}
With Lemmas \ref{negreal} and \ref{R0star}, we need only show the transversality condition, $\frac{d \eta_2}{dR_m}(R_m^*) \neq 0$ to prove the existence of a Hopf bifurcation.  Using the characteristic polynomial \eqref{charpoly} evaluated at $\eta_2(R_m)$, we take the $R_m$-derivative to find
$$\frac{d \eta_2}{dR_m} \left[ 3\eta_2^2 + 2d_2 \eta_2 + d_1\right] + d_0^\prime(R_m) = 0$$
which yields
$$\frac{d \eta_2}{dR_m}(R_m) = -\frac{d_0'(R_m)}{3 \eta_2^{2}+2 d_2\eta_2 + d_1}.$$
Evaluating the derivative at $R_m = R_m^*$ and substituting the known value for 
$\eta_2(R_m^*) = i \sqrt{d_1}$
this becomes
$$\frac{d \eta_2}{dR_m}(R_m^*) = \frac{ -d_0'(R_m^*)}{-2 d_1 +2\textit{i} d_2\sqrt{d_1}}.$$
Thus, multiplying by the conjugate and taking the real part, we find
$$\mathfrak{Re}\left(\frac{d \eta_2}{dR_m} (R_m^*)\right) = \frac{d_0'(R_m^*)}{2[d_1 + d_2^2]}.$$
From this, it follows that $\mathfrak{Re}\left(\frac{d \eta_2}{dR_m}(R_m^*)\right) \neq 0$ if and only if $d_0'(R_m^*) \neq 0$.
Using \eqref{d0}, we compute this term as
$$d_0'(R_m) = \frac{\beta}{1 + \beta \overline{V}_+} \left [ -d_0(R_m) + 2\alpha_1\alpha_2 \overline{V}_+ \right] \frac{d\overline{V}_+}{dR_m}.$$ 
Now, using \eqref{Vbar} we compute $\frac{d\overline{V}_+}{dR_m}$ so that
$$2\beta \overline{V} \frac{d\overline{V}_+}{dR_m} - \overline{V}_+ + \left [ \beta(1 - R_0) + 1-R_m \right ] \frac{d\overline{V}_+}{dR_m} = 0,$$
and after some algebra and use of \eqref{Vbar}, this implies
$$\frac{d\overline{V}_+}{dR_m} = \frac{\overline{V}_+}{ 2\beta \overline{V}_+ + \beta(1 - R_0) + 1-R_m} = \frac{\overline{V}_+^2}{\beta \overline{V}_+^2 + R_ 0 - 1}.$$
Since $\overline{V}_+ > 0$ for $R_0 > 1$, we conclude that this term is strictly positive.
Finally, a brief computation using \eqref{d0} shows that the remaining term in $d_0'(R_m)$ satisfies
$$ - d_0(R_m) + 2\alpha_1\alpha_2 \overline{V}_+ = \frac{\alpha_1\alpha_2 \overline{V}_+}{1 + \beta \overline{V}_+} \left [R_m + 1 + \beta (R_0 - 1) \right ].$$
Since $R_0 > 1$ and $\overline{V}_+ > 0$, this term is strictly positive, as well.  Hence, we find $d_0'(R_m^*) \neq 0$ and the proof is complete.
\end{proof}

To supplement these analytical results, we also include representative simulations of the system for specific values of $R_0>1$ near the point $R_m = R_m^*$ within Figures \ref{stableEi} - \ref{largeorbit}.  In particular, Figure \ref{stableEi} demonstrates the stability of the infected steady state within the region in which only ${E_i}^{+}$ is locally asymptotically stable.  In such a case ($R_0 = 1.25$, $R_m = 6.5$ here), the complex eigenvalues of the Jacobian matrix evaluated at  ${E_i}^{+}$ possess negative real part, and Figure \ref{stableEi}(a) displays the dynamics of the corresponding solution spiraling inward toward the $E_i^+$ equilibrium. In Figure \ref{smallorbit}, the bifurcation parameter $R_m$ is adjusted so that $R_m \approx R_m^{*}$ and the location in the parameter plane is at the border of the region of $E_i^+$ stability and the unshaded region seen within Figure \ref{las-stable}. The associated complex eigenvalues now have real parts which approach zero, and the emergence of a periodic orbit in Figure \ref{smallorbit}(a) becomes more visible. Finally, $R_m$ is further increased so that the parameter plane location lies within the unshaded region.  The complex eigenvalues now have positive real part, and the solution settles into a periodic orbit as seen within Figure \ref{largeorbit}.

\begin{figure} 
\hspace{-0.2in}
\subfigure[TIV phase portrait]{\includegraphics[height = 5cm,width =6.75cm]{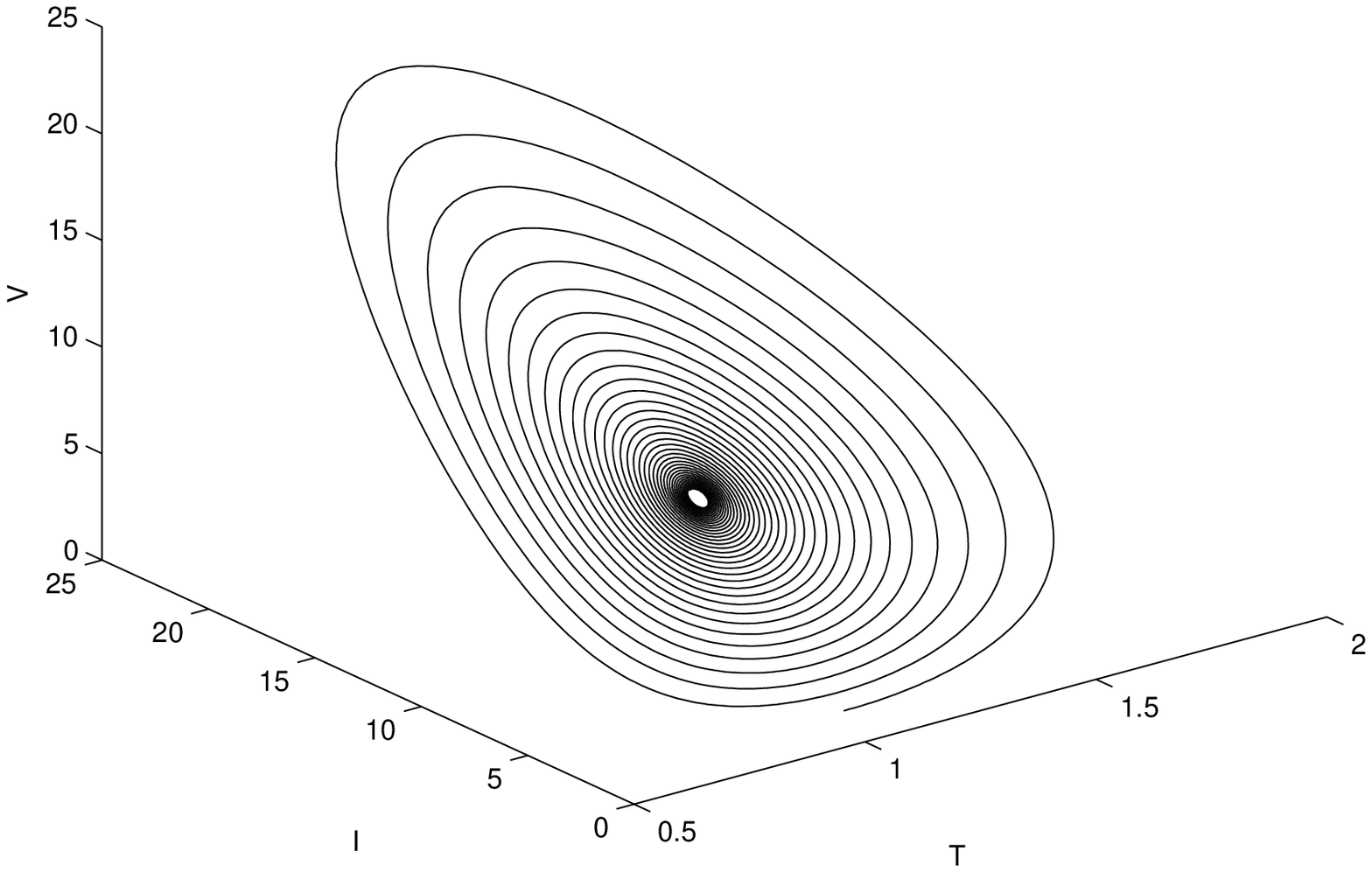}}
\hspace{-0.25in}
\subfigure[Uninfected T-cell population]{\includegraphics[height = 5cm,width =6.75cm]{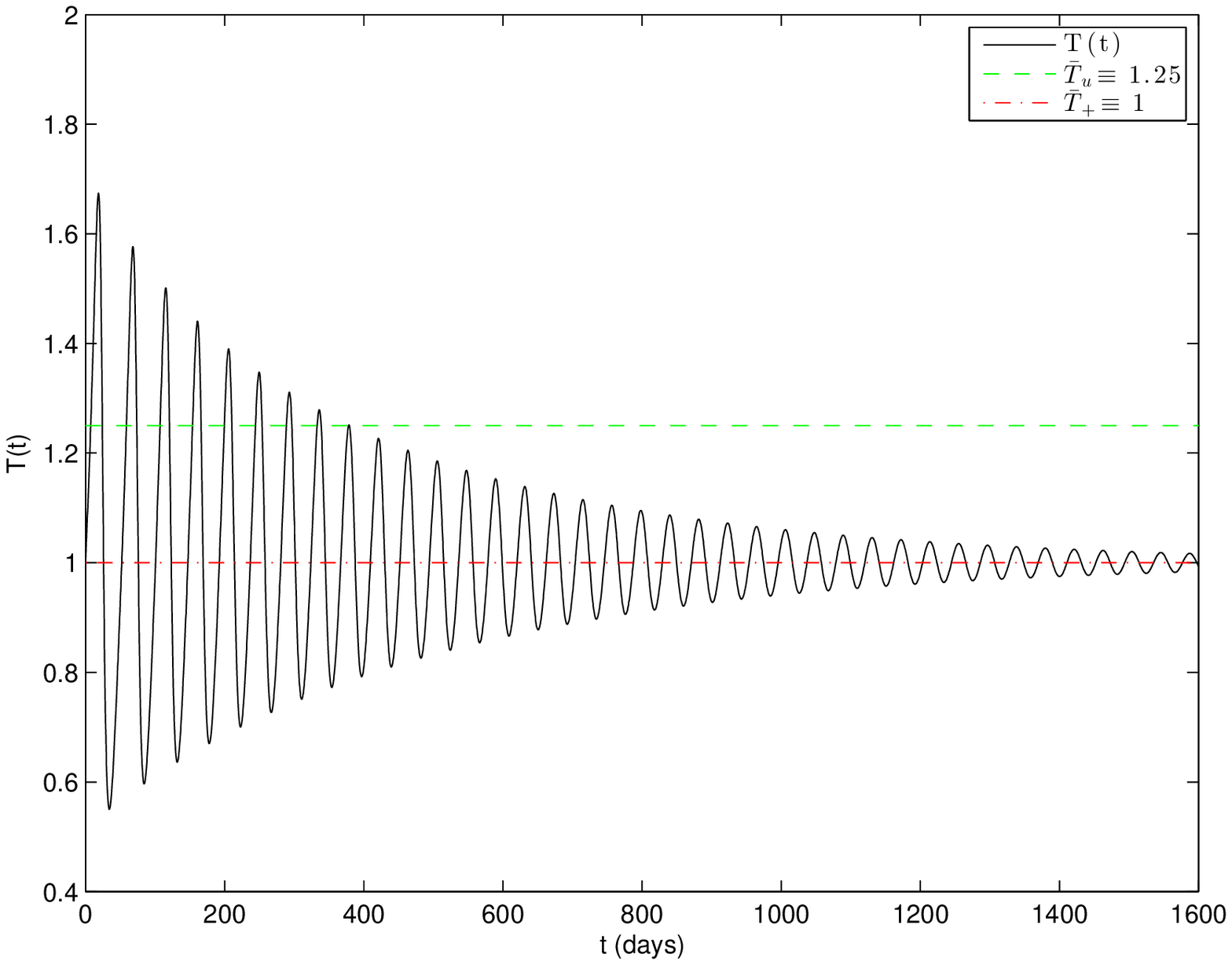}}\\
\hspace{-0.2in}
\subfigure[Virus population]{\includegraphics[height = 5cm,width =6.75cm]{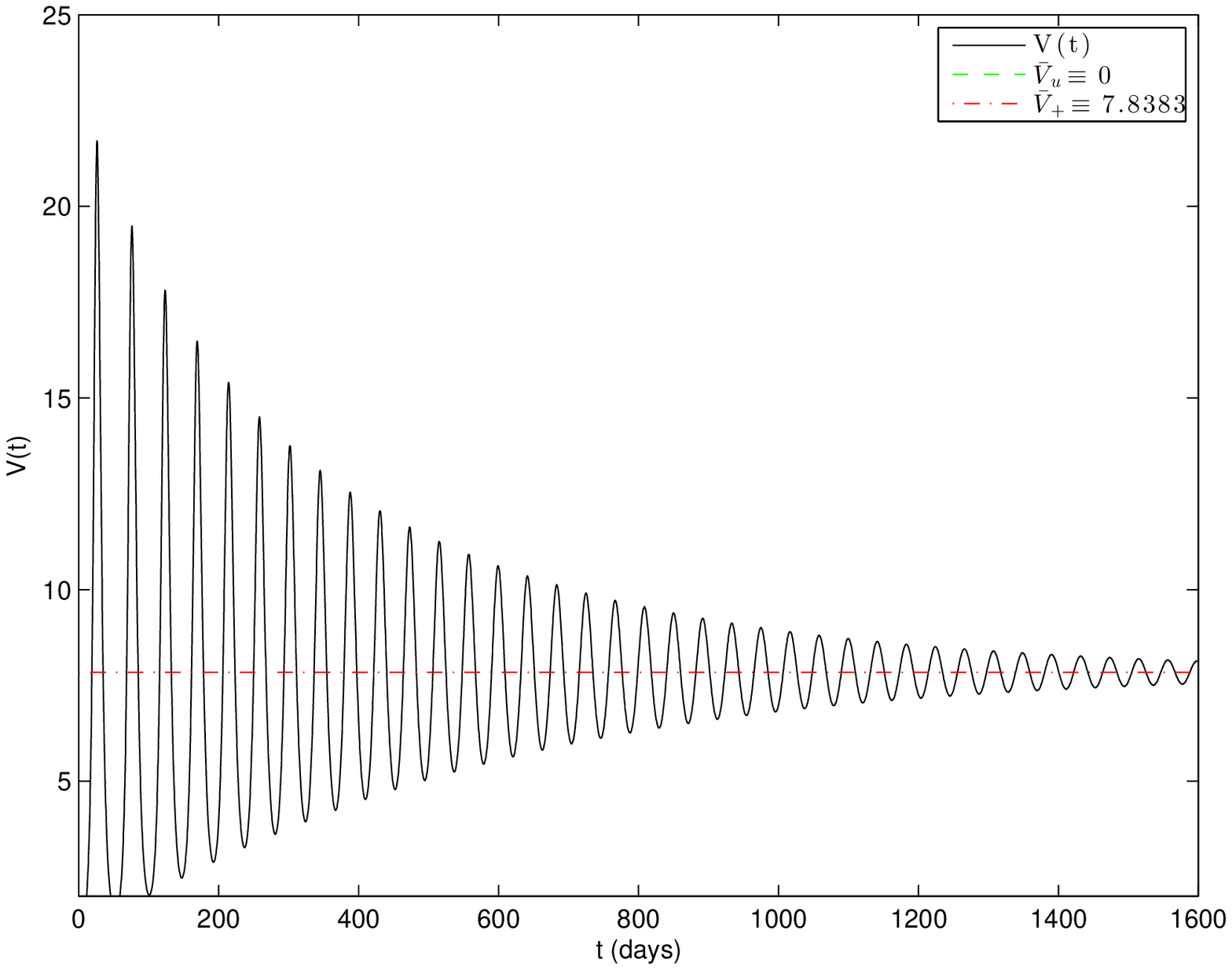}}
\hspace{-0.25in}
\subfigure[Infected T-cell population]{\includegraphics[height = 5cm,width =6.75cm]{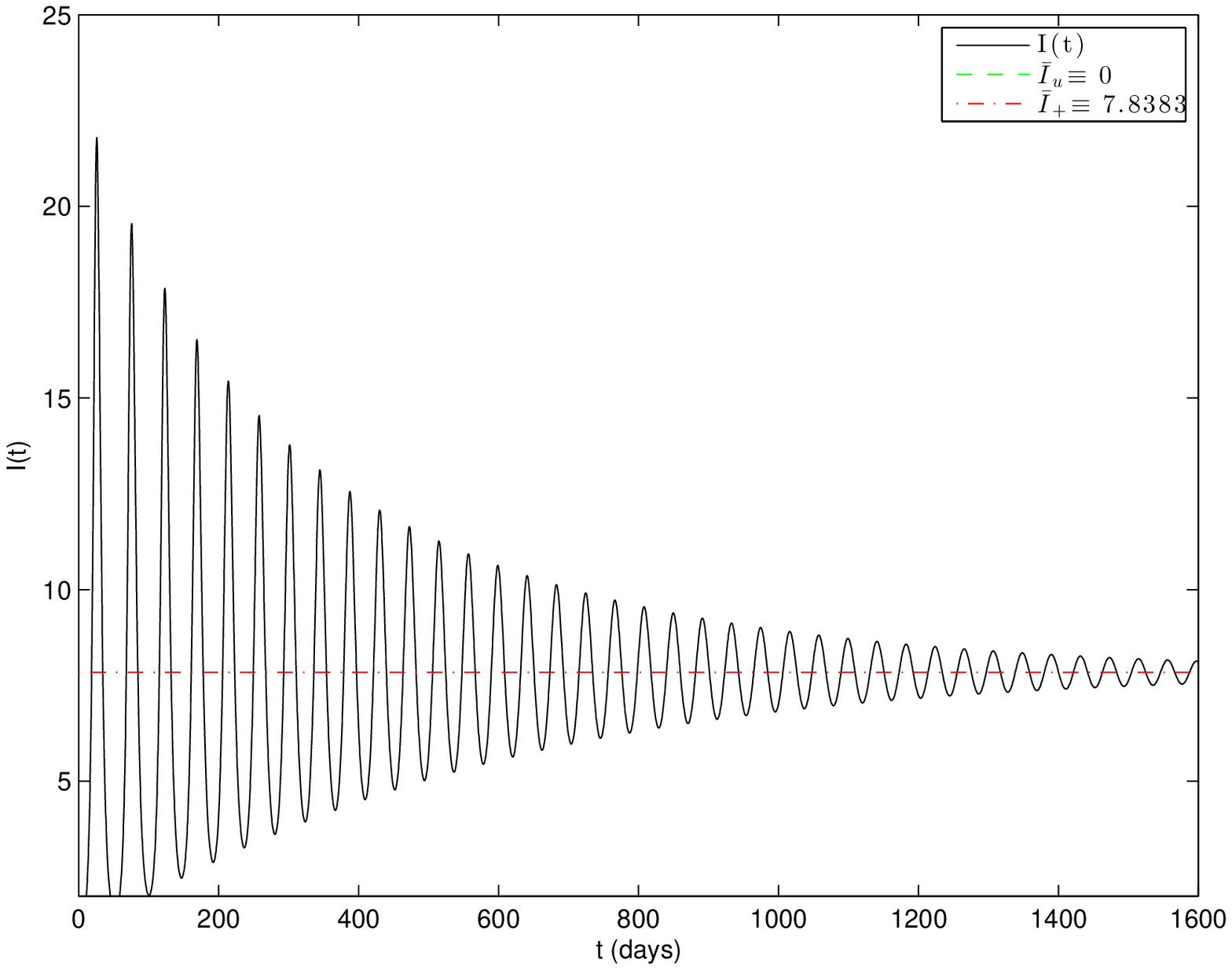}}
\caption{\footnotesize{Simulation of the dimensionless system for $R_m = 6.5$ and $R_0 = 1.25$. The parameter location is in the (dark) red region of Figure \ref{las-stable}. Hence, complex eigenvalues of the system linearized about $E_i^+$ are of the form $\alpha \pm i \beta$ where $\alpha < 0$, yielding a stable steady state. Green and red horizontal lines indicate corresponding population values of the $E_i^+$ and $E_u$ steady states, respectively.}
}
\label{stableEi}
\end{figure}

\begin{figure} 
\hspace{-0.2in}
\subfigure[TIV phase portrait]{\includegraphics[height = 5cm,width =6.75cm]{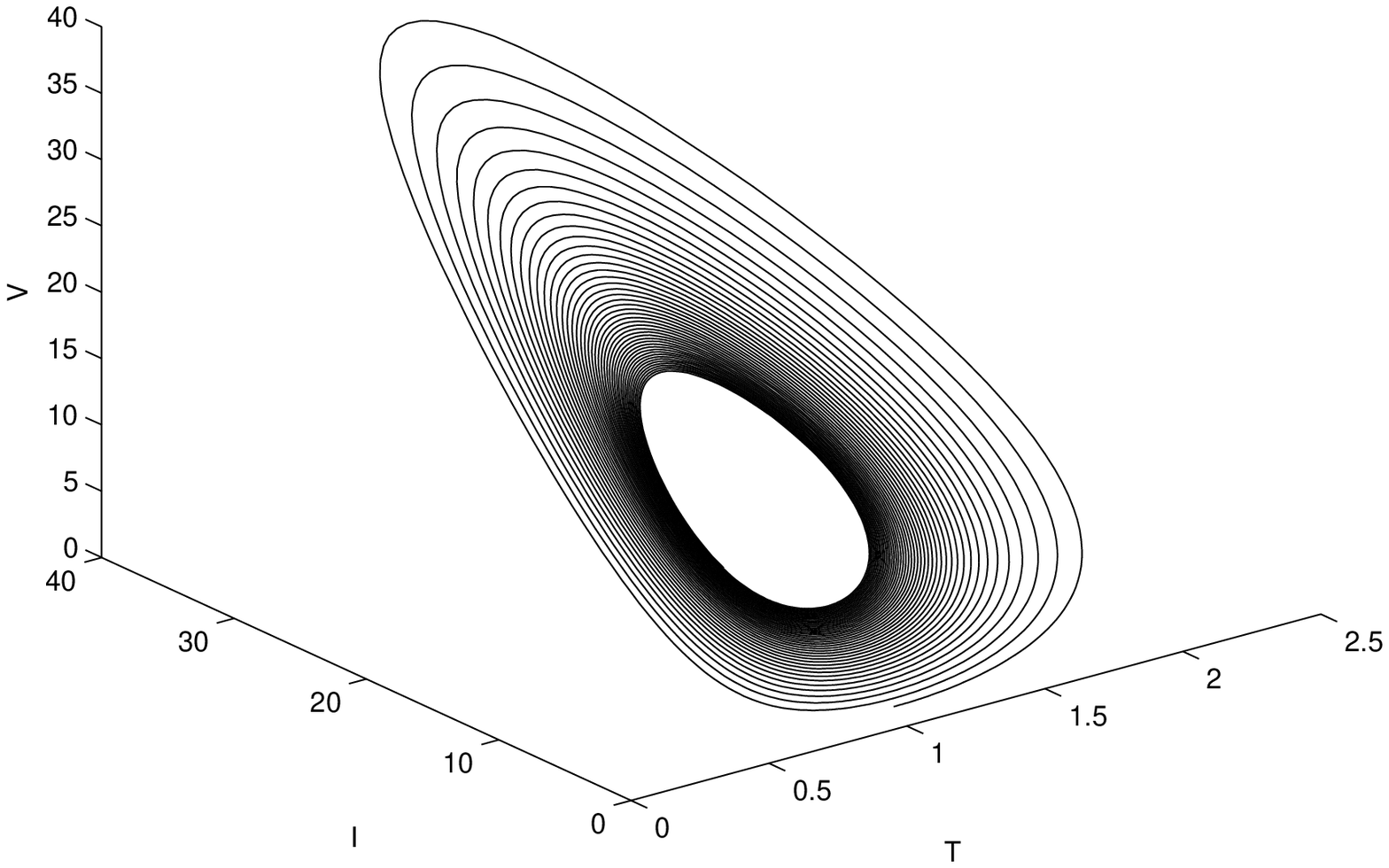}}
\hspace{-0.25in}
\subfigure[Uninfected T-cell population]{\includegraphics[height = 5cm,width =6.75cm]{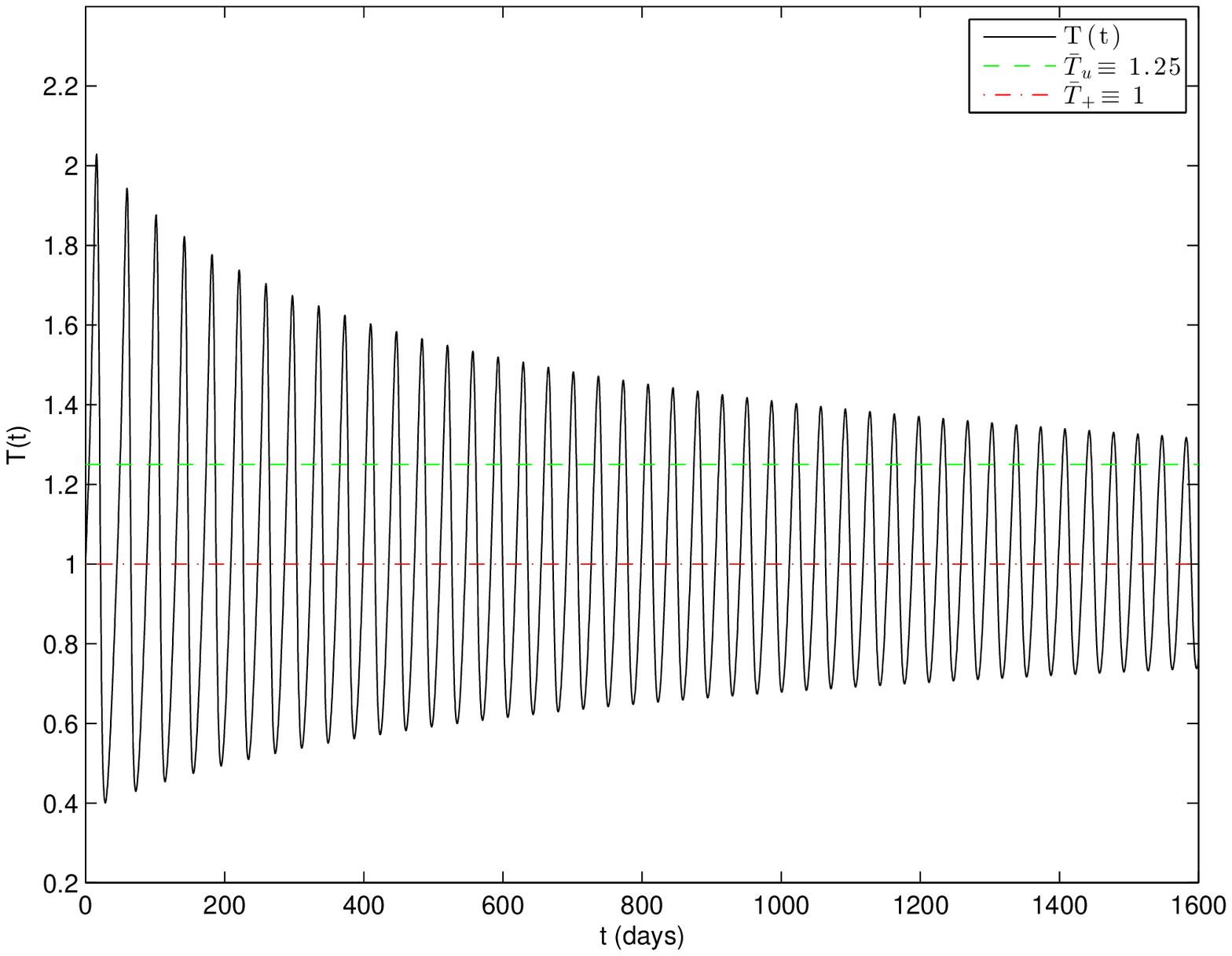}}\\
\hspace{-0.2in}
\subfigure[Virus population]{\includegraphics[height = 5cm,width =6.75cm]{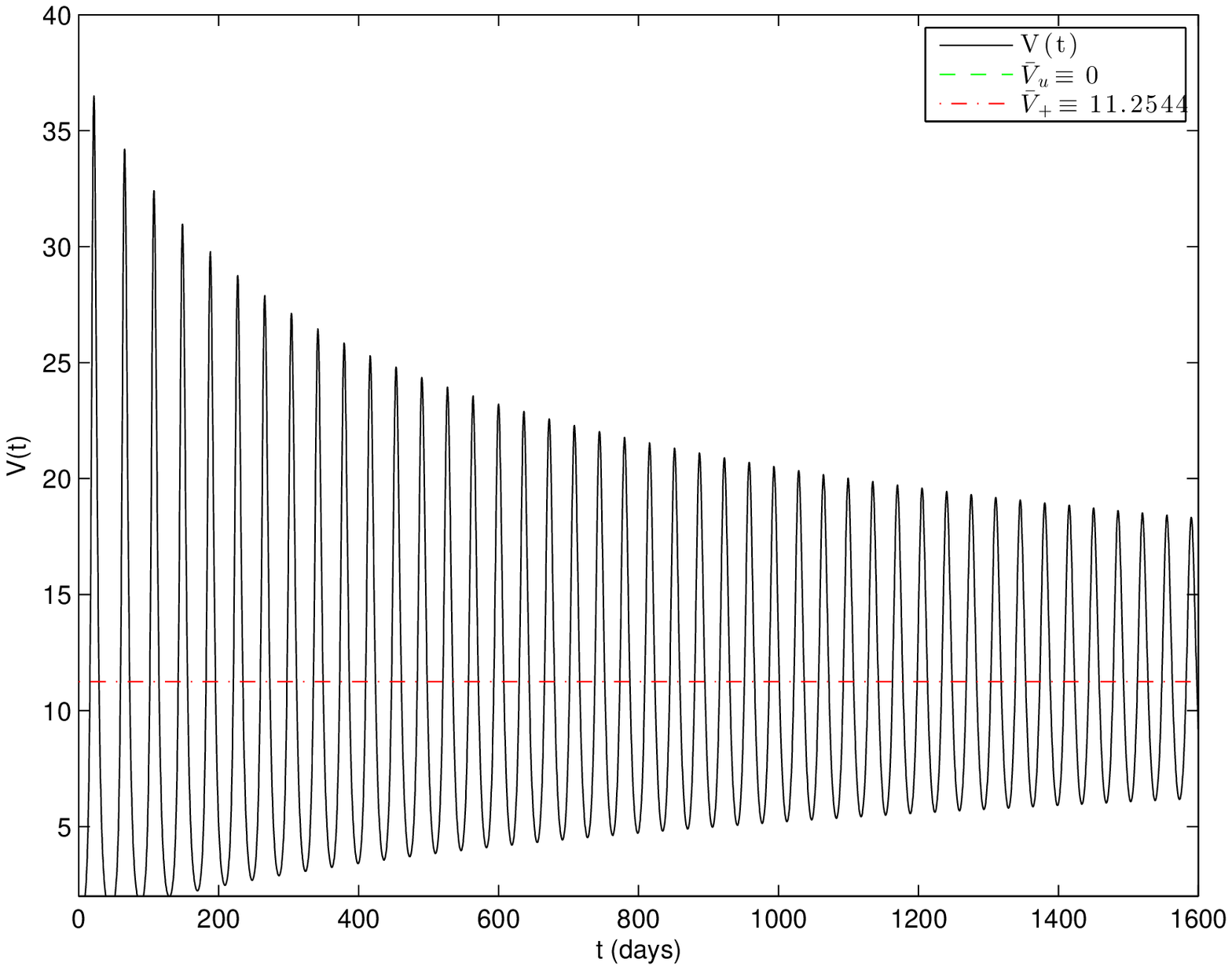}}
\hspace{-0.25in}
\subfigure[Infected T-cell population]{\includegraphics[height = 5cm,width =6.75cm]{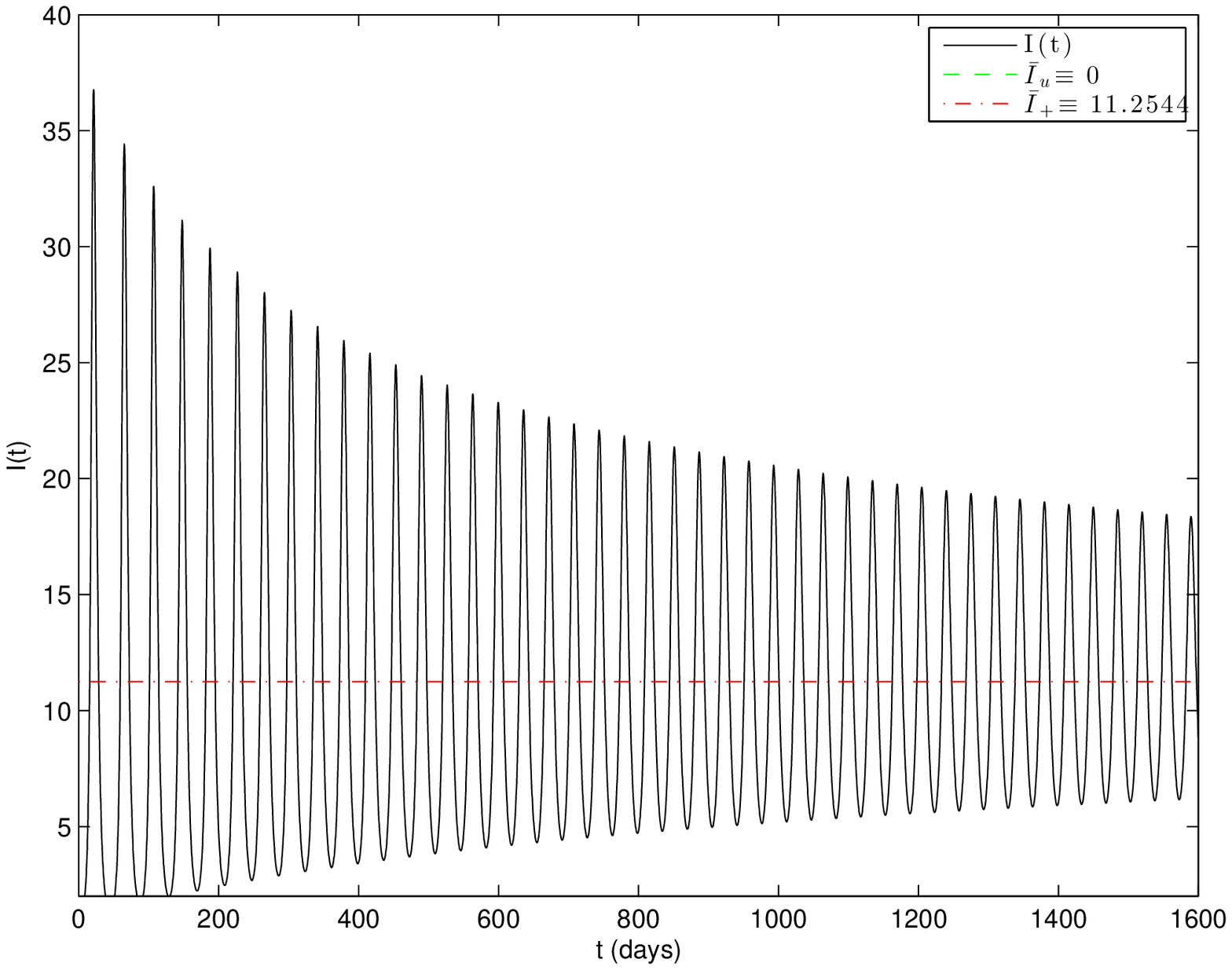}}
\caption{\footnotesize{Simulation of the dimensionless system for $R_m = 9$ and $R_0 = 1.25$. The parameter location is at the border of the (dark) red and unshaded regions of Figure \ref{las-stable}. Hence, complex eigenvalues of the system linearized about $E_i^+$ are of the form $\pm i \beta$, signaling a transition in the asymptotic behavior of solutions and the emergence of a stable orbit.}
}
\label{smallorbit}
\end{figure}

\begin{figure} %
\hspace{-0.2in}
\subfigure[TIV phase portrait]{\includegraphics[height = 5cm,width =6.75cm]{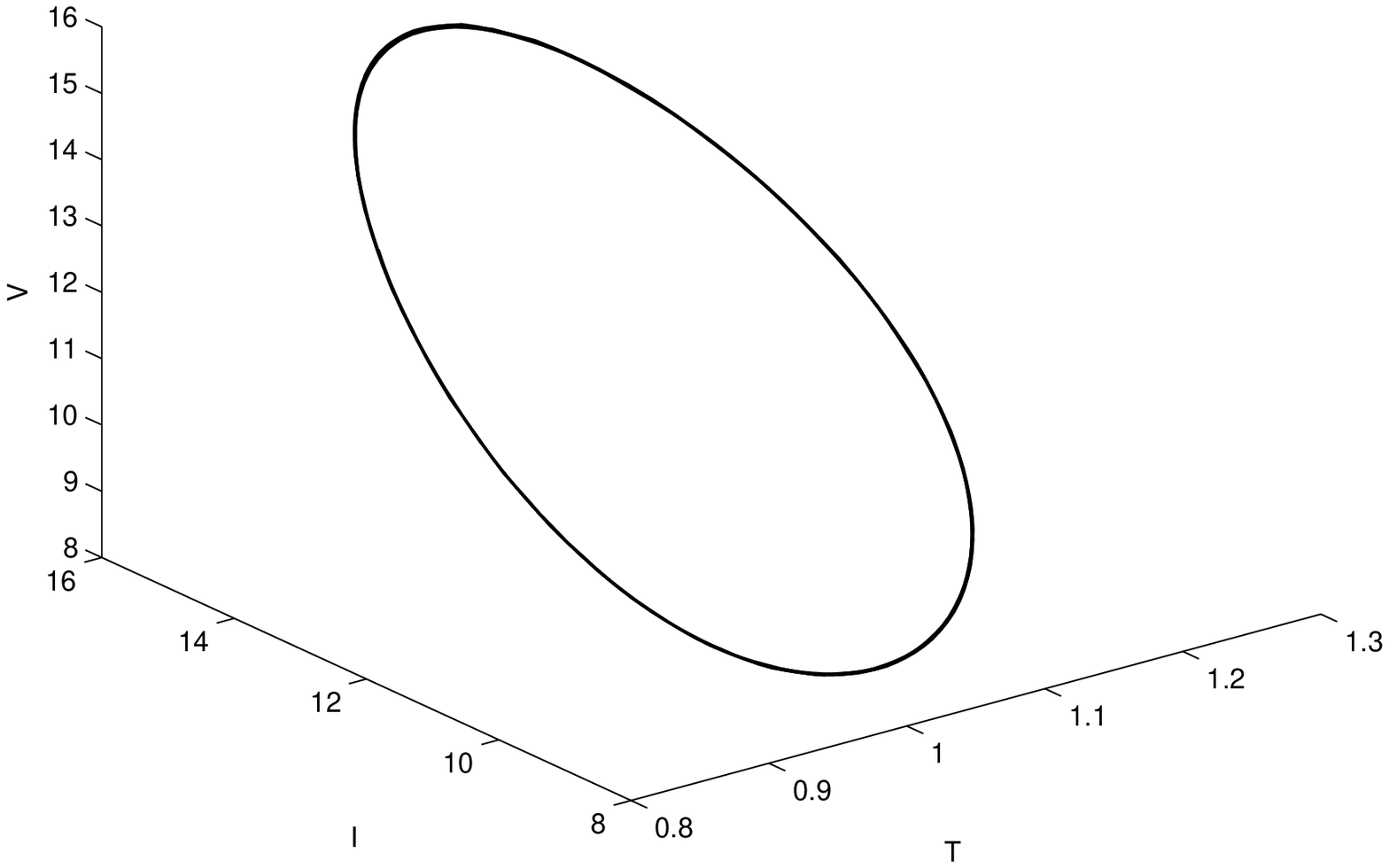}}
\hspace{-0.25in}
\subfigure[Uninfected T-cell population]{\includegraphics[height = 5cm,width =6.75cm]{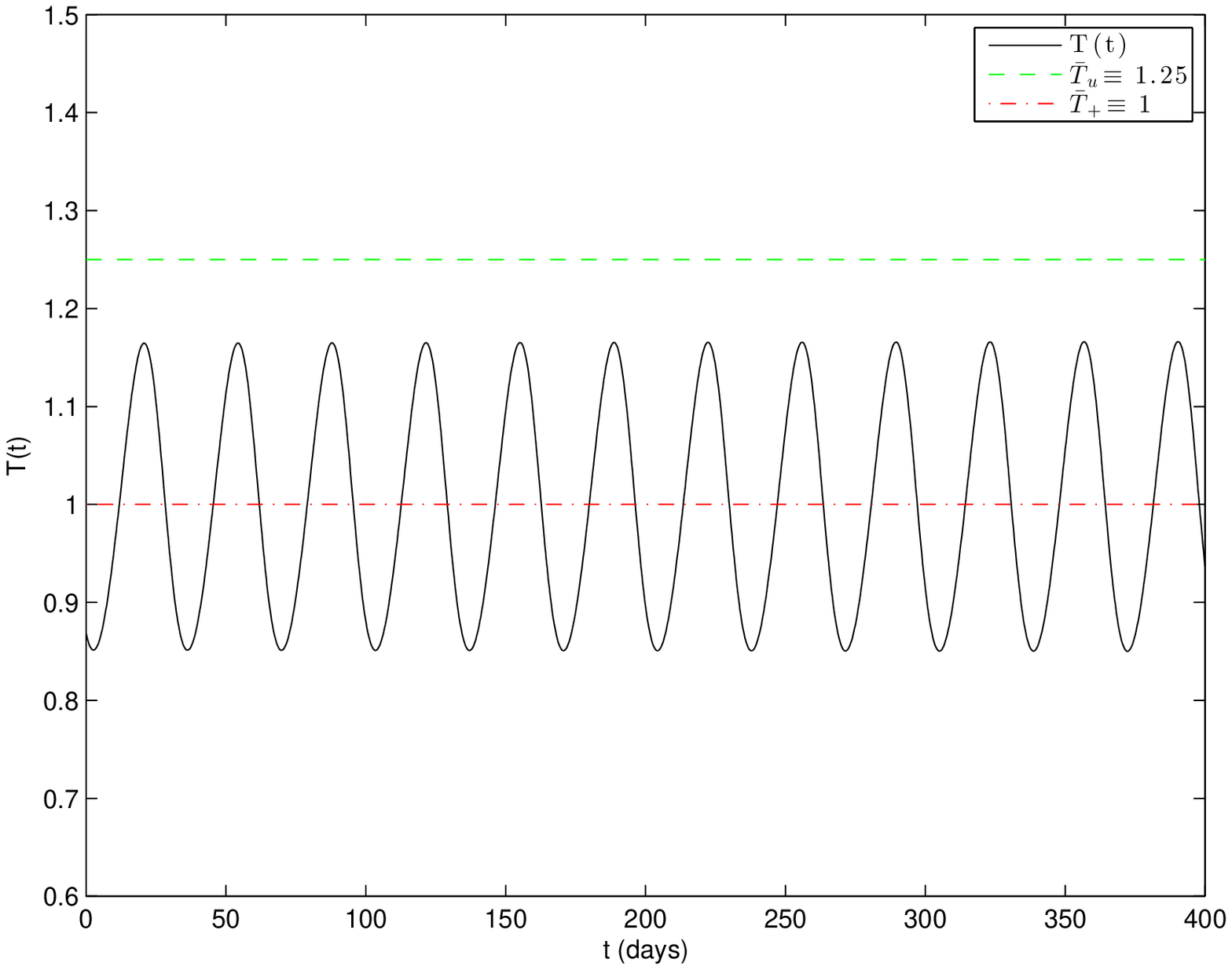}}\\
\hspace{-0.2in}
\subfigure[Virus population]{\includegraphics[height = 5cm,width =6.75cm]{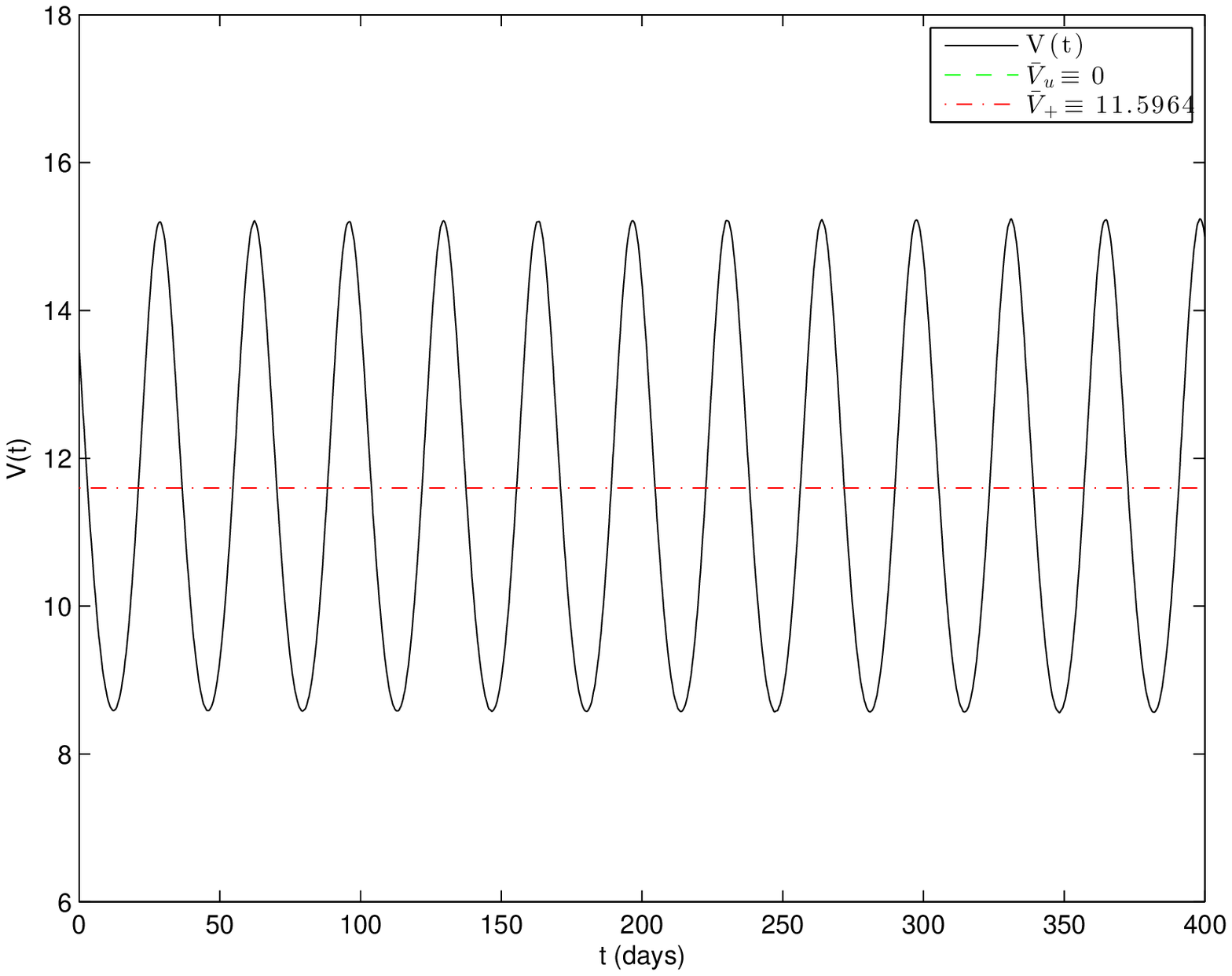}} 
\hspace{-0.25in}
\subfigure[Infected T-cell population]{\includegraphics[height = 5cm,width =6.75cm]{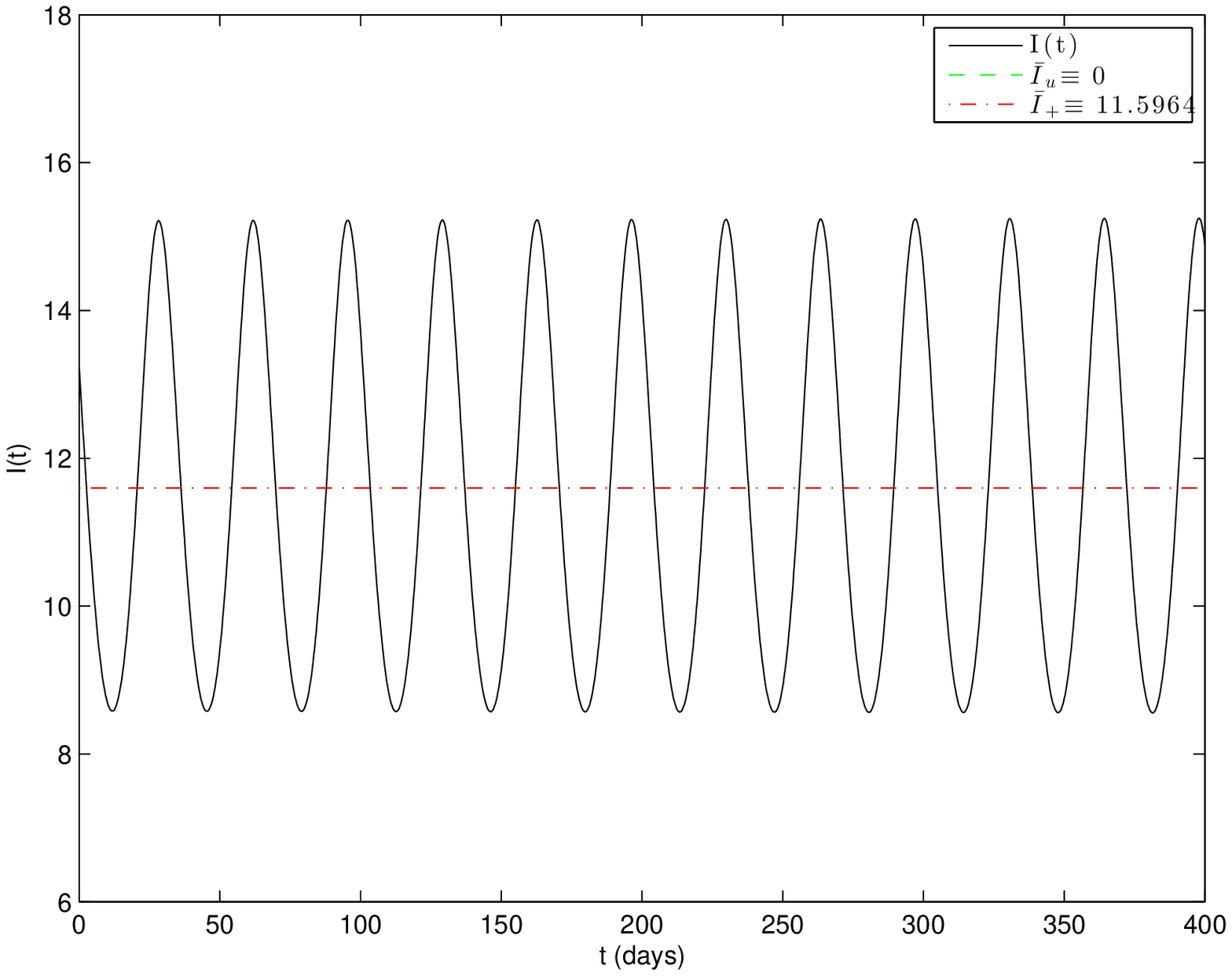}}
\caption{\footnotesize{Simulation of the dimensionless system for $R_m = 9.25$ and $R_0 = 1.25$. The parameters now lie within the unshaded region (Figure \ref{las-stable}). Complex eigenvalues of the system linearized about $E_i^+$ are of the form $\alpha \pm i \beta$ with $\alpha > 0$ yielding instability of the equilibrium. 
With initial conditions taken near the bifurcation, a stable periodic orbit appears.}
}
\label{largeorbit}
\end{figure}


\subsection{Conclusions \& Biological Implications}
In the standard model of viral dynamics both the infected and uninfected steady states are globally stable, and thus the stability of equilibria has no dependence on the initial viral load or healthy T-cell count of a host.  The current model features a region of bistability in which both the infection-free and persistent infection steady states are only locally stable.  Therefore, the initial conditions (along with parameter values) influence the long time asymptotic behavior of the system. Though the healthy T-cell count at the time of transmission has the strongest impact on the long term behavior (as shown by Figure \ref{results}), the initial viral load also affects the ability of the virus to establish a persistent infection.
Hence, the proposed model can account for differing infection dynamics that were displayed by clinical experiments due to variations in the initial size of the viral inoculum or initial strength of the T-cell count.  
Additionally, the model yields results that highlight the sensitivity of infection dynamics with respect to variations in initial data.  In particular, Figure \ref{results} shows that only certain ranges of the T-cell count promote viral infection, though their width may change with parameter values. This contradicts the general idea that an increase in CD4 T-cells, which serve as target cells for HIV, will necessarily give rise to a greater likelihood of infection, since within the parameter region of bistability, increasing $T_0$ can actually push the system from the infected state to an uninfected regime.  This result is consistent with the clinical immunosuppressant studies of \cite{Igarashi2, Endo} which found that in some cases, a depleted T-cell count could give rise to a greater possibility of infection. Therefore, differing initial strengths of the susceptible T-cell population may either promote or inhibit viral replication, and a ``sweet spot'' in the initial state (seen as the dark strips within Figure \ref{results}) appears to exist within which infection occurs.
In general, the refined model \eqref{Acute} further highlights the biological implications of T-cell homeostasis, namely that the propensity of the T-cell population to replenish itself, not merely at a constant rate but due to the appearance of a pathogen, can account for differing infection outcomes.

With the emergence of a Hopf bifurcation in \eqref{Acute}, there also exists a region of the $(R_m,R_0)$ plane within which solutions display stable oscillations, thereby mimicking observed biological behavior such as viral blips.  Such oscillatory blips are transient spikes in the size of the viral load which often occur in a patient undergoing antiretroviral therapy during the chronic stage of infection \cite{AlanRong}.  Various studies have found that 20\% to 60\% of patients with viral suppression experience viral load blips (depending on the antiretroviral regimen used and the frequency of viral load testing), and perhaps one-third of these experience repeated blips.  Within the proposed model, blips can occur from a sudden, but small, change in parameter values that increase $R_m$ or decrease $R_0$ from the region in which $E_i^+$ is stable into the region in which a periodic orbit becomes stable.
Such changes in parameter values may realistically arise from interruption or sudden alteration of antiretroviral therapy, which strongly influences the values of $k$ and $p$ in the model, and thus the dimensionless parameters $R_0$ and $R_m$, as well.
Hence, \eqref{Acute} may be extended to further explain phenomena during chronic infection when the effects of antiretroviral therapy can drastically alter parameter values in the $(R_m,R_0)$ plane.
Finally, the proposed model can also be generalized to describe the entire time course of infection \cite{Hadji, Hern} by incorporating additional components and biological effects, though a full dynamical analysis would likely be prohibitive in such a case.

\begin{bibdiv}
\begin{biblist}

\bib{Adams}{article}{
   author={Adams, B.},
   author={Banks, H.T.},
   author={Davidian, M.},
   author={Rosenberg , E.},
   title={Estimation and Prediction With HIV-Treatment Interruption Data},
   journal={Bull. Math. Biol. },
   volume={69},
   date={2007},
   number={2},
   pages={563-584},
   doi={10.1007/s11538-006-9140-6},
}

\bib{Banks}{article}{
   author={Banks, H. T.},
   author={Baraldi, R.},
   author={Cross, K.},
   author={Flores, K.},
   author={McChesney, C.},
   author={Poag, L.},
   author={Thorpe, E.},
   title={Uncertainty quantification in modeling HIV viral mechanics},
   journal={Math. Biosci. Eng.},
   volume={12},
   date={2015},
   number={5},
   pages={937--964},
   issn={1547-1063},
   review={\MR{3356519}},
   doi={10.3934/mbe.2015.12.937},
}

\bib{Bonhoeffer}{article}{
   author={Bonhoeffer, S.},
   author={Rembiszewski, M.},
   author={Ortiz, G.},
   author={Nixon, D.},
   title={Risks and benefits of structured antiretroviral drug therapy interruptions in HIV-1 infection},
   journal={AIDS},
   volume={14},
   date={2000},
   number={15},
   pages={2313--2322},
}

\bib{Catalfamo}{article}{
   author={Catalfamo, M.},
   author={Wilhelm, C.},
   author={Tcheung, L.},
   author={Proschan, M.},
   author={Friesen, T.},
   author={Park, J.},
   author={Adelsberger, A.},
   author={Baseler, M.},
   author={Maldarelli, F.},
   author={Davey, R.},
   author={Roby, G.},
   author={Rehm, C.},
   author={Lane, C.},
   title={CD4 and CD8 T-cell Immune Activation during Chronic HIV Infection: Roles of Homeostasis, HIV, Type I IFN, and IL-7},
   journal={J. Immunol},
   volume={186},
   date={2011},
   pages={2106--2116} ,
}

\bib{DS}{article}{
	author={De Leenheer, P.},
	author={Smith, H.L.},
	title={Virus dynamics: A global analysis}, 
	journal={SIAM J. Appl. Math.},
	volume={63},
	date={2003},
	pages={1313--1327}
}

\bib{Elaiw}{article}{
	author={Elaiw, A.M.},
	title={Global properties of a class of {HIV} models}, 
	journal={Nonlinear Analysis: Real World Applications},
	volume={11},
	date={2010},
	pages={2253--2263}
}

\bib{Endo}{article}{
   author={Endo, Y.},
   author={Igarashi, T.},
   author={Nishimura, Y.},
   author={Buckler, C.},
   author={Buckler-White, A.},
   author={Plishka, R.},
   author={Dimitrov, D.},
   author={Martin, M.},
   title={Short- and Long-Term Clinical Outcomes in Rhesus Monkeys Inoculated with a Highly Pathogenic Chimeric Simian/Human Immunodeficiency Virus},
   journal={J. Virology},
   volume={74},
   number={15},
   date={2000},
   pages={6935--6945},
}

\bib{Frenchie}{article}{
	author = {Fan, X.},
	author = {Brauner, C-M},
	author = {Wittkop, L.},
	title = {Mathematical analysis of a HIV model with quadratic logistic growth term},
	journal = {Discrete and Continuous Dynamical Systems B},
	volume = {17},
	date = {2012},
	pages = {2359--2385},
}

\bib{Fauci}{article}{
   author={Fauci, A.S.},
   author={Pantaleo, G.},
   author={Stanley, S.},
   author={et. al.},
   title={Immunopathogenic mechanisms of HIV infection.},
   journal={Ann. Intern. Med.},
   volume={124},
   date={1996},
   pages={654--663} ,
}

\bib{Ganusov}{article}{
   author={Noecker, C.},
   author={Schaefer, K.},
   author={Zaccheo, K.},
   author={Yang, Y.},
   author={Day, J.},
   author={Ganusov, V.},
   title={Simple Mathematical Models Do Not Accurately Predict Early SIV Dynamics},
   journal={Viruses},
   volume={7},
   date={2015},
   pages={1189--1217} ,
}

\bib{Green}{article}{
   author={Greenough, T.C.},
   author={Brettler, D.B.},
   author={Kirchhoff, F.},
   author={et. al.},
   title={Long-term non-progressive infection with human immunodeficiency virus type in A Hemophilia},
   journal={Cohort. J. Infect. Dis.},
   volume={180},
   date={1999},
   pages={1790--1802} ,
}

\bib{Hadji}{article}{
   author={Hadjiandreou, M.},
   author={Conejeros, R.},
   author={Vassiliadis, V.},
   title={Towards a Long-Term Model construction for the dynamic simulation of HIV infection.},
   journal={Math. Bio and Eng.},
   volume={4},
   date={2007},
   number={3},
   pages={489--504} ,
}

\bib{Hern}{article}{
   author={Hernandez-Vargas, E.},
   author={Middleton, R.},
   title={Modeling the three stages in HIV infection},
   journal={J. Theoret. Biol.},
   volume={320},
   date={2013},
   pages={33--40} ,
   }
   
\bib{Igarashi}{article}{
   author={Igarashi, T.},
   author={Endo, Y.},
   author={Nishimura, Y.},
   author={Buckler, C.},
   author={Sadjadpour, R.},
   author={Donau, O.},
   author={Dumaurier, M.},
   author={Plishka, R.},
   author={Buckler-White, A.},
   author={Martin, M.},
   title={Early Control of Highly Pathogenic Simian Immunodeficiency Virus/Human Immunodeficiency Virus Chimeric Virus Infections in Rhesus Monkeys Usually Results in Long-Lasting Asymptomatic Clinical Outcomes},
   journal={J. Virology},
   volume={77},
   number={20},
   date={2003},
   pages={10829--10840} ,
   }

\bib{Igarashi2}{article}{
   author={Igarashi, T.},
   author={Endo, Y.},
   author={Englund, G.},
   author={Sadjadpour, R.},
   author={Matano, T.},
   author={Buckler, C.},
   author={Buckler-White, A.},
   author={Plishka, R.},
   author={Theodore, T.},
   author={Shibata, R.},
   author={Martin, M.},
   title={Emergence of a highly pathogenic simian/human immunodeficiency virus in a rhesus macaque treated with anti-CD8 mAb during a primary infection with a nonpathogenic virus},
   journal={PNAS},
   volume={96},
   number={24},
   date={1999},
   pages={14049--14054},
   }

\bib{Jones}{article}{
   author={Jones, E.},
   author={Roemer, P.},
   author={Pankavich, S.},
   author={Raghupathi, M.},
   title={Analysis and Simulation of the Three-Component Model of HIV Dynamics},
   journal={SIURO},
   volume={2},
   date={2014},
   number={2},
   pages={308--331} ,
   }

\bib{KirschAlan}{article}{
   author={Perelson, A.},
   author={Kirschner, D.},
   author={De Boer, R.},
   title={Dynamics of HIV Infection of CD4+ T-cells},
   journal={Mathematical Biosciences},
   volume={114},
   date={1993},
   pages={81--125} ,
   }

\bib{Korob}{article}{
	author={Korobeinikov, A.},
	title = {Global Properties of Basic Virus Dynamics Models},
	journal = {Bulletin of Mathematical Biology},
	volume =  {66},
	date = {2004}
	pages = {879--883},
	}
	
\bib{Liu}{article}{
	author={Liu, J.},
	author={Keele, B.},
	author={Li, H.},
	author={et al.},
	title = {Low-Dose Mucosal Simian Immunodeficiency Virus Infection Restricts Early Replication Kinetics and Transmitted Virus Variants in Rhesus Monkeys},
	journal = {J. Virology},
	volume =  {84},
	number={19},
	date = {2010},
	pages = {10406--10412},
	}

\bib{Mackall}{article}{
	author = {Mackall, C.},
	author = {Hakim, F.},
	author =  {Gress, R.},
	title = {Restoration of T-cell homeostasis after T-cell depletion},
	journal = {Semin. Immunol.},
	volume = {9},
	date ={1997},
	pages = {339--346},
	}

\bib{Miao}{article}{
	author = {Miao, H.},
	author = {Xia, X.},
	author = {Perelson, A.},
	author = {Wu, H.},
	title = {On {I}dentifiability of {N}onlinear {ODE} {M}odels and {A}pplications in {V}iral {D}ynamics.},
	journal = {SIAM Review},
	volume = {53},
	date = {2011},
	pages = {3--39},
}

\bib{Moreno}{article}{
   author={Moreno-Fernandez, M.},
   author={Presiccea, P.},     
   author={Chougneta, C.},
   title={Homeostasis and Function of Regulatory T-cells in HIV/SIV Infection},
   journal={J. Virol.},
   volume={86},
   date={2012},
   number={19},
   pages={10262--10269},
   }

\bib{NM}{book}{  
	author={Nowak, M.A.},
	author={May, R.M.},
	title={Virus dynamics}
	publisher={Oxford University press, New York},
	date={2000},
 }

\bib{Pankavich}{article}{
	author={Pankavich, S.},
	title={The Effects of Latent Infection on the Dynamics of HIV},
	journal = {Differential Equations and Dynamical Systems}
	volume={24},
	date={2016},
	number={3},
 	pages={281--303} ,
}

\bib{PankLoudon}{article}{
	author={Pankavich, S.},
	author={Loudon, T.},
	title={Mathematical Analysis and Dynamic Active Subspaces for a Long Term Model of HIV},
	journal={Math. Biosci. Eng.},
	volume = {14},
	number={3},
	date = {2017},
	pages = {709--733}
}

\bib{PP}{article}{
	author={Pankavich, S.},
	author={Parkinson, C.},
	title={Mathematical Analysis of an in-host Model of Viral Dynamics with Spatial Heterogeneity},
	journal = {Discrete and Continuous Dynamical Systems B},
	volume = {21},
	number={4},
	date = {2016},
	pages = {1237--1257}
}

   \bib{Alan}{article}{
   author={Perelson, A.S.},
   author={Nelson, P.},
   title={Mathematical analysis of HIV-1 dynamics in vivo.},
   journal={SIAM Rev.},
   volume={41},
   date={1999},
   number={1},
   pages={3--44} ,
   }

\bib{AlanRong}{article}{
   author={Rong, L.},
   author={Perelson, A.S.},
   title={Modeling HIV persistence, the latent reservoir, and viral blips.},
   journal={J. Theoret. Biol.},
   volume={260},
   date={2009},
   number={2},
   pages={308--331} ,
   doi = {10.1016/j.jtbi.2009.06.011},
}

\bib{Shutt}{article}{
	author={Pankavich, S.},
	author={Shutt, D.},
	title={An in-host model of HIV incorporating Latent Infection and Viral Mutation},
	journal = {Dynamical Systems, Differential Equations, and Applications, AIMS Proceedings},
	date={2015},
	pages={913--922} ,

}

\bib{SmithBanks}{article}{
   author={Wentworth, M.T.},
   author={Smith, R.C.},
   author={Banks, H.T.},
   title={Parameter selection and verification techniques based on global sensitivity analysis illustrated for an HIV model},
   journal={SIAM Journal on Uncertainty Quantification},
   volume = {4},
   date ={2016},
   pages = {266--297},
}

\bib{Tanchot}{article}{
	author = {Tanchot, C.},
	author = {Rosado, M.},
	author =  {Agenes, F.},
	author =  {Freitas, A.},
	author =  {Rocha, B.},
	title = {Lymphocyte homeostasis.},
	journal = {Semin. Immunol.},
	volume = {9},
	date ={1997},
	pages = {331--337},
}

\bib{Wu}{article}{
	author = {Wu, H.},
	author = {Zhu, H.},
	author = {Miao, H.},
	author = {Perelson, A.},
	title = {Parameter {I}dentifiability and {E}stimation of {HIV/AIDS} {D}ynamic {M}odels.},
	journal = {Bulletin of Mathematical Biology},
	volume = {70},
	date = {2008},
	pages = {785--799},
}

\bib{XiaMoog}{article}{
	author = {Xia, X.},
	author = {Moog, C.},
	title = {Identifiability of nonlinear systems with application to {HIV/AIDS} models.},
	journal = {IEEE Transactions on Automatic Control},
	volume = {48},
	date = {2003},
	pages = {330--336},
}

\end{biblist}
\end{bibdiv}

%
%
%


\appendix
\appendixpageoff

\section{Dimensionless System}
\label{appA}

We begin by placing the model in non-dimensional form.
First, define the dimensionless populations by
$$T^* = \frac{T}{T_c}, \qquad I^* = \frac{I}{I_c}, \qquad V^* = \frac{V}{V_c}$$
where $T_c, I_c$, and $V_c$ are constants to be determined.
Additionally, we scale the time dimension by letting $t^* = \frac{t}{t_c}$. 
Substituting these expressions within \eqref{Acute}, we find
\begin{equation} 
\label{rescale}
\left.
\begin{aligned}
\frac{dT^*}{dt^*} \quad &= \quad \frac{\lambda t_c}{T_c} - \frac{\rho t_c V_c}{C + V_c V^*} T^*V^*- kt_cV_c T^*V^* - d_T t_c T^* \\
\frac{dI^*}{dt^*} \quad &= \quad \frac{kt_c T_c V_c}{I_c} T^*V^* - d_I t_c I^* \\ 
\frac{dV^*}{dt^*} \quad &= \quad \frac{p t_c I_c}{V_c} I^*   - d_V t_c V^*
\end{aligned}
\right \}
\end{equation}
Since a time scale must be selected, we choose $t_c = \frac{1}{d_T}$, though the choices of $\frac{1}{d_I}$ or $\frac{1}{d_V}$ are also reasonable and do not greatly alter the analysis. Next, we choose the population scaling for the dependent variables so as to eliminate parameters in each equation. In particular, we choose
$$T_c =\frac{d_I d_V}{pk}, \qquad I_c =\frac{d_T d_V}{pk}, \qquad V_c = \frac{d_T}{k}.$$
Removing the starred notation from the population variables for convenience, this finally yields the dimensionless system
\begin{equation*} 
\left.
\begin{aligned}
\frac{dT}{dt} \quad &= \quad R_0 + \frac{R_m}{1 + \beta V} TV - TV - T\\
\frac{dI}{dt} \quad &= \quad \alpha_1 (TV - I) \\ 
\frac{dV}{dt} \quad &= \quad \alpha_2 (I - V)
\end{aligned}
\right \}
\end{equation*}
where $R_0, R_m, \alpha_1, \alpha_2, \beta$ are given by \eqref{R0Rm}.
%
%
\section{Equilibria and Restrictions on Parameter Values}
\label{appB}

Within this appendix, we consolidate results concerning equilibrium states of the model. 
First, we derive all equilibrium solutions and prove Theorem \ref{TSS}.
\begin{proof}(Theorem 3.1)
Beginning with the steady system \eqref{SteadyNd}, we use the third equation to find $I = V$.  Inserting this within the second equation yields
$$I(T-1) = 0.$$
Thus, either $I = 0$ or $T = 1$.  In the former case, $V = 0$, the first equation is exactly $T = R_0$, and the steady state $E_u$ is determined.  In the latter case, the first equation yields a quadratic in $V$, namely 
$$\beta V^2 + \left [ \beta(1-R_0) + 1 - R_m \right ]V + 1- R_0 = 0$$
upon multiplying by $1+\beta V$ throughout.  
This gives rise to two different solutions and with $T = 1$ and $I = V$, these constitute $E_i^+$ and $E_i^-$.
\end{proof}

Next, we derive the restrictions on parameters obtained by enforcing the condition that the computed steady states be real-valued and positive.
First, we note that the condition $b^{2} - 4 a c \geq 0$ imposes the constraint that all equilibria are real-valued, and upon simplification, is equivalent to 
$$D^2 - 2D(1 + R_m) + (1 - R_m)^2 \geq 0$$
where $D = \beta(1-R_0)$.
Simplifying further, this condition becomes
$$ [D - (1 + \sqrt{R_m})^{2}] [D - (1 - \sqrt{R_m})^{2}] \geq 0.$$
Clearly, $(1 + \sqrt{R_m})^2 > (1 - \sqrt{R_m})^2$.  Thus, $b^{2} - 4 a c \geq 0$ if and only if
$$D - (1 -\sqrt{R_m})^{2} \leq 0 \quad \mathrm{or} \quad  D - (1 + \sqrt{R_m})^{2} \geq 0$$
and this simplifies to the statement
$$R_0 \geq 1 - \frac{1}{\beta}(1 - \sqrt{R_m})^{2} \qquad \mathrm{or} \qquad R_0 \leq 1 - \frac{1}{\beta}(1 + \sqrt{R_m})^{2}.$$
Because $\beta < 1$, the region in $(R_m,R_0)$ space described by $R_0 \leq 1 - \frac{1}{\beta}(1 + \sqrt{R_m})^{2}$ contains only negative values of $R_0$, which is not possible for positive original parameter values.  Thus, we only focus on the region in which $R_0 \geq 1 - \frac{1}{\beta}(1 - \sqrt{R_m})^{2}$ to ensure all equilibria possess real values.  

In requiring all infected equilibrium populations to be positive, other restrictions are needed.  We consider two distinct cases.
First, assume $b < 0$.  This inequality is equivalent to $R_0 > 1 + \frac{1}{\beta}(1 - R_m)$.  Hence, for $E_i^{+}$, we find $-b + \sqrt{b^{2} - 4 a c} \geq 0$ so all populations of this infected steady state are positive.  Contrastingly, for $E_i^{-}$, we must have $\sqrt{b^{2} - 4 a c} < \vert b \vert$ to guarantee positive equilibria.   Thus for $E_i^{-}$,  an added requirement is necessary, namely $ac > 0$, which is exactly the condition $R_0 < 1$.
	
Next, assume $b \geq 0$.  Then, this is equivalent to $R_0 \leq 1 + \frac{1}{\beta}(1 - R_m)$. For $E_i^{+}$ we must further consider two subcases, namely $ac \geq 0$ and $ac < 0$.  
If $ac \geq 0$, then $-b + \sqrt{b^{2} - 4 a c} \leq 0$ and $E_i^{+}$ possesses either negative or vanishing infected T-cell and virus equilibrium populations.  Instead, if $ac < 0$, which is equivalent to $R_0 > 1$, then  $-b + \sqrt{b^{2} - 4 a c} > 0$ and $E_i^{+}$ has only positive equilibria.
Contrastingly, considering $E_i^{-}$, both the infected T-cell and virus population are nonpositive, thus the steady state does not exist in this region.
See Figure \ref{exist} for a graphical summary of these restrictions.  

To end Appendix \ref{appB}, we prove Lemma \ref{d0pos}.
\begin{proof}(Lemma \ref{d0pos})
First, note that if $R_0 > 1$, then the inequality is trivially satisfied since $\ol{V}_+ > 0$.  
Hence, we need only consider the case \eqref{Eplus1} within the region of existence.
From \eqref{Vbar}, we see that $\ol{V}_+$ satisfies
$$a \ol{V}_+^2 + b \ol{V}_+ + c = 0$$
where $a = \beta$, $b =  \beta(1-R_0) + 1 - R_m$, and $c = 1 - R_0$.
In this notation, \eqref{Eplus1} is exactly $b < 0$.
Of course, in the region of existence of $E_i^+$, we also have $\ol{V}_+ > 0$.
Additionally, notice that the inequailty $\beta \ol{V}_+^2 + R_0 - 1 > 0$ is equivalent to $a \ol{V}_+^2 - c > 0$, or using the above quadratic, $-b\ol{V}_+ - 2c > 0$.
Since $b < 0$, this is further equivalent to the inequality
$$\ol{V}_+ > \frac{2c}{-b},$$
and we will focus on proving this condition.

Now, if $c \leq 0$, then 
$$\ol{V}_+ > 0 \geq \frac{2c}{-b}$$ 
and the condition is satisfied.
Alternatively, if $c >0$ then writing the root of interest, namely
$\overline{V}_+ = \frac{-b + \sqrt{b^2 - 4ac}}{2a},$
and multiplying by the conjugate we find
$$\overline{V} = \frac{2c}{-b - \sqrt{b^2 - 4ac}} > \frac{2c}{-b}$$
because $b < 0, c > 0$, and $b^2- 4ac > 0$.
Hence, the condition is satisfied in both cases, and the proof is complete.
\end{proof}

\section{Proof of Stability Theorem}
\label{appC}

Finally, we prove Theorem \ref{T1}.
\begin{proof}(Theorem \ref{T1})
We first define $x = (T,I,V)^T$, label the right side of the system \eqref{AcuteNd} by $f$, so that
$$ f(x) = \left(
	\def\arraystretch{2}\begin{array}{c}
	R_0 + \left (\frac{R_m}{1 + \beta V} -1\right )TV - T \\ 
	 \alpha_1 \left ( T V - I \right )\\
	\alpha_2 (I- V) 
\end{array} \right)$$
and compute the gradient of this function
\begin{equation*}
\begin{large}
\centering
	\nabla f(x) = \left(
	\def\arraystretch{2}\begin{array}{ccc}
		-\frac{1}{1 + \beta V} \left [ \beta V^2 + (1 + \beta - R_m) V + 1\right ] & 0 & \left [ \frac{R_m}{(1 + \beta V)^2} - 1 \right ] T \\
		\alpha_1 V & -\alpha_1 &  \alpha_1 T\\
		0 & \alpha_2 & -\alpha_2 
\end{array} \right).
\end{large}
\end{equation*}

To prove the stability result concerning the uninfected steady state, we evaluate the Jacobian at $E_u$ to find
\begin{equation*}
\begin{large}
\centering
	\nabla f(E_u) = \left(
	\def\arraystretch{2}\begin{array}{ccc}
		-1 & 0 & (R_m -1) R_0 \\
		0 & -\alpha_1 &  \alpha_1 R_0\\
		0 & \alpha_2 & -\alpha_2 
\end{array} \right).
\end{large}
\end{equation*}
Clearly, the eigenvalues are $\eta_1 = -1$ and $\eta_2$, $\eta_3$ given by the two roots of the quadratic
$$\eta^2 + (\alpha_1 + \alpha_2) \eta + \alpha_1 \alpha_2 (1 - R_0) = 0.$$
By the Routh-Hurwitz criterion, the latter eigenvalues both have negative real part if and only if $R_0 < 1$.  Hence, we conclude by the Hartman-Grobman theorem that $E_u$ is locally asymptotically stable if $R_0 < 1$.  Contrastingly, if $R_0 > 1$ then $E_u$ is unstable.

Next, we establish the stability properties of the infected steady states.  Again, we compute the Jacobian, but evaluate it only using the T-cell steady state value, $T = 1$, as the value of the viral load differs for $E_i^{+}$ and $E_i^{-}$.
With this, we find
\begin{equation*}
\begin{large}
\centering
	\nabla f(E^{\pm}_i) = \left(
	\def\arraystretch{2}\begin{array}{ccc}
		-R_0 & 0 & \frac{R_m}{(1+ \beta \overline{V}_\pm)^2} - 1 \\
		\alpha_1 \overline{V}_\pm & -\alpha_1 &  \alpha_1\\
		0 & \alpha_2 & -\alpha_2 
\end{array} \right)
\end{large}
\end{equation*}
where
$\overline{V}_\pm$ is given by Theorem 3.1 for either steady state and, in both cases, satisfies the quadratic
\begin{equation}
\label{Veqn}
\beta \overline{V}_\pm^2 + \left [ \beta(1-R_0) + 1 - R_m \right ]\overline{V}_\pm + 1 - R_0= 0.
\end{equation}
The associated characteristic polynomial is $\eta^{3} + d_2 \eta^{2} + d_1 \eta + d_0 = 0$ where
\begin{align*}
d_0 &= \frac{\alpha_1 \alpha_2 \overline{V}_\pm}{(1 + \beta \overline{V}_\pm)^2} \left [-R_m + (1+ \beta \overline{V}_\pm)^2 \right ],\\
d_1 &= (\alpha_1 + \alpha_2)R_0, \\
d_2 &= \alpha_1 + \alpha_2 + R_0.
\end{align*}
The Routh-Hurwitz conditions are clearly met for $d_1$ and $d_2$ since all parameter values are positive.  We only concern ourselves with the sign of $d_0$ and showing the other remaining condition, namely $D_2 = d_1 d_2 - d_0 > 0$.  
After some algebra and use of \eqref{Veqn}, we can rewrite $d_0$ as
$$d_0 = \frac{\alpha_1 \alpha_2}{1 + \beta \overline{V}_\pm} \left [\beta \overline{V}_\pm^2 + R_0 - 1 \right ]$$
which means that $d_0 > 0$ if and only if the condition $$\frac{\alpha_1 \alpha_2}{1 + \beta \overline{V}_\pm} \left [\beta \overline{V}_\pm^2 + R_0 - 1 \right ] > 0$$
is satisfied.  First considering the $E_i^+$ steady state, we find by Lemma \ref{d0pos} that the above inequality holds at every point in the region of existence.
To study the condition $D_2 > 0$, we merely note that this is equivalent to
$$ \frac{\alpha_1 \alpha_2}{1 + \beta\overline{V}_+} \left [ \beta \overline{V}_+^2 + R_0 - 1 \right ] <  (\alpha_1 + \alpha_2)R_0 (\alpha_1 + \alpha_2 + R_0)$$ and results in \eqref{stabilitycondition}.
Thus, the condition for stability of $E_i^+$ is complete.

Finally, to establish the instability of $E_i^-$ in every parameter regime that guarantees postivity of this equilibrium, we will show that the conditions which ensure $\overline{V}_- > 0$ in this case violate the stability criteria.  It was previously shown that the conditions $b < 0$, $c > 0$, and $b^2 - 4ac \geq 0$ are needed in order to arrive at a positive root $\overline{V}_-$ for $E_i^-$ which satisfies \eqref{Veqn}, in which case
$$a = \beta, \qquad b = \beta(1-R_0) + 1- R_m, \qquad c = 1- R_0.$$
Hence, writing the root of interest, namely
$$\overline{V}_- = \frac{-b - \sqrt{b^2 - 4ac}}{2a},$$
and multiplying by the conjugate of the numerator, we find
$$\overline{V}_- = \frac{2c}{-b + \sqrt{b^2 - 4ac}} \leq \frac{2c}{-b}$$
as $b < 0, c > 0$, and $b^2- 4ac \geq 0$.
Rewriting this inequality as $b\overline{V}_- + c \geq - c$ and substituting parameters for $b$ and $c$ yields
$$ [\beta(1-R_0) + 1- R_m] \overline{V}_- + 1- R_0 \geq - (1-R_0).$$
Using \eqref{Veqn}, this inequality is exactly
$$ - \beta\overline{V}_-^2 \geq - (1- R_0).$$
Rearranging finally yields the condition
$$ \beta \overline{V}_-^2 + R_0 - 1 \leq 0,$$
which, considering the postivity of $\overline{V}_-$ and parameters, violates the stability criterion
$$\frac{\alpha_1 \alpha_2}{1 + \beta \overline{V}_\pm} \left [\beta \overline{V}_\pm^2 + R_0 - 1 \right ] > 0,$$
thereby implying instability of $E_i^-$.
\end{proof}


\end{document}